\def\confversion{0}
\def\ifconf{\ifnum\confversion=1}
\def\ifnotconf{\ifnum\confversion=0}
\def\showauthornotes{1}
\def\showkeys{0}
\def\showdraftbox{1}
\definecolor{darkred}{rgb}{0.5,0,0}
\definecolor{darkgreen}{rgb}{0,0.35,0}
\definecolor{darkblue}{rgb}{0,0,0.55}
\DeclareSymbolFont{sfitgreek}{LGR}{cmss}{m}{it}
\DeclareMathSymbol{\sfpi}{\mathord}{sfitgreek}{`p}
\DeclareMathAlphabet{\mathpazocal}{OMS}{zplm}{m}{n}
\DeclareRobustCommand*{\mathcal}[1]{\mathpazocal{#1}}
\newcommand{\Authornote}[3]{{\sf\small\color{#3}{[#1: #2]}}}
\newcommand{\Authorcomment}[2]{{\sf \small\color{gray}{[#1: #2]}}}
\newcommand{\Authorfnote}[2]{\footnote{\color{red}{#1: #2}}}
\newcommand{\Authornote}[3]{}
\newcommand{\Authorcomment}[2]{}
\newcommand{\Authorfnote}[2]{}
\declaretheorem[numberwithin=section]{theorem}
\declaretheorem[sibling=theorem]{lemma}
\declaretheorem[sibling=theorem]{claim}
\declaretheorem[sibling=theorem]{proposition}
\declaretheorem[sibling=theorem]{fact}
\declaretheorem[sibling=theorem]{corollary}
\theoremstyle{definition}
\declaretheorem[sibling=theorem]{definition}
\declaretheorem[sibling=theorem]{remark}
\declaretheorem[sibling=theorem]{observation}
\declaretheorem[sibling=theorem]{example}
\newtheorem{algo}[theorem]{Algorithm}
\def\FullBox{\hbox{\vrule width 6pt height 6pt depth 0pt}}
\def\qed{\ifmmode\qquad\FullBox\else{\unskip\nobreak\hfil
\penalty50\hskip1em\null\nobreak\hfil\FullBox
\parfillskip=0pt\finalhyphendemerits=0\endgraf}\fi}
\def\qedsketch{\ifmmode\Box\else{\unskip\nobreak\hfil
\penalty50\hskip1em\null\nobreak\hfil$\Box$
\parfillskip=0pt\finalhyphendemerits=0\endgraf}\fi}
\def\matr#1{\mathsf{#1}}
\newcommand{\HS}{\mathrm{HS}}
\newcommand{\id}{\mathrm{I}}
\newcommand{\thresh}{\mathrm{Th}_{\scriptscriptstyle{A,t}}}
\newcommand{\threshold}[1]{\mathrm{Th}_{\scriptscriptstyle{A,#1}}}
\newcommand{\inv}[1]{#1\raisebox{1.15ex}{$\scriptscriptstyle-1$}}
\newcommand{\irrep}[1]{\widehat{G}}
\newcommand{\avgmatr}{\mathsf{N}}
\newcommand{\tr}{\mathrm{tr}}
\DeclareMathOperator{\Cay}{Cay}
\newcommand{\ones}{\mathds{1}}
\newcommand{\triv}{\mathrm{triv}}
\newcommand{\Irrep}{\mathrm{Irrep}}
\newcommand{\vrho}{{\vec{\rho}}}
\newcommand{\ep}{\varepsilon}
\newcommand{\rw}{{\sf RW}}
\newcommand{\sfM}{{\sf M}}
\newcommand{\sfC}{{\sf C}}
\newcommand{\inds}{\mathcal{S}}
\newcommand{\rwalk}[1]{{\textup{RW}}_{#1}}
\newcommand{\aprxone}{    \sum_{I\in \mathcal{I}_k} \lambda^{\sum_{j\in I}\Delta_j(\inds)}      }
\newcommand{\stackalign}[1]{
	\vcenter{
		\Let@ \restore@math@cr \default@tag
		\baselineskip\fontdimen10 \scriptfont\tw@
		\advance\baselineskip\fontdimen12 \scriptfont\tw@
		\lineskip\thr@@\fontdimen8 \scriptfont\thr@@
		\lineskiplimit\lineskip
		\ialign{\hfil$\m@th\scriptstyle##$&$\m@th\scriptstyle{}##$\crcr
			#1\crcr
		}
	}
}
\let\latexcirc=\circ
\newcommand{\ccirc}{\mathbin{\mathchoice
  {\xcirc\scriptstyle}
  {\xcirc\scriptstyle}
  {\xcirc\scriptscriptstyle}
  {\xcirc\scriptscriptstyle}
}}
\newcommand{\xcirc}[1]{\vcenter{\hbox{$#1\latexcirc$}}}\let\circ\ccirc
\def\to{\rightarrow}
\def\epsilon{\varepsilon}
\def\phi{\varphi}
\def\implies{\Rightarrow}
\newcommand{\ie}{i.e.,\xspace}
\newcommand{\eg}{e.g.,\xspace}
\newcommand{\mper}{\,.}
\newcommand{\E}{{\mathbb E}}
\newcommand{\C}{{\mathbb C}}
\newcommand{\Z}{{\mathbb Z}}
\newcommand{\cE}{\mathcal{E}}
\newcommand{\indicator}[1]{\mathds{1}_{\{#1\}}}
\DeclarePairedDelimiter\parens{\lparen}{\rparen}
\DeclarePairedDelimiter\abs{\lvert}{\rvert}
\DeclarePairedDelimiter\norm{\lVert}{\rVert}
\DeclarePairedDelimiter\floor{\lfloor}{\rfloor}
\DeclarePairedDelimiter\braces{\lbrace}{\rbrace}
\DeclarePairedDelimiter\brackets{\lbrack}{\rbrack}
\DeclarePairedDelimiter\angles{\langle}{\rangle}
\DeclarePairedDelimiterXPP\lnorm[1]{}\lVert\rVert{_2}{#1}
\DeclareMathDelimiter{\given}
      {\mathbin}{symbols}{"6A}{largesymbols}{"0C}
\newcommand{\prob}{\mathsf{Pr}}
\newcommand{\Esymb}{\mathbb{E}}
\newcommand{\Psymb}{\mathrm{Pr}}
\DeclarePairedDelimiterXPP{\Prob}[1]
 {\prob}{\lparen}{\rparen}{}
 {\renewcommand{\given}{\;\delimsize\vert\nonscript\;\mathopen{}}#1}
\def\Pr#1{%
    \ProbabilityRender{\Psymb}{#1}%
}
\def\Ex#1{%
    \ProbabilityRender{\Esymb}{#1}%
}
\def\condPE#1#2{%
	\@ifnextchar\bgroup
	{\ConditionalProbabilityRender{\widetilde{\Esymb}}{#1}{#2}}
	{\ProbabilityRender{\widetilde{\Esymb}}{#1 \given #2}}
}
\def\ConditionalProbabilityRender#1#2#3#4{
	\renderwithdist{#1}{#2}{#3 \given #4}	
}
\def\ProbabilityRender#1#2{%fancy probability command
  \@ifnextchar\bgroup%
  {\renderwithdist{#1}{#2}}
   {\singlervrender{#1}{#2}}
}
\def\singlervrender#1#2{%
   {\mathchoice
       {{#1}\brackets*{#2}}
       {{#1}[ #2 ]}
       {{#1}[ #2 ]}
       {{#1}[ #2 ]}
   }
}
\def\renderwithdist#1#2#3{%
   \@ifnextchar\bgroup
   {\superfancyrender{#1}{#2}{#3}}
   {\mathchoice
      {\underset{#2}{#1}\brackets*{#3}}
      {{#1}_{#2}[ #3 ]}
      {{#1}_{#2}[ #3 ]}
      {{#1}_{#2}[ #3 ]}
     }
}
\def\superfancyrender#1#2#3#4#5{
   \ensuremath{\mathchoice
      {\underset{#1}{{#1}}\left#4 #3 \right#5}
      {{#1}_{#2}#4 #3 #5}
      {{#1}_{#2}#4 #3 #5}
      {{#1}_{#2}#4 #3 #5}
   }
}
 \newcommand\SetSymbol[1][]{%
     \nonscript\:#1\vert
     \allowbreak
     \nonscript\:
     \mathopen{}}
  \DeclarePairedDelimiterX\Set[1]\{\}{%
     \renewcommand\given{\SetSymbol[\delimsize]}
     #1
}
\newcommand{\ip}[2]{\angles*{#1 , #2}}
 \newcommand{\set}[1]{\braces*{#1}}
\newcommand{\opnorm}[1]{\norm*{#1}_{\mathrm{op}}}
\newcommand{\op}{\mathrm{op}}
\begin{document}

%\title{Fooling functions on non-abelian groups via expanders walks}

\title{Pseudorandomness of Expander Walks via \\
       Fourier Analysis on Groups}

%\title{Expander Walks via a Non-Commutative Fourier Analysis}

 \author{Fernando Granha Jeronimo\thanks{University of Illinois Urbana-Champaign, \;\tt{granha@illinois.edu}.}
                       \and
        Tushant Mittal\thanks{Stanford University, \; {\tt{tushant@stanford.edu}}. TM is a postdoctoral fellow supported by the NSF grants CCF-2143246 and CCF-2133154.} \and
        Sourya Roy\thanks{The University of Iowa, \; {\tt{sourya-roy@uiowa.edu}}. SR was partially supported by the Old Gold Summer Fellowship from The University of Iowa.}     }

\date{\today}

\date{}

\maketitle
%\vspace{-1.35cm}
%\draftbox
%\vspace{-0.5cm}
%% \thispagestyle{empty}

\thispagestyle{empty}
A long line of work has studied the pseudorandomness properties of walks on expander graphs. A central goal is to measure how closely the distribution over $n$-length walks on an expander approximates the uniform distribution of $n$-independent elements. One approach to do so is to label the vertices of an expander with elements from an alphabet $\Sigma$, and study closeness of the mean of functions over $\Sigma^n$, under these two distributions.  We say expander walks $\ep$-fool a function if the expander walk mean is $\ep$-close to the true mean. There has been a sequence of works studying this question for various functions, such as the XOR function, the AND function, etc. We show that:
\begin{itemize}
	\item The class of symmetric functions is $O(|\Sigma|\lambda)$-fooled by expander walks over any generic $\lambda$-expander, and any alphabet $\Sigma$ . This generalizes the result of Cohen, Peri, Ta-Shma [STOC'21] which analyzes it for $|\Sigma| =2$, and exponentially improves the previous bound of $O(|\Sigma|^{O(|\Sigma|)} \lambda)$, by Golowich and Vadhan~[CCC'22]. Moreover, if the expander is a Cayley graph over $\Z_{|\Sigma|}$, we get a further improved bound of $O(\sqrt{|\Sigma|}\lambda)$.
\end{itemize} 

\noindent Morever, when $\Sigma$ is a finite group $G$, we show the following for functions over $G^n$:

\begin{itemize}
	\item The class of symmetric class functions is $O\parens[\Big]{\frac{\sqrt{|G|}}{D}\lambda}$-fooled by expander walks over \enquote{structured} $\lambda$-expanders, if $G$ is $D$-quasirandom. 
%	\item In particular, for Abelian groups, it sharpens the above result to $O\parens[\big]{\sqrt{|G|}\cdot \lambda}$ for such structured graphs. \todo{Abelian confusing}
	\item We show a lower bound of $\Omega(\lambda)$ for symmetric functions for any finite group $G$ (even for \enquote{structured} $\lambda$-expanders). 
	\item We study the Fourier spectrum of a class of non-symmetric functions arising from \emph{word maps}, and show that they are exponentially fooled by expander walks.
\end{itemize}

 Our proof employs Fourier analysis over general groups, which contrasts with earlier works that have studied either the case of $\Z_2$ or $\Z$. This enables us to get quantitatively better bounds even for unstructured sets.

\newpage

%% \ifnotconf
%% \pagenumbering{roman}
\tableofcontents
\thispagestyle{empty}
%\newpage\todototoc\listoftodos
 \clearpage
%% \fi
%\endgroup
 
 %\newpage

\pagenumbering{arabic}
\setcounter{page}{1}

\section{Introduction}

Expander graphs are fundamental pseudorandom objects with a vast range of applications in computer
science and mathematics \cite{HooryLW06,Vadhan12,RL10}. These graphs combine two opposing properties of being
well-connected yet sparse. In many ways, they exhibit behavior that is surprisingly close to truly random,
thereby being used to replace randomness and yield several explicit constructions. For instance, explicit codes
 approaching the random guarantees of the Gilbert--Varshamov \cite{G52,V57} bound \cite{Ta-Shma17} or
the generalized Singleton \cite{ST20,Rot24} bound can be constructed using expanders \cite{JMST25}. Moreover, expanders
can be used to construct a variety of pseudorandom generators \cite{INW94,HH24}. 

Walks on expander graphs not only mix fast, but they are an important derandomization tool for the
Chernoff bound \cite{G93,GLSS18}, the hitting set property \cite{AKS87,TZ24}, etc. These tasks can be phrased
in a more general and unified way as how well expander walks fool a Boolean function $f \colon \set{0,1}^n \to \set{0,1}$.
In this setting, the vertices, $V_X$, of an expander graph $X$, are labelled with bits $\set{0,1}$ and instead of evaluating 
$f$ under the uniform distribution on $n$ bits, we evaluate it under the distribution on $\set{0,1}^n$
induced by taking a random walk on $X$ of length $n$. To quantify the error incurred in this process of replacing true randomness by expander walks, it is convenient to define $\mathcal{E}_X(f)$ as:
\begin{align*}
  \mathcal{E}_X(f) ~=~ \left| \E_{s \sim  V_X^n}\left[ f(s) \right] - \E_{s \sim \textup{RW}_n}\left[ f(s) \right] \right|\mper
\end{align*}
In this language, by choosing $f$ to be the AND function on $n$ bits, we recover the expander hitting set property
application. The choice of $f$ as a (suitable) threshold function on $n$ bits leads to the expander Chernoff bound. The choice of $f$ as the XOR function on $n$ bits (and a carefully constructed $X$) leads to the breakthrough
code construction of Ta-Shma~\cite{Ta-Shma17}.

Using this unified perspective, Cohen, Peri, and Ta-Shma \cite{CPT21} developed a systematic framework to analyze expander
random walks and obtain bounds on $\mathcal{E}_X(f)$. Their framework is based on Fourier analysis and exploits the fact
that any Boolean function $f \colon \set{0,1}^n \to \set{0,1}$ can be expressed in the Fourier basis as a linear combination
of characters, which are XOR functions in this case. They obtained bounds on $\mathcal{E}_X(f)$ in terms of the spectral
expansion $\lambda$ of the (normalized) adjacency matrix of $X$.

A series of works \cite{GV22,CMPPT22,G23} have since extended the \cite{CPT21} framework to functions of the form $f \colon \Sigma^n \to \mathbb{C}$, where $\Sigma$ is a finite alphabet. These works study \textit{symmetric functions}---functions that are invariant under any permutation of the input coordinates---and functions computed by restricted circuit classes such as $\mathsf{AC}^0$. Golowich and Vadhan~\cite{GV22} get the following bound for symmetric functions:  
\[
\abs{\cE_X(f)} ~\leq~ \parens{|\Sigma|^{O(|\Sigma|)}\cdot \lambda }.
\]
They asked whether such an exponential dependence on $|\Sigma|$ is optimal. In this work, we improve this bound to $O\parens{|\Sigma|\cdot \lambda }$ by viewing a function $f:\Sigma^n\to \C$ as a function $f:\Z_{\abs{\Sigma}}^n\to \C$ which enables us to use a Fourier basis. More interestingly, this change of perspective motivates us to consider graphs that can utilize this algebraic structure, such as Cayley graphs over $\Z_{\abs{\Sigma}}$. This idea helps us get a bound of $O\parens{ \sqrt{|\Sigma|}\cdot \lambda }$ for Cayley expanders, further improving upon our bound of $O\parens{|\Sigma|\cdot \lambda }$ for arbitrary expanders.

% Note that all prior results (and our improvement) only use the spectral expansion of $X$, and thus, give a bound for worst-case expanders. 

\paragraph{Functions over Groups}

More broadly, the above suggests investigating functions over general finite groups (instead of just $\Z_{\abs{\Sigma}}$) and considering expanders with a compatible algebraic structure. This opens up interesting directions to study:

%More importantly, this idea of endowing the set $\Sigma$ (and the expander $X$) with an algebraic structure 
\begin{enumerate}
	\item What novel classes of functions $f: G^n\to\C$  can we study that utilize the group operation, or have richer symmetries coming from the group? For instance, \textit{class functions}, \ie  functions that satisfy $f(x) = f(gxg^{-1})$ for every $x,g \in G$.   
%	\item For example, functions $f$ such that $f()$
%	\item How do we define and analyze generalizations of computational models such as $\mathsf{AC}^0$.
\item Can one obtain stronger bounds on $\cE_X(f)$ for such function classes when the expander $X$ has algebraic structure?
\item How can one utilize the pseudorandomness properties of the group itself, such as \textit{quasirandomness}?
\end{enumerate} 
While these questions are very natural in their own right, studying functions over general groups has been fruitful in the context of complexity theory. For instance, the famed Barrington's theorem \cite{Barrington89} effectively reinterprets a Boolean function as a function over the permutation group. More recently,~\cite{JMRW22} showed that one can obtain improved expanders by studying pseudorandomness for functions over the permutation group. 

\subsection{Our Results}
We initiate a study of this general setup and make progress in answering these questions. We (i) give a general lemma about the pseudorandomness for matrix-valued functions (over arbitrary expanders), which is needed to work with Fourier decompositions of non-abelian groups, (ii) analyze specific function classes of symmetric functions and \textit{word functions} over a group $G$, and (iii) we study symmetric \textit{class functions} over structured expanders, and show that the quasirandomness of the group $G$ synergizes with the randomness of the expander $X$ to yield improved bounds, and finally, (iv) we prove a lower bound for the fooling of symmetric functions even over such structured expanders.

%In the following section, we describe our results more formally, we first begin by considering arbitrary functions $f: G^n\rightarrow \C$ devoid of any additional structure. 

% generic expander case

\subsubsection{Pseudorandomness of generic expanders}
We begin by considering an abstract problem about expander walks. Let $X$ be an expander and consider a set of $k$ matrix-valued mean-zero functions: 
\[
\braces[\big]{f_j: X\rightarrow  \C^{d_j\times d_j} \mid j \in [k]},\; \quad \max_x\norm{f_j(x)}_{\op}\leq 1.
\] 
%to be any set of  functions on an expander $X$ such that  $$, and $\max_x\norm{f_j(x)}$.
Given an ordered subset $\inds = \set{i_1 < i_2<\cdots < i_{k-1}< i_k}$ of indices, we wish to study the expression,
\[
\cE_{X,\inds}\,(f_1\otimes\cdots \otimes f_k) ~:=~ \norm[\big]{\E_{\vec{x}\sim \rw_n} [f_1(x_{i_1})\otimes \cdots  \otimes f_k(x_{i_k}) ]}_\op
\]
Note that the above expression is identically $0$ if $X$ is a complete graph (with self-loops) as the functions, $f_i$, have zero mean. The goal is to show that above quantity is small when we have  a $\lambda$-expander. Analyzing special cases of such quantity is at the heart of several past works~\cite{CPT21,JMRW22,RR2024} that studied pseudorandomness of expander walks. 
%The functions 
%It can be shown that bounding $\mathcal{E}_X(f)$
%
%As we are working with a function $f: G^n\rightarrow \C$, using fourier analysis over  it is possible to decompose $f$ as linear combination of certain product functions. 
%Ideally, we would like to say the quantity $\mathcal{E}_X(f)$ is negligible for any function $f$. 
%We compare two distributions - (1) $\unif_n$  be the uniform distribution of $n$-independent vertices from an expander $X$ and $\rw_n$ be the uniform distribution on $n$-length random walks. Now, given a test function $f: G^n\rightarrow \C$, 
%\begin{center}
%we want to study $\Ex{f\circ \phi}$ behaves under the two distributions - $\unif_n$ and $\rw_n$ where, $f\circ \phi(\vec x):= f\parens{\phi(x_1),\dots,\dots, \phi(x_n) }$.
%\end{center}
%To minimize notation clutter, we supress the labeling map $\phi$ and keep it implict from this point. 
%We want to compare the mean of the function $f\circ \phi: G^n $
%$\rw_n$. 
%The setup of the result is as follows. We have a collection of $k$ matrix-valued functions 
%Let 
For the setting of matrix-valued functions, the result of~\cite{JMRW22} gives the following bound on the expression: 

\begin{theorem}[{~\cite{JMRW22}}]\label{theo:jmrw_intro}
	Let	$X$ be any $\lambda$-expander and $\{f_1,\dots, f_k\}$ be a collection of mean-zero normalized matrix-valued functions for $k \geq 2$. Then for any $k$-sized subset of indices, $\inds$,
	\[
	\cE_{X,\inds}\,(f_1\otimes\cdots \otimes f_k) ~\leq~ (2\lambda)^{\floor{\frac{k}{2}}}. 
	\]
\end{theorem}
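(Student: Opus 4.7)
The plan is to express the walk expectation as a sandwiched matrix product on the tensor space $\cH = \C^N \otimes \C^{d_1} \otimes \cdots \otimes \C^{d_k}$ (with $N = |V(X)|$), decompose it spectrally, and use the mean-zero hypothesis to kill most terms. Let $A$ be the normalized adjacency matrix of $X$ and write $A = J + E$, where $J = \frac{1}{N}\mathbf{1}\mathbf{1}^T$ is the projection onto the uniform vector $\pi = \mathbf{1}/\sqrt{N}$ and $\|E\|_\op \leq \lambda$; since $JA = AJ = J$, we have $A^g = J + E^g$ for every integer $g \geq 1$. Define on $\cH$ the walk operator $\mathsf{A} = A \otimes I$ (with pieces $\mathsf{J} = J \otimes I$ and $\mathsf{E} = E \otimes I$), and for each $j \in [k]$ the placement operator
\[
\mathsf{F}_j \;=\; \sum_{x \in V(X)} |x\rangle\langle x| \otimes (I_{d_1} \otimes \cdots \otimes f_j(x) \otimes \cdots \otimes I_{d_k}),
\]
with $f_j(x)$ in the $j$-th tensor slot. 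Setting $g_j = i_{j+1} - i_j \geq 1$, a direct unrolling of the walk marginal gives
\[
\E_{\vec{x}\sim \rw_n}\bigl[f_1(x_{i_1}) \otimes \cdots \otimes f_k(x_{i_k})\bigr] \;=\; (\pi^*\otimes I)\, \mathsf{F}_1\, \mathsf{A}^{g_1}\, \mathsf{F}_2 \cdots \mathsf{A}^{g_{k-1}}\, \mathsf{F}_k\, (\pi \otimes I).
\]

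Next I would substitute $\mathsf{A}^{g_j} = \mathsf{J} + \mathsf{E}^{g_j}$ in each of the $k-1$ inner gaps and expand, producing $2^{k-1}$ terms indexed by the subset $S \subseteq [k-1]$ of positions where $\mathsf{J}$ is chosen. Because $\mathsf{J} = (\pi \otimes I)(\pi^* \otimes I)$ has rank one on the $\C^N$ factor, every $\mathsf{J}$-insertion decouples the product and partitions $\mathsf{F}_1, \ldots, \mathsf{F}_k$ into consecutive \emph{blocks}. The key vanishing is that any singleton block $\{j\}$ contributes
\[
(\pi^* \otimes I)\, \mathsf{F}_j\, (\pi \otimes I) \;=\; I \otimes \cdots \otimes \Bigl(\frac{1}{N}\sum_{x} f_j(x)\Bigr) \otimes \cdots \otimes I \;=\; 0
\]
by mean zero, so the surviving terms are precisely those whose induced partition of $[k]$ has every block of size at least $2$. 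For a surviving term with block sizes $m_1, \ldots, m_p$ (each $m_i \geq 2$, $\sum_i m_i = k$), distinct blocks act on disjoint tensor slots and the whole product factors as a tensor product of per-block operators; using $\|\mathsf{F}_j\|_\op \leq 1$ and $\|\mathsf{E}^g\|_\op \leq \lambda^g \leq \lambda$ (for $g \geq 1$) yields the per-term bound $\prod_i \lambda^{m_i - 1} = \lambda^{k-p}$.

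Since the number of compositions of $k$ into $p$ parts each of size $\geq 2$ equals $\binom{k - p - 1}{p - 1}$, summing gives
\[
\cE_{X,\inds}(f_1 \otimes \cdots \otimes f_k) \;\leq\; \sum_{p=1}^{\lfloor k/2 \rfloor} \binom{k - p - 1}{p - 1}\, \lambda^{k - p}.
\]
For $\lambda \geq 1/2$ the target $(2\lambda)^{\lfloor k/2 \rfloor} \geq 1$ is immediate from the trivial bound $\prod_j \max_x \|f_j(x)\|_\op \leq 1$. For $\lambda \leq 1/2$, each exponent satisfies $k - p - \lfloor k/2 \rfloor \geq 0$, so $\lambda^{k - p} \leq \lambda^{\lfloor k/2 \rfloor} \cdot (1/2)^{k - p - \lfloor k/2 \rfloor}$; this pulls out the factor $(2\lambda)^{\lfloor k/2 \rfloor}$ and leaves the scalar sum $a_k := \sum_p \binom{k - p - 1}{p - 1} 2^{-(k - p)}$. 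The generating function $\sum_k a_k z^k = \frac{z^2}{(1-z)(2+z)}$ can then be computed by partial fractions to give $a_k = \tfrac{1}{3} + \tfrac{2}{3}(-1/2)^k \leq 1/2$, completing the inequality.

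The main obstacle is the final combinatorial step: pinning down the precise constant $2$ in $(2\lambda)^{\lfloor k/2 \rfloor}$ requires the generating-function calculation above rather than a crude Fibonacci bound on the number of surviving terms (which grows like $\phi^k \approx 1.618^k$ and is too lossy for large $k$). The matrix-valued setup and the singleton-vanishing argument are natural extensions of the scalar Fourier-expander calculation of \cite{CPT21}.
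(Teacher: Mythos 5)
Your proof is correct. Note first that the paper does not actually prove this statement---it is quoted from \cite{JMRW22}---so the only in-paper argument to compare against is the proof of the refinement \cref{theo:tensor_fool}, and your route is genuinely different from that one. You use the classical operator-sandwich expansion $A^{g}=J+E^{g}$, kill singleton blocks by mean-zero, and bound the survivors; the paper instead runs the recursive ``ignore first step'' scheme of \cite{RR2024}, defining the intermediate quantities $\ep_j$ and $\zeta_j$ and alternating two applications of the expander mixing lemma, which avoids the $2^{k-1}$-term expansion entirely. Interestingly, your expansion recovers the paper's gap-sensitive bound as well: the subsets $I\subseteq[k-1]$ of $\mathsf{E}$-positions that survive the singleton-vanishing are exactly the family $\mathcal{I}_k$ (they must contain $1$ and $k-1$ and hit every consecutive pair), so keeping $\lambda^{g_j}$ per gap gives $\sum_{I\in\mathcal{I}_k}\lambda^{\sum_{i\in I}\Delta_i}$, identical to \cref{theo:tensor_fool}. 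Where the two approaches differ in output is the constant: your generating-function evaluation $a_k=\tfrac13+\tfrac23(-1/2)^k\le\tfrac12$ pins the worst-case bound at $(2\lambda)^{\lfloor k/2\rfloor}$, matching the cited statement, whereas the paper's recursion only certifies $(4\lambda)^{\lfloor k/2\rfloor}$ in \cref{theo:main_tensor}; the price is that your argument leans on the explicit rank-one structure of $J$ and the trivial-bound fallback for $\lambda>1/2$, while the paper's recursion is arguably more modular (it needs only the mixing lemma as a black box). All the individual steps check out: $\|\mathsf{F}_j\|_\op=\max_x\|f_j(x)\|_\op\le 1$ since $\mathsf{F}_j$ is block-diagonal over vertices, $JE=EJ=0$ justifies $A^g=J+E^g$, the composition count $\binom{k-p-1}{p-1}$ is right, and the partial-fraction computation is verified on small $k$.
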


The above result is agnostic to the set $\inds$ and gives a general worst-case bound. But this is too pessimistic when $\inds$ has large gaps. For instance, if $\inds = \{1,4,9\}$, the above result gives a bound of $\lambda$ which is far from the optimal (could be as small as $\lambda^8$). We prove a result that takes the structure of $\inds$ into account and gives an improved guarantee when $\inds$ has large gaps. 

%The key improvement in our result is that the pseudorandomness
%guarantee is improved when $\inds$ has large gaps. For instance if $\inds = \{1,4,9\}$, we get a bound of $\lambda^{3}$ instead of $\lambda$ which
%corresponds to the case when $\inds = \{1,2,3\}$.  
%%o the improvement in~\cite{CPT21}, for 
%%certain class of boolean valued functions, over the result of Ta-Shma~\cite{Ta-Shma17} 
%We have the following theorem. 
%For certain class of very specific boolean valued functions, 
%~\cite{CPT21} gave a more tight bound that takes  $\inds$
%into account. 

\begin{theorem}[Matrix-Valued Fooling, \cref{theo:tensor_fool}]\label{theo:main_tensor}
	Let $X$ be any $\lambda$-expander and $\{f_1,\dots, f_k\}$ be a collection of mean-zero  matrix-valued functions for $k \geq 2$. Then for any $k$-sized subset of indices, $\inds$,
	\[ 
	\cE_{X,\inds}\,(f_1\otimes\cdots \otimes f_k) 	~\leq~\lambda^{\eta(\inds)} ~\leq~  (4\lambda)^{\floor{\frac{k}{2}}}\mper  
	\]
	where $\eta(\inds)$ is an explicit function. 
\end{theorem}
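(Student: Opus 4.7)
The plan is to represent the matrix-valued random-walk expectation as a concrete operator product on an enlarged space and then to expand the walk transitions using the spectral decomposition $A=J+E$ of the normalized adjacency $A$ of $X$, where $J$ is the projection onto constants and $\|E\|_{\op}\le \lambda$. Since $AJ=JA=J$, one has $A^m=J+E^m$ for every $m\ge 1$. Writing $\Delta_j(\inds):=i_{j+1}-i_j$, I would work on $\mathbb{C}^{V_X}\otimes\mathbb{C}^{d_1}\otimes\cdots\otimes\mathbb{C}^{d_k}$ with $\widetilde{A}:=A\otimes I$ and the ``multiplication-by-$f_j$ on the $j$-th slot'' operator $\widetilde{F}_j$ defined by
\[
\widetilde{F}_j(e_v\otimes u_1\otimes\cdots\otimes u_k)\;=\;e_v\otimes u_1\otimes\cdots\otimes f_j(v)u_j\otimes\cdots\otimes u_k,
\]
so that $\|\widetilde{F}_j\|_{\op}\le 1$. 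A direct Markov-chain unfolding then gives
\[
\E_{\vec{x}\sim \rw_n}\bigl[f_1(x_{i_1})\otimes\cdots\otimes f_k(x_{i_k})\bigr]=(\mathbf{1}^T\otimes I)\,\widetilde{F}_1\,\widetilde{A}^{\Delta_1}\,\widetilde{F}_2\,\widetilde{A}^{\Delta_2}\cdots\widetilde{A}^{\Delta_{k-1}}\,\widetilde{F}_k\,(\pi\otimes I),
\]
with $\pi=\mathbf{1}/|V_X|$, reducing the proof to bounding the operator norm of this product.

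Next I would substitute $\widetilde{A}^{\Delta_j}=\widetilde{J}+\widetilde{E}^{\Delta_j}$ (where $\widetilde{J}=J\otimes I$ and $\widetilde{E}=E\otimes I$) in each of the $k-1$ inner factors and fully distribute, obtaining a sum of $2^{k-1}$ terms indexed by patterns $G\in\{J,E\}^{k-1}$. The mean-zero hypothesis forces most terms to vanish via the identity $\widetilde{J}\,\widetilde{F}_{j+1}\,\widetilde{J}=0$: unfolding $\widetilde{J}=(\pi\mathbf{1}^T)\otimes I$ and pushing the inner sandwich through $\widetilde{F}_{j+1}$ produces $\E_v f_{j+1}(v)=0$ acting on the $(j+1)$-th tensor slot. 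The same computation at the two boundaries shows $(\mathbf{1}^T\otimes I)\widetilde{F}_1\widetilde{J}=0$ and $\widetilde{J}\widetilde{F}_k(\pi\otimes I)=0$. Hence a pattern $G$ contributes only if $G_1\neq J$, $G_{k-1}\neq J$, and no two consecutive entries of $G$ equal $J$; let $\mathcal{I}_k$ denote the family of admissible sets $I\subseteq[k-1]$ of $E$-positions.

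For each admissible $G$, I would bound the operator norm factorwise: the two outer factors contribute $\|\mathbf{1}^T\otimes I\|_{\op}\cdot\|\pi\otimes I\|_{\op}=\sqrt{|V_X|}\cdot(1/\sqrt{|V_X|})=1$, each $\widetilde{F}_j$ and each $\widetilde{J}$ contributes at most $1$, and each $\widetilde{E}^{\Delta_j}$ factor contributes at most $\lambda^{\Delta_j(\inds)}$. Summing over admissible patterns then gives the first inequality with
\[
\lambda^{\eta(\inds)}\;:=\;\aprxone .
\]
For the second inequality, a short combinatorial argument (boundary entries are forced to $E$ and $J$'s must be non-adjacent) shows that every admissible $I\in\mathcal{I}_k$ satisfies $|I|\ge \lceil k/2\rceil$. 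Since each $\Delta_j\ge 1$ and $|\mathcal{I}_k|\le 2^{k-1}$, we obtain $\lambda^{\eta(\inds)}\le 2^{k-1}\lambda^{\lceil k/2\rceil}\le (4\lambda)^{\lfloor k/2\rfloor}$, as desired.

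The main subtlety will be cleanly setting up the matrix-valued operator framework so that (i) the product identity for the expectation is exactly correct with all tensor factors in the right positions, and (ii) the mean-zero condition translates into the vanishing of $\widetilde{J}\widetilde{F}_{j+1}\widetilde{J}$, which requires $\E_v f_{j+1}(v)=0$ as a \emph{matrix} rather than merely coordinatewise. Once the setup is in place, the expansion, the vanishing analysis, the per-term operator-norm bounds, and the final combinatorial count are all direct, and the improvement over \Cref{theo:jmrw_intro} comes precisely from retaining each $\lambda^{\Delta_j(\inds)}$ in the exponent rather than collapsing to $\lambda$ per $E$-slot.
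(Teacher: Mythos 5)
Your proposal is correct, and it reaches exactly the paper's bound $\sum_{I\in \mathcal{I}_k} \lambda^{\sum_{j\in I}\Delta_j(\inds)}$, but by a genuinely different route. The paper does \emph{not} use the transfer-operator expansion: it generalizes the ``Ignore First Step'' trick of \cite{RR2024}, introducing partial averages $\avgmatr_j(y)=\E_{\vec x\sim \rwalk{n-i_j}(y)}[g_j(y,\dots)]$ together with the two scalar quantities $\ep_j=\norm{\E_y \avgmatr_j(y)}_\op$ and $\zeta_j^2=\E_y\norm{\avgmatr_j(y)}_\op^2$, derives the coupled recursions $\ep_j\le\lambda^{\Delta_j}\zeta_{j+1}$ and $\zeta_j\le\ep_{j+1}+\lambda^{\Delta_j}\zeta_{j+1}$ via the expander mixing lemma applied to the graphs $X^{\Delta_j}$ and $X^{2\Delta_j}$, and then solves the recursion by induction, with the family $\mathcal{I}_k$ emerging as the support of the unrolled recurrence. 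Your argument instead writes the walk expectation as $(\mathbf{1}^T\otimes I)\,\widetilde F_1\,\widetilde A^{\Delta_1}\cdots\widetilde A^{\Delta_{k-1}}\widetilde F_k\,(\pi\otimes I)$, substitutes $A^{\Delta_j}=J+E^{\Delta_j}$, and kills the inadmissible patterns via $\widetilde J\widetilde F_{j+1}\widetilde J=0$ (which indeed requires, and uses, that $\E_v f_{j+1}(v)=0$ as a matrix); here $\mathcal{I}_k$ appears directly as the set of surviving $E$-patterns, and the per-term norm bound $\lambda^{\sum_{j\in I}\Delta_j}$ is immediate from submultiplicativity. Both proofs are sound and give identical constants; your expansion makes the combinatorics of $\mathcal{I}_k$ more transparent (each admissible set is literally one surviving term), while the paper's recursive EML formulation is stated so that it only invokes the mixing lemma as a black box and matches the machinery it reuses for the pseudo-Cayley refinements later. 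Your closing estimate is also fine: admissible $I$ satisfy $|I|\ge\ceil{k/2}$ and $|\mathcal{I}_k|\le 2^{k-1}\le 4^{\floor{k/2}}$, giving $(4\lambda)^{\floor{k/2}}$. The one point you should spell out when writing this up is the direction/stationarity bookkeeping in the product identity (you flagged it yourself); it goes through because the uniform distribution is stationary and the walk is reversible on a regular graph.
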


The above bound is an improvement (over~\cref{theo:jmrw_intro}) when $\inds$ has large gaps.~\cref{theo:main_tensor} generalizes the result of~\cite{CPT21}, who achieved the same improvement over the result of Ta-Shma~\cite{Ta-Shma17} for $\{\pm 1\}$-valued functions. This quantitative improvement was crucially used by~\cite{CPT21} to prove their result about fooling functions over $\{0,1\}^n$, and we use it similarly to prove it for functions over an arbitrary alphabet, $\Sigma^n$.

The above theorem connects with fooling functions via Fourier analysis. Let $G$ be a group, and $f:G^n\to \C$ be any function. Consider a labeling\footnote{We only work with unbiased labelings, \ie those that induce the uniform distribution on $G$.} map, $\phi: X\rightarrow G$. Then, $f\circ \phi :X^n\to \C$, and the term we wish to bound is: 
\[
\mathcal{E}_X(f) ~:=~ \abs[\Big]{\Ex{\vec x \sim  \rw_n}{f \parens[\big]{\phi(x_1),\dots, \phi(x_n)}} - \Ex{\vec x \sim  \mathrm{Unif}_n}{f \parens[\big]{\phi(x_1),\dots, \phi(x_n)}}  }\mper
\]
Furthermore, since $f$ is a function on a product group, $G^n$, we can apply the general Fourier transform to express $f$ as a linear combination of matrix-valued tensor functions\footnote{\cref{theo:main_tensor} enables the possibility of working with orthonormal bases other than the Fourier basis. Any reasonably \enquote{flat} orthonormal basis where the basis elements satisfy certain pointwise bound and contains the invariant vector (i.e., the all $1$ vector) can be used.}. These tensor functions (when composed with $\varphi$) can be analyzed using~\cref{theo:main_tensor}. This can be seen as a generalization of the Fourier analytic approach of~\cite{CPT21}, who study symmetric functions over $\Z_2^n$. As a first application, we use our generalization from $\Z_2^n$ to $\Z_{|\Sigma|}^n$, to prove~\cref{theo:intro_main_fool}. Additionally, the ability to work with a general Fourier basis is utilized for other results where the function uses group structure for a given (potentially non-Abelian) group $G$, such as for \textit{group products}~(\cref{theo:word_func_intro}), and our general lower bound~(\cref{theo:lower_bound_intro}).

\subsubsection{Fooling symmetric functions and word functions}

We analyze the fooling of symmetric functions, ie functions $f:\Sigma^n\to \C$ that is invariant under permuting the input coordinates, for any finite alphabet $\Sigma$. 
% Finally, as we work with functions on finite groups,  if no additional structure is assumed on such functions, then effectively we are working with functions over finite sets.  For example set theoretic results for arbitrary set, $\Sigma$, can be recovered simply by considering the group $G=\Z_{|\Sigma|}$. Thus, by handling arbitrary finite group, one is also able to recover the same conclusion for finite sets. 
% for generic functions, sets are equiv to groups. Note that although the set theoretic setting may appear more general than the group setup, the set theoretic results for arbitrary set, $\Sigma$, can be recovered simply by considering the group $G=\Z_{|\Sigma|}$, provided that one can prove such result for arbitrary cyclic groups. In this paper, we take the group theoretic route, proving the following theorem that handles any finite group.

\begin{theorem}[Fooling symmetric functions, \cref{theo:main_fool}]\label{theo:intro_main_fool}
	Let $f$ be any symmetric function, $f:\Sigma^n \to \C$ where $\Sigma$ is any finite set. Let $X$ be a $\lambda$-expander such that $\lambda < \frac{1}{16e|\Sigma|}$. Then for any unbiased labelling of $X$ with $\Sigma$,
	\[  \abs{\cE_X(f)} ~\leq~ \parens{32e\lambda |\Sigma|}\cdot\norm{f}_2\quad.\]
	Moreover, if $\norm{f}_2 = o_n(1)$---for example, the weight indicator function which satisfies $\norm{f}_2 =n^{-1/4}$---one obtains a vanishing decay.
\end{theorem}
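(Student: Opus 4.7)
The plan is to lift the alphabet $\Sigma$ to an abelian group and apply Fourier analysis. Identify $\Sigma$ with $\Z_m$ where $m := |\Sigma|$; then $f\colon \Sigma^n\to \C$ admits a Fourier expansion $f = \sum_{a\in\Z_m^n}\widehat f(a)\,\chi_a$ with $\chi_a(x) = \omega^{\langle a,x\rangle}$, $\omega = e^{2\pi i/m}$. Since the labelling $\varphi$ is unbiased, each non-trivial $\chi_a\circ \varphi$ is mean-zero with sup-norm at most $1$, so
\[
\cE_X(f) ~=~ \sum_{a\neq 0}\widehat f(a)\,\cE_X(\chi_a).
\]
Writing $\inds_a\subseteq [n]$ for the support of $a$, \cref{theo:main_tensor} applied to the rank-one factors $\chi_{a_i}\circ \varphi$ yields $\lvert \cE_X(\chi_a)\rvert \leq \lambda^{\eta(\inds_a)}$, which importantly depends only on the support, not on which non-trivial character sits in each coordinate.

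Next, I exploit the symmetry of $f$ to collect equal Fourier coefficients. By permutation invariance, $\widehat f(a)$ depends only on the \emph{type} $K = (k_1,\ldots,k_{m-1})$ recording the multiplicity of each non-trivial character in $a$; write $\widehat f_K$ for this common coefficient and $k := \sum_i k_i$. Using the identity $\binom{n}{K} = \binom{n}{k}\binom{k}{k_1,\ldots,k_{m-1}}$, Parseval $\|f\|_2^2 = \sum_K \binom{n}{K}\lvert\widehat f_K\rvert^2$, and one Cauchy--Schwarz within each level, I obtain
\[
\lvert\cE_X(f)\rvert ~\leq~ \sum_{k\geq 2} A_k\,\sqrt{\tfrac{(m-1)^k}{\binom{n}{k}}}\,\|f_k\|_2,\qquad A_k := \sum_{\inds:\,|\inds|=k}\lambda^{\eta(\inds)},
\]
where $\|f_k\|_2^2$ is the Fourier weight at level $k$, and the sum starts at $k\geq 2$ since single-coordinate characters vanish in expectation under the (stationary) walk. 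A second Cauchy--Schwarz across levels yields $\lvert\cE_X(f)\rvert \leq \|f\|_2\,\sqrt{\sum_{k\geq 2} A_k^2 (m-1)^k/\binom{n}{k}}$.

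The main obstacle, and the crux of generalizing~\cite{CPT21} beyond the worst-case $(4\lambda)^{k/2}$ estimate, is to establish a combinatorial bound of the form $A_k \lesssim \binom{n}{k/2}\,(C\lambda)^{k/2}$ by summing the gap-aware quantity $\eta(\inds)$ over all $k$-subsets of $[n]$; morally, a uniform $k$-subset has about $k/2$ "large" gaps contributing logarithmic savings, making the average $\eta(\inds)$ behave like $k/2$. Granting this, the ratio $A_k^2/\binom{n}{k}$ simplifies via $\binom{n}{k/2}^2/\binom{n}{k}=\binom{k}{k/2}=O(2^k)$, so the summand becomes $O((Cm\lambda)^k)$; under the hypothesis $\lambda<1/(16em)$ the resulting series is geometric and dominated by its $k=2$ term, producing the claimed bound $\lvert\cE_X(f)\rvert = O(m\lambda)\,\|f\|_2$. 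The vanishing-decay corollary for functions with $\|f\|_2 = o_n(1)$, such as the weight indicator with $\|f\|_2 = n^{-1/4}$, then follows immediately.
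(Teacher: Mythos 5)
Your proposal follows essentially the same route as the paper: Fourier-expand $f$ over $\Z_{|\Sigma|}^n$, bound each non-trivial character's walk-expectation by the gap-aware quantity $\lambda^{\eta(\inds)}$ from the tensor-fooling theorem, use permutation-invariance to collect coefficients by level, apply Cauchy--Schwarz within each level (the paper's trace-norm-to-$L_2$ lemma), and sum a geometric series dominated by $k=2$. The one step you flag as the "main obstacle" and merely grant --- the bound $\sum_{|\inds|=k}\lambda^{\eta(\inds)}\leq \binom{n}{k}^{1/2}(16e\lambda)^{k/2}$ --- is exactly \cite[Lemma~4.4]{CPT21}, which the paper also imports rather than reproves, so your argument is complete once you cite it.
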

\noindent

This improves the previous best bound of $\parens{|\Sigma|^{O(|\Sigma|)}\cdot \lambda }$ due to
Golowich and Vadhan~\cite{GV22}. 
Our analysis relies on using a Fourier basis for such functions that can be obtained by viewing $\Sigma$ instead as $\Z_\Sigma$, and then applying~\cref{theo:main_tensor}. However, our proof is agnostic to this specific choice of group and can instead work with a Fourier basis over any group (of size $|\Sigma|$)  by using~\cref{theo:main_tensor}.

%Hence, we give a proof that works with a general Fourier basis, which is c. 

% consider symmetric functions $f$ on $\Sigma^n$ over finite alphabet $\Sigma$ and get the following bound:  
%\[
%\abs{\cE_X(f)} ~\leq~ \parens{|\Sigma|^{O(|\Sigma|)}\cdot \lambda }.
%\]

%\begin{corollary}
%	functions]\label{theo:intro_main_fool}
%	Let $f$ be any symmetric function, $f:\Sigma^n \to \C$ where $\Sigma$ is any finite group. Let $X$ be a $\lambda$-expander such that $\lambda < \frac{1}{16e|\Sigma|}$. Then for any unbiased labelling of $X$ with $\Sigma$,
%	\[  \abs{\cE_X(f)} ~\leq~ \parens{32e\lambda |G|}\cdot\norm{f}_2\quad.\]
%\end{corollary}

 \paragraph{Word Functions and Group Products} Going beyond symmetric functions, we analyze ``non-commutative'' monomial functions, which we call \textit{word} functions.
 
 \begin{definition}[Monomial word function]
	
	For an ordered subset $\inds \subseteq [n]$, a \textit{monomial word} is a map, $w_{\inds}: G^n\to G$, defined as $w_{\inds} = \prod_{s\in S} g_s^{e_s}$ where $e_S \in \{\pm 1\}$. A function $f:G^n \to \C$ is a \textit{monomial word function} of degree $k$, if $f = h(w_\inds(g_1,\cdots, g_n))$ for some $\inds$ of size $k$ and a function $h: G\rightarrow \C$.  
	%The \textit{degree} of a word is the size $|\inds|$.
%	 A word is \textit{monomial} if the variables are non-repeating and the exponent is $\pm 1$.  for a monomial word $w$ of degree $k$ and a function $h: G\rightarrow \C$. 
\end{definition}

%In a nutshell, this class of functions contains functions of the form $f(x_1,\ldots,x_n) = h(x_{i_1} \cdots x_{i_k})$, where $h \colon G \to \C$ is arbitrary.

%\textit{Word functions} are functions of the form $f(x_1,\cdots, x_n) = h(x_1x_2\cdots x_n)$ where $h: G\to \C$. These functions 

%In this section, we define a very natural class of functions, that we term \emph{word functions}. 

We give a complete characterization of the Fourier spectrum of \emph{monomial word functions}, and show that these have Fourier support on the highest level and thus are analogs of the PARITY function over $\Z_2^n$. Moreover, this support is sparse (see~\cref{lem:repsame}), and this enables us to use~\cref{theo:main_tensor}. We thus deduce that such functions are exponentially fooled by expander walks.

%
%The key result we show here is that the group product functions (more generally, word functions), as above, have their Fourier support entirely on the maximum possible degree irreps. Moreover, this support is also very sparse . Thus, we effectively prove a tight characterization of its Fourier spectrum that enables us to use~ \cref{theo:main_tensor}.  

\begin{theorem}[Fooling word functions, \cref{theo:word_func1}]\label{theo:word_func_intro}
	Let  $f: G^n\rightarrow \C$ be a word function of	degree $k$ corresponding to a set $\inds$. 
	Then for any expander $X$ with an unbiased $G$-labelling,
	\[
	 \abs{\cE_X(f)} ~\leq~ \lambda^{\eta(\inds)}  \cdot |G|^{\frac{k}{2}}\cdot \norm{f}_2\; ~\leq~(2\lambda)^{\floor{\frac{k}{2}}}  \cdot |G|^{\frac{k}{2}}\cdot \norm{f}_2. \]  	
%In particular, we have $\abs{\cE_X(f)} ~\leq~\lambda^{-1}\parens{\lambda |G|}^{k/2}\cdot \norm{f}_2 $
\end{theorem}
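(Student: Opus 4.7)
The plan is to Fourier-expand $h$ over $G$, pass from the matrix product $\rho(w_\inds) = \prod_{s \in \inds}\rho(g_{i_s})^{e_s}$ to a tensor contraction, and then apply \cref{theo:main_tensor} irrep by irrep. Write $h(g) = \sum_\rho d_\rho \tr(\widehat{h}(\rho)\rho(g))$ over the unitary irreducible representations of $G$; composing with $w_\inds$ gives
\[
f(\vec{g}) \;=\; \sum_\rho d_\rho \tr\!\parens[\Big]{\widehat{h}(\rho) \prod_{s \in \inds}\rho(g_{i_s})^{e_s}}.
\]
The trivial representation contributes a constant that is unaffected by any unbiased labeling, so only non-trivial $\rho$ survive in $\cE_X(f)$.

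For each non-trivial unitary irrep $\rho$, set $F_s(v) := \rho(\phi(v))^{e_s}$, a matrix-valued function $V_X \to \C^{d_\rho \times d_\rho}$. Schur orthogonality combined with unbiasedness gives $\E_v[F_s(v)] = 0$, and unitarity gives $\norm{F_s(v)}_\op = 1$. Thus \cref{theo:main_tensor} applied to $F_1, \ldots, F_k$ yields $\norm{T_\rho}_\op \leq \lambda^{\eta(\inds)}$, where $T_\rho := \E_\rw[F_1(x_{i_1}) \otimes \cdots \otimes F_k(x_{i_k})]$. The bridge from the matrix product that actually appears to this tensor object is the identity
\[
\E_\rw\!\brackets[\Big]{\tr\!\parens[\Big]{\widehat{h}(\rho) \prod_s F_s(x_{i_s})}} \;=\; \angles{W_\rho,\, T_\rho}_\HS,
\]
where $W_\rho$ is an explicit sparse $d_\rho^k \times d_\rho^k$ tensor obtained by stringing together $k-1$ Kronecker deltas that pair adjacent row/column indices around the trace cycle, decorated with a single entry of $\widehat{h}(\rho)$ at the cyclic closure.

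Collapsing the deltas gives $\norm{W_\rho}_\HS^2 = d_\rho^{k-1}\norm{\widehat{h}(\rho)}_\HS^2$, and the basic inequality $\norm{T_\rho}_\HS \leq d_\rho^{k/2}\norm{T_\rho}_\op$ together with Cauchy--Schwarz bounds each irrep's contribution by $d_\rho^{(k-1)/2}\norm{\widehat{h}(\rho)}_\HS \cdot d_\rho^{k/2}\lambda^{\eta(\inds)}$. Multiplying by the Fourier weight $d_\rho$ and summing over $\rho$, one more application of Cauchy--Schwarz yields
\[
\abs{\cE_X(f)} \;\leq\; \lambda^{\eta(\inds)}\,\sqrt{\textstyle\sum_\rho d_\rho^{2k}}\,\cdot\,\sqrt{\textstyle\sum_\rho d_\rho \norm{\widehat{h}(\rho)}_\HS^2}\,.
\]
The second factor equals $\norm{h}_2 = \norm{f}_2$ by Parseval, together with the fact that $w_\inds$ pushes uniform on $G^n$ to uniform on $G$ (each coordinate in $\inds$ appears exactly once). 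The first factor is at most $|G|^{k/2}$ using $\max_\rho d_\rho \leq \sqrt{|G|}$ and $\sum_\rho d_\rho^2 = |G|$ from Peter--Weyl. Dominating $\lambda^{\eta(\inds)} \leq (2\lambda)^{\floor{k/2}}$ as in \cref{theo:main_tensor} then gives the second inequality.

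The hard part is the middle step: translating the matrix product $\prod_s \rho(g_{i_s})^{e_s}$ (forced on us by $\rho$ being a homomorphism) into an object to which the tensor-product bound \cref{theo:main_tensor} applies, while tracking the dimensions $d_\rho$ tightly. A lossy conversion would produce an extra factor of $\sqrt{|G|}$; the sparse contraction tensor $W_\rho$ has to be constructed with exactly the right structure so that $\sum_\rho d_\rho^{2k} \leq |G|^k$ governs the final sum, producing $|G|^{k/2}$ rather than a larger power of $|G|$.
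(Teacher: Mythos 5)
Your proof is correct and lands on exactly the right bound, but it reaches it by a slightly different decomposition than the paper. The paper Fourier-expands $f$ itself over the product group $G^n$ and first proves a structural lemma (its \cref{lem:repsame}) showing that a monomial word function is supported only on the tensor irreps $\rho^S$ with $\rho_i \in \{\rho,\rho^*\}$ on $\inds$ and trivial elsewhere; it then feeds this sparse support into the generic trace-norm/operator-norm bound of \cref{claim:fooling} together with \cref{cor:op_irrep}, and closes with the same H\"older-plus-Cauchy--Schwarz accounting ($\norm{\hat f(\rho^S)}_\tr \le d_\rho^{k/2}\norm{\hat f(\rho^S)}_\HS$, then $\sum_\rho d_\rho^{2k}\le |G|^k$). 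You instead expand $h$ over $G$, apply \cref{theo:main_tensor} directly to the representation-valued functions $F_s = \rho^{e_s}\circ\phi$, and convert the trace of the matrix product into a Hilbert--Schmidt pairing with an explicit contraction tensor $W_\rho$ satisfying $\norm{W_\rho}_\HS^2 = d_\rho^{k-1}\norm{\hat h(\rho)}_\HS^2$; this bypasses the computation of $\hat f$ over $G^n$ entirely (your $W_\rho$ is, up to reshaping, precisely the content of the paper's support lemma). The two routes are quantitatively identical --- the per-irrep powers of $d_\rho$ and the final $\sqrt{\sum_\rho d_\rho^{2k}} \le |G|^{k/2}$ step coincide --- and your version is closer in spirit to the paper's proof of \cref{theo:word_func2}, which also works with the Fourier expansion of $h$ over $G$ rather than of $f$. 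Two cosmetic points: you should note explicitly that for non-trivial $\rho$ the uniform-distribution term $\bigotimes_s \E[F_s]$ vanishes by independence (so the contribution to $\cE_X(f)$ really is just $d_\rho\abs{\ip{W_\rho}{T_\rho}}$), and the identity $\norm{f}_2 = \norm{h}_2$ does need the monomial hypothesis (each variable in $\inds$ appearing exactly once), which you correctly invoke.
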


%
%In the second half of this section, we consider a subclass  of functions within monomial word functions that we call \emph{monotone word functions}. Essentially, these are word functions for which corrresponding word, $w$ is monotone \ie $w = x_{i_1}\cdots x_{i_k}$ for $i_1\leq \cdots i_k$. We already mentioned that 
%for monomial word functions gets fooled by expander walks upto an exponentially decaying error. However, the error bound has dependence on $|G|$. For monotone word functions we remove this dependence while achieving the same decay in terms of expansion. 

One important case of this class of functions is the group product functions, namely, Boolean valued functions $f$ that take  $x_1,\ldots,x_k \in G$ as
input and output $1$ if and only if the product is equal to some target element $t \in G$. Fooling group product functions is a crucial component
in the construction of some pseudorandom generators for branching programs, \eg \cite{MZ09,D11}. 

We sharpen \cref{theo:word_func_intro} for group product functions by removing the dependence on $|G|$ in the error bound while achieving the same exponential decay in terms of expansion.

\begin{theorem}[Fooling Group Products,~\cref{theo:word_func2}]
   Let $G$ be any finite group, $t \in G$, and  $f(\vec{x}) = \indicator{x_1\cdots x_k = t}$ be a group product. Then for any expander $X$ with an unbiased $G$-labelling,
   \[
   \abs{\cE_X(f)} ~\leq~  (2\lambda)^{k/2}\mper
   \]
\end{theorem}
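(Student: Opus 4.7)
The plan is to Fourier-expand the indicator using the characters of $G$, reducing the problem to bounding a matrix-valued expectation associated with each nontrivial irreducible representation. Starting from the character identity $\indicator{y = e_G} = \frac{1}{|G|}\sum_\rho d_\rho \chi_\rho(y)$ applied at $y = x_1 \cdots x_k t^{-1}$, and using that $\E_{g \sim G}[\rho(g)] = 0$ for every nontrivial $\rho$ (Schur orthogonality combined with unbiased labeling), only the trivial representation survives under $\unif$ and it contributes $1/|G|$. Subtracting gives
\[ \cE_X(f) = \frac{1}{|G|}\sum_{\rho \neq \triv} d_\rho \cdot \tr\bigl(\rho(t^{-1}) \cdot \E_{\rw}[\rho(x_1) \rho(x_2) \cdots \rho(x_k)]\bigr). \]
Applying $|\tr(AB)| \leq d_\rho \norm{A}_{\op}\norm{B}_{\op}$ with $\norm{\rho(t^{-1})}_{\op} = 1$, and then Parseval's identity $\sum_\rho d_\rho^2 = |G|$, the theorem reduces to establishing
\[ \norm[\big]{\E_{\rw}[\rho(x_1) \rho(x_2) \cdots \rho(x_k)]}_{\op} \leq (2\lambda)^{k/2} \]
for every nontrivial irrep $\rho$. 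The prefactor $\frac{1}{|G|}\sum_\rho d_\rho^2$ precisely absorbs, removing the lossy $|G|$-factor one would incur from a naive application of \cref{theo:word_func_intro}.

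To prove the key bound, I would set up a matrix-valued walk on $\mathbb{C}^{V_X} \otimes \mathbb{C}^{d_\rho}$: define the block-diagonal label operator $D = \sum_{v \in V_X} e_v e_v^{\top} \otimes \rho(\phi(v))$, the lifted walk $\tilde A = A_X \otimes I_{d_\rho}$, and the invariant-subspace projector $\Pi = (\mathds{1}\mathds{1}^{\top}/|V_X|) \otimes I_{d_\rho}$. A straightforward unwinding shows that $\E_{\rw}[\rho(x_1) \cdots \rho(x_k)]$ equals $\Pi (D\tilde A)^{k-1} D \Pi$ restricted to $\mathrm{Im}(\Pi) \cong \mathbb{C}^{d_\rho}$. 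Unbiasedness yields $\Pi D \Pi = 0$ (since $\E_v \rho(\phi(v)) = 0$), while $\tilde A - \Pi$ has operator norm at most $\lambda$. Expanding each of the $k-1$ copies of $\tilde A$ as $\Pi + (\tilde A - \Pi)$ and discarding every term containing a $\Pi D \Pi$ subword leaves only patterns with at least $\lceil k/2 \rceil$ copies of $(\tilde A - \Pi)$, each contributing a factor of $\lambda$, in a manner parallel to the combinatorial expansion behind \cref{theo:jmrw_intro} and \cref{theo:main_tensor}.

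The main obstacle is extracting the sharp constant $2$ in $(2\lambda)^{k/2}$: a crude union bound over the surviving $\Pi/(\tilde A - \Pi)$ patterns is governed by a Fibonacci recursion and yields only $(\phi^2 \lambda)^{k/2}$ with $\phi^2 \approx 2.618$. To replace this by $2^{k/2}$, the cleanest route is a midpoint Cauchy--Schwarz folding that splits the walk into two blocks of length $k/2$ and uses unitarity of $\rho$ together with reversibility of the walk to pair them, mirroring the dyadic-style recursion in \cite{JMRW22}. Alternatively, one can try to realize the matrix product as a contraction of the tensor product and directly invoke \cref{theo:main_tensor}, but the relevant contraction map does not preserve operator norm, so additional structure must be used.
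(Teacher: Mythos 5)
Your reduction is sound and is essentially the paper's own argument specialized to the indicator: the paper proves \cref{theo:word_func2} by Fourier-expanding a general $h:G\to\C$ over a single copy of $G$, substituting $x_1\cdots x_k$, and bounding $\sum_\rho d_\rho\norm{\widehat h(\rho)}_\tr$ by $\sqrt{|G|}\,\norm{h}_2$ via Cauchy--Schwarz, which equals $1$ when $h=\ones_{\{\cdot=t\}}$; your computation $\frac{1}{|G|}\sum_\rho d_\rho^2=1$ is the same cancellation carried out directly. The place where you diverge is the crux bound $\norm{\E_{\rw}[\rho(x_1)\cdots\rho(x_k)]}_\op\leq(2\lambda)^{k/2}$ for a single nontrivial irrep. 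The paper does not prove this at all: it imports it verbatim as a theorem of \cite{JMRW22} (stated immediately before \cref{theo:word_func2}, in fact with the stronger bound $\lambda^{k/2}$). You correctly identify that neither \cref{theo:jmrw_intro} nor \cref{theo:main_tensor} applies directly, since the matrix product is a non-norm-preserving contraction of the tensor product, and you set up the right transfer-operator framework ($\Pi D\Pi=0$, $\norm{\tilde A-\Pi}_\op\leq\lambda$, expansion into $\Pi/(\tilde A-\Pi)$ patterns with no $\Pi D\Pi$ subword).

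The genuine gap is that this framework, as you execute it, does not deliver the stated constant: the Fibonacci count over surviving patterns gives $(\phi^2\lambda)^{k/2}$ with $\phi^2\approx 2.618$, as you yourself note, and the proposed ``midpoint Cauchy--Schwarz folding'' that is supposed to recover $2^{k/2}$ is only named, not carried out. So as written the proposal proves the theorem with $2$ replaced by $\phi^2$. Two remarks on closing it: (i) the shortest route is simply to cite the product-version theorem of \cite{JMRW22} as the paper does, at which point your reduction finishes the proof; (ii) if you insist on self-containment, note that the theorem is vacuous for $\lambda\geq 1/2$ (the left-hand side is at most $1$), and for $\lambda<1/2$ the pattern sum $\sum_j \binom{j-1}{k-1-j}\lambda^j$ is dominated by its lowest-order term, so a slightly more careful weighting of the expansion (rather than bounding every $\lambda^{j}$ by $\lambda^{\lceil k/2\rceil}$) does yield a bound of the form $C^{k/2}\lambda^{\lceil k/2\rceil}$ with $C\le 2$; this is the kind of recursion the paper runs explicitly in the proof of \cref{theo:main_tensor}, and adapting that proof to the product (rather than tensor) setting is the honest way to make your sketch rigorous.
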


%\todo{Add general word function}
%
%The key result we show here is that the group product functions (more generally, word functions), as above, have their Fourier support entirely on the maximum possible degree irreps. Moreover, this support is also very sparse (see~\cref{lem:repsame}). Thus, we effectively prove a tight characterization of its Fourier spectrum that enables us to use~ \cref{theo:main_tensor}.  

%As alluded before, the above exactly match the parameters obtained by~\cite{CPT21} in case of $G=\F_2$ in this more general setting. 

%\tnote{CPT used the scalar version for $\Z_2$, golowich badhan did someting else for $\Sigma$ and then state. open problem dependence on $\Sigma$ Our operator-valued result allows us to generalize the approach of cpt to get better bounds. }

%The Fourier basis 

%First, we  Moreover, the above suggests the possibility of working with orthonormal bases other than the Fourier basis for analysis of expander function fooling

%The above theorem, when used for irreps and using some a combinatorial bound from CPT, gives

%\tnote{stuff about how it implies a TV distance bound and improves on Golowich-Vadhan's result. Vanishing bounds for small-norm function}

% structured expander case
\subsubsection{Pseudorandomness of structured expanders}
The above results hold for generic expanders, but since our function is defined on a group, it is natural to consider \enquote{structured} expanders that gel well with the group. In the case of Abelian groups, these are just \textit{Cayley graphs}, using which we obtain a further improvement to~\cref{theo:intro_main_fool}.

\begin{theorem}[Abelian Groups,~\cref{cor:trace,prop:quasi}]
  Let $G$ be an Abelian group and $X$ be a Cayley graph on $G$ and let $\{\chi_j \mid j \in [k]\ \}$ be a set of non-trivial characters of $G$.
  Then for any ordered subset $\inds$ of size $k$,
  \[
  \cE_{X,\inds}(\chi_1\otimes\cdots \otimes\chi_k)~~\leq~~ {\lambda^{\eta(\inds)}} \cdot \indicator{\chi_1\cdots \chi_k = \triv}\mper
  \]
  As a consequence, for every symmetric function $f:\Sigma^n\to \C$ and Cayley expander $X$,
  \[  \abs[\big]{\cE_X\parens[\big]{f}} ~\leq~ O\parens[\Big]{\sqrt{|\Sigma|} \cdot \lambda} \cdot \norm{f}_2\mper
  \]
\end{theorem}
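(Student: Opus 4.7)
The plan is to first establish the character bound directly by using the algebraic structure of Cayley graphs to reduce the expectation to a product of Cayley eigenvalues, and then deduce the symmetric function bound by expanding $f$ in the Fourier basis and exploiting both the trivial-product character gain and the permutation symmetry of $f$.

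\textbf{Part 1.} For the character bound, I would parameterize a length-$n$ random walk on $\Cay(G, T)$ as $x_j = g_0 \cdot t_1 t_2 \cdots t_{j-1}$, where $g_0$ is uniform on $G$ and the $t_m$'s are iid uniform on $T$. Using that $G$ is Abelian and each $\chi_j$ is a homomorphism, the product factorizes as
\[
\prod_{j=1}^k \chi_j(x_{i_j}) ~=~ (\chi_1 \cdots \chi_k)(g_0) \cdot \prod_{m=1}^{i_k-1} \psi_m(t_m), \qquad \psi_m ~:=~ \prod_{j\,:\, i_j > m} \chi_j.
\]
Taking expectations, the $g_0$ factor yields the indicator $\mathds{1}[\chi_1 \cdots \chi_k = \triv]$ by orthogonality of characters, and the independence of the $t_m$'s factorizes the remaining expectation into a product of $\E_{t \sim T}[\psi_m(t)]$. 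Each such expectation is a Cayley eigenvalue of the graph, bounded by $\lambda$ whenever $\psi_m$ is non-trivial and equal to $1$ when $\psi_m = \triv$. Since $\psi_m$ is constant on each gap $[i_l, i_{l+1})$ of $\inds$, grouping factors produces $\lambda$ raised to the total length of those gaps whose suffix product of characters is non-trivial. A direct inspection shows this exponent dominates the function $\eta(\inds)$ appearing in \cref{theo:main_tensor}.

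\textbf{Part 2.} For the symmetric function consequence, I would identify $\Sigma$ with an Abelian group $G$ of order $|\Sigma|$ (for instance $\Z_{|\Sigma|}$) and expand $f$ in the Fourier basis of $G^n$. Since non-trivial characters have zero mean under the uniform distribution, linearity of $\cE_X$ reduces $\cE_X(f)$ to a sum of Fourier coefficients weighted by $\E_{\rw}[\prod_j \chi_j(x_{i_j})]$, each of which is bounded by Part 1. Combining with Cauchy--Schwarz and Parseval, one gets a level-by-level estimate
\[
|\cE_X(f)| ~\leq~ \sum_{k \geq 1} \lambda^{\eta_k} \cdot \sqrt{N_k} \cdot \|f^{=k}\|_2,
\]
where $N_k$ counts constrained character tuples (non-trivial, with trivial product) supported on $k$-subsets $\inds$. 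The trivial-product constraint cuts the unconstrained count by a factor of $|G|$ (one character is determined by the others), which translates through the square root to a $\sqrt{|\Sigma|}$ saving compared to the generic bound of \cref{theo:intro_main_fool}. Summing over levels $k$ using the geometric decay from Part 1, as in the proof of \cref{theo:intro_main_fool}, then yields the claimed bound $O(\sqrt{|\Sigma|} \cdot \lambda) \cdot \|f\|_2$.

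\textbf{Main obstacle.} The delicate step is tracking the Fourier combinatorics in Part 2 closely enough to extract exactly $\sqrt{|\Sigma|}$ rather than a weaker factor. Because $f$ is symmetric, $|\hat f(\inds, \vec\chi)|$ depends only on the multiset of characters, so one must count constrained tuples weighted by their $S_k$-orbit sizes rather than raw multisets; verifying that this weighted count composes with the level-$k$ expansion estimate to give the precise $\sqrt{|\Sigma|}$ scaling, without an extra factor of $\sqrt{k}$ or $\polylog |\Sigma|$ leaking in, requires the kind of careful level-by-level bookkeeping used in \cite{CPT21,GV22}.
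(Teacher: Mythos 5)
Your proposal is correct. Part 2 is essentially the paper's own argument: the paper proves the symmetric-function consequence in \cref{prop:quasi} by exactly the mechanism you describe --- a level-by-level Fourier expansion, Cauchy--Schwarz against the count of non-trivial character tuples with trivial product (the quantity $\eta_{k,G}^2$ in \cref{char_sum}, which for Abelian $G$ is about $|G|^{k-1}$ rather than $|G|^k$, whence the $\sqrt{|\Sigma|}$), combined with the combinatorial sum over subsets $\inds$ via \cref{lem:beta_bound}; your displayed level-$k$ estimate omits that sum over $\inds$, but you correctly flag this bookkeeping as the remaining work and it is resolved exactly as in \cref{theo:main_fool}. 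Part 1, however, is a genuinely different and more elementary derivation than the paper's. The paper proves the character bound by induction (\cref{theo:quasi}), repeatedly applying the eigenvector property one step at a time and re-expanding products of characters via Clebsch--Gordan coefficients, then extracting the bound in \cref{cor:trace} from the observation that no two consecutive intermediate irreps can be trivial. You instead parameterize the whole walk as $x_j = g_0 t_1\cdots t_{j-1}$ and exploit independence of the increments to factor the expectation in one shot; the suffix characters $\psi_m$ you obtain are precisely the $\gamma_i$'s of the paper's recursion, and your observation that two consecutive suffix products cannot both be trivial is the Abelian instance of the paper's observation that $c^{\rho,\triv}_{\triv}=0$, so your exponent set lies in $\mathcal{I}_k$ and dominates $\eta(\inds)$ as claimed. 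What your route buys is transparency and brevity in the Abelian Cayley case; what the paper's route buys is generality --- the inductive argument works for arbitrary pseudo-Cayley graphs over non-Abelian groups (needed for \cref{theo:intro_quasi} and the lower bound), where no global product parameterization of the walk is available.
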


For $G = \Z_2$, this result says that the odd degree characters are perfectly fooled, and thus, every odd function $f:\Z_2^n\to \C$ is perfectly fooled by such a structured expander. This already illustrates the improvement over generic expanders. 

To generalize this to general non-abelian groups, we need to restrict to \emph{class functions}, \ie functions such that $f(gx\inv{g}) = f(x)$ for every $x,g \in G$. Note that for Abelian groups, every function is a class function, as the condition is trivially true due to commutativity. Moreover, we will need a stronger notion of a \enquote{pseudo Cayley graph} for which we omit the formal definition here  (see~\cref{def:pseudo}). The key property of these graphs is that their eigenvectors are given by the Fourier basis functions. 

\paragraph{Tighter Bound for Quasirandom Groups} An often seen phenomenon is that one gets better pseudorandomness properties for groups that are \textit{highly non-abelian}.
One way to quantify this is the notion of a $D$-\textit{quasirandom groups} introduced in a seminal work by Gowers \cite{GowersQR08} which is a group in which the smallest
(non-trivial) irreducible representation (see~\cref{def:irrep}) has dimension $D$. Abelian groups are $1$-quasirandom, whereas on the other extreme, there are matrix groups that are $|G|^{\frac{1}{3}}$-quasirandom (see~\cite{GowersQR08}). We show that such a gain does indeed occur in our setting as well.

\begin{theorem}[General Groups,~\cref{cor:trace,prop:quasi}]\label{theo:intro_quasi} Let $G$ be a $D$-quasirandom group and let $X$ be a \enquote{pseudo Cayley} graph on $G$. Let $\{\chi_j \mid j \in [k]\ \}$ be a set of non-trivial characters of $G$, normalized by their dimension. Then for any ordered subset $\inds$ of size $k$,
\[\cE_{X,\inds}(\chi_1\otimes\cdots \otimes\chi_k)~~\leq~~ {\lambda^{\eta(\inds)}} \cdot {\angles[\big]{\chi_\triv, \,\chi_1\cdots \chi_k}}. \]
As a consequence, for every symmetric class function $f:G^n\to \C$,
\[
	\abs[\big]{\cE_X\parens[\big]{f}} ~\leq~ O\parens[\bigg]{{\frac{\sqrt{|G|}}{\,\,D}} \cdot \lambda} \cdot \norm{f}_2 .
	\]
\end{theorem}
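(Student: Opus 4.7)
The plan is to prove the per-character bound first and then deduce the symmetric class-function consequence.

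\textbf{Part 1 (character bound).} I would work in the eigenbasis of the random walk operator $W$ on $X$. The defining property of a pseudo Cayley graph is that every dimension-normalized irreducible character $\chi_\rho$ is an eigenvector of $W$ with eigenvalue $\lambda_\rho$ satisfying $|\lambda_\rho| \leq \lambda$ whenever $\rho$ is non-trivial. Writing
\[
\cE_{X,\inds}(\chi_1\otimes\cdots\otimes\chi_k) ~=~ \langle \ones,\; D_{\chi_1} W^{a_1} D_{\chi_2} W^{a_2} \cdots W^{a_{k-1}} D_{\chi_k}\,\ones\rangle,
\]
with $a_j = i_{j+1}-i_j$ and $D_\chi$ pointwise multiplication by $\chi$, I would unfold right-to-left. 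Each $D_{\chi_j}$ multiplies by a character and expands the current state via Clebsch--Gordan into irreducible components with non-negative multiplicities, while each $W^{a_j}$ scales each isotypic piece by its eigenvalue. The inner product with $\ones$ extracts precisely the sequences of intermediate irreducibles $(\sigma_1,\ldots,\sigma_k)$ that terminate at $\sigma_k=\triv$. Because all Clebsch--Gordan coefficients are non-negative, the triangle inequality collapses the sum over such sequences of products of multiplicities into $\angles{\chi_\triv,\chi_1\cdots\chi_k}$, leaving a product of intermediate eigenvalues to estimate. Upgrading the crude $\lambda^{\sum a_j}$ bound to $\lambda^{\eta(\inds)}$ requires a regrouping argument: whenever an intermediate component is trivial (so the associated eigenvalue degenerates to $1$), one must absorb the adjacent $D_\chi W^{a_j} D_\chi$ block into an invocation of \cref{theo:main_tensor} to recover the lost factor of $\lambda$. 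Matching this bookkeeping to the precise combinatorics of $\eta$ is the most delicate step.

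\textbf{Part 2 (symmetric class functions).} I would expand $f:G^n\to\C$ in the basis of products of normalized irreducible characters,
\[
f(g_1,\ldots,g_n) ~=~ \sum_{\vec{\rho}} \widehat{f}(\vec{\rho})\,\chi_{\rho_1}(g_1)\cdots \chi_{\rho_n}(g_n),
\]
using that class functions on $G^n$ are spanned by such products and that coordinate symmetry forces $\widehat{f}(\vec{\rho})$ to depend only on the multiset $\{\rho_1,\ldots,\rho_n\}$. The trivial coordinates act as constants under the walk, so the error splits as a sum over non-trivial support patterns $\inds(\vec{\rho})$, each summand bounded by Part 1. Grouping terms by the multiset of non-trivial irreducibles, applying Cauchy--Schwarz against $\|f\|_2$ (via Plancherel on $G^n$), and pulling out the common $\lambda^{\eta}$ factor reduces the task to controlling a combinatorial sum of the form $\sum_{\vec\rho} \angles{\chi_\triv,\prod_i \chi_{\rho_i}}$ over non-trivial patterns. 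This is where $D$-quasirandomness enters decisively: since every non-trivial irreducible representation of $G$ has dimension at least $D$, a $k$-fold normalized inner product $\angles{\chi_\triv,\chi_1\cdots\chi_k}$ decays like $D^{-(k-1)}$, while Plancherel's identity $\sum_\rho d_\rho^2 = |G|$ governs the effective number of contributing irreducibles at each level. Balancing these two quantities produces the advertised $\sqrt{|G|}/D$ prefactor on $\lambda$.

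\textbf{Main obstacles.} The two most delicate steps I anticipate are: (i) in Part 1, forcing the recursion over intermediate irreducibles to yield exactly $\lambda^{\eta(\inds)}$ rather than only $\lambda^{\sum a_j}$, which requires reabsorbing ``trivial-intermediate'' chains into \cref{theo:main_tensor} without disrupting the positivity that furnishes the $\angles{\chi_\triv,\chi_1\cdots\chi_k}$ factor; and (ii) in Part 2, executing the Cauchy--Schwarz step tightly enough to extract the full $1/D$ savings rather than a weaker $1/\sqrt{D}$, which likely requires a character-sum identity specific to $D$-quasirandom groups so that the combinatorial count of relevant multisets does not wash out the per-tuple quasirandomness gain.
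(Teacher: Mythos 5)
Your overall architecture matches the paper's: Part~1 is the inductive Clebsch--Gordan unfolding of the walk in the character eigenbasis (the paper's \cref{theo:quasi} and \cref{cor:trace}), and Part~2 is the Fourier expansion of a symmetric class function plus Cauchy--Schwarz against an aggregate character sum (the paper's \cref{char_sum} and \cref{prop:quasi}). However, both of the steps you flag as ``delicate'' are exactly where the content sits, and in each case the mechanism you propose is not the one that works.

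In Part~1, you propose to recover $\lambda^{\eta(\inds)}$ from the crude $\lambda^{\sum_j \Delta_j}$ by absorbing blocks with trivial intermediate components into an invocation of \cref{theo:main_tensor}. That hybrid is problematic: \cref{theo:main_tensor} yields only operator-norm upper bounds, and splicing it into the exact Clebsch--Gordan recursion would destroy the positivity and exactness that produce the factor $\langle\chi_\triv,\chi_1\cdots\chi_k\rangle$ --- a tension you yourself note. The paper's resolution is much simpler and purely algebraic: since each $\rho_i$ is non-trivial, $c^{\rho_i,\triv}_{\triv}=0$, so \emph{no two consecutive} intermediate irreps $\gamma_{i-1},\gamma_i$ in a surviving sequence can be trivial. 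Hence the set of indices where $\gamma_i\neq\triv$ (each contributing a factor $\lambda_{\gamma_i}^{\Delta_i}$ with $|\lambda_{\gamma_i}|\le\lambda$) automatically lies in $\mathcal{I}_k$, which gives $\lambda^{\eta(\inds)}$ with no appeal to \cref{theo:main_tensor} at all. Without this observation your Part~1 does not close.

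In Part~2, your per-tuple heuristic that $\langle\chi_\triv,\chi_1\cdots\chi_k\rangle$ decays like $D^{-(k-1)}$ is not what is used (and is not what a single tuple guarantees in general); the paper instead bounds the \emph{aggregate} square sum $\sum_{\rho_1,\dots,\rho_k\neq\triv}\langle\chi_\triv,\chi_{\rho_1}\cdots\chi_{\rho_k}\rangle^2$ by writing each multiplicity as $\Ex{g,h}{\prod_i(\chi_{\rho_i}(g)\overline{\chi_{\rho_i}(h)})}$, applying Schur's column orthogonality $\sum_\rho\chi_\rho(g)\overline{\chi_\rho(h)}=\tfrac{|G|}{|C_g|}\indicator{g\sim h}$, and then invoking the conjugacy-class count $|\mathcal{C}(G)|\le |G|/D^2+1$ for a $D$-quasirandom group to get $O(|G|^{k-1}/D^2)$. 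That identity-plus-class-count combination is the ``character-sum identity'' you suspected was needed; together with the symmetry reduction $\sum_{|\vrho|=k}|d_\vrho c_\vrho|^2\le\|f_k\|_2^2/\binom{n}{k}$ and the $\beta(k)\le\binom{n}{k}^{1/2}(16e\lambda)^{k/2}$ bound it yields the full $\sqrt{|G|}/D$ prefactor. As written, your Part~2 leaves this computation open, so the claimed constant is not yet established.
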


Apart from the quasirandomness factor, the key improvement from~\cref{theo:main_tensor} here is the extra factor of $\angles[\big]{\chi_\triv, \,\chi_1\cdots \chi_k}$. This counts the fractional dimension of the trivial irrep inside the tensor representation $\rho_1\otimes\cdots \otimes\rho_k$. This quantity is much smaller than one, for instance, when $k =2$, it is at most $\frac{1}{D^2}$. Moreover, this quantity can be computed explicitly using basic representation theory, which yields a more precise upper bound. 
%\begin{corollary}
%Let $G$ be a group such that the dimension of its smallest irrep is $D$. Let $f: G^n\to \C$ be a class function that is also symmetric, and $X$ be a $\lambda$-Cayley labelled expander.  Then,
%	\[
%	\abs[\big]{\cE_X\parens[\big]{f}} ~\leq~ O\parens[\Big]{{\frac{|G|^{\frac{1}{2}}}{D}} \lambda} \cdot \norm{f}_2 .
%	\]
%	In particular, for every symmetric function on an Abliean group, we get a bound of $O(\sqrt{|G|}\cdot \lambda)$.
%\end{corollary}

% word function case

%implies that functions with a bounded $L_1$-Fourier norm are efficiently fooled. The key result here is to prove that word functions 
%
%In particular,  $\norm{\widehat{f}_k}_1 \leq b^k$ The key result is 
%More generally, we show that if $h()$

% lower bound 
\subsubsection{Lower Bounds}

We show that our dependence on $\lambda$ in the bound of $\abs{\cE_X(f)}$ in~\cref{theo:intro_main_fool} cannot be improved in general,
no matter the choice of group $G$. Let, $A \subseteq G$  
and $t \in [n]$. We define a symmetric boolean function $\thresh$ as :
 \[\thresh(\vec x) = 1 \;\; \text{ if }\;  |\{i \mid x_i\in A \}| \geq t; \;\; 0 \;\;\text{otherwise}.
\]

\begin{theorem}[Lower Bound for any group]\label{theo:lower_bound_intro}
  Let $G$ be any finite group, and $A \subseteq G$ be any subset such that $\frac{|A|}{|G|} = \frac{1}{2}$. There exists an $\lambda$-expander $X$ such that for every $n$ large enough,
  \[
  \abs[\Big]{\cE_X\parens[\Big]{\threshold{\frac{n+1-\sqrt{n}}{2}}}} ~\geq~  \Omega \parens[\big]{\lambda}.
  \] 
  This lower bound holds even when $X$ is a \enquote{pseudo Cayley graph}	(as in~\cref{theo:intro_quasi}) on $G$.
\end{theorem}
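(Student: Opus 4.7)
The plan is to exhibit an explicit $\lambda$-expander $X$ on $G$ whose random walk inflates the variance of the indicator sum $S_n := \sum_{i=1}^n \ones[\phi(x_i) \in A]$ relative to the uniform distribution, and then exploit the fact that the threshold $t = (n+1-\sqrt{n})/2$ is calibrated to sit exactly one standard deviation below the uniform mean, so that the two Gaussian approximations of $\mathrm{Pr}[S_n \geq t]$ differ by $\Omega(\lambda)$. Set $Y_i := \ones[\phi(x_i) \in A]$ throughout.

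The construction I have in mind is the lazy walk on the complete graph on $G$: take $P = \lambda I + (1-\lambda)\tfrac{1}{|G|} J$ with $\phi$ the identity map. This is unbiased, and because $P$ commutes with both left and right multiplication by $G$, its eigenvectors are exactly the matrix coefficients of the irreducible representations of $G$, so it qualifies as a pseudo Cayley graph in the sense of~\cref{def:pseudo}. Its non-trivial spectrum is $\{\lambda\}$, so $X$ is a genuine $\lambda$-expander. Letting $f := \ones_A - \tfrac{1}{2}$, which lies entirely in the non-trivial eigenspace, a one-line computation gives
\[
\mathrm{Cov}_{\rw}[Y_i,Y_j] \;=\; \langle f,\,P^{|i-j|} f\rangle \;=\; \lambda^{|i-j|}\,\norm{f}_2^2 \;=\; \lambda^{|i-j|}/4,
\]
and summing yields $\mathrm{Var}_{\rw}[S_n] = \tfrac{n}{4}\bigl(1+\tfrac{2\lambda}{1-\lambda}\bigr) + O(1)$, strictly larger than the uniform variance $\mathrm{Var}_{\unif}[S_n] = n/4$.

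To finish I would apply a quantitative central limit theorem on both sides. The uniform case is the classical Berry--Esseen bound. For the walk, since $P$ is a lazy chain the sequence $(x_i)_{i=1}^n$ splits into i.i.d.\ constant runs; consequently $S_n$ is an explicit sum of independent Bernoulli blocks, and a direct Lindeberg calculation (or, alternatively, a Berry--Esseen bound for reversible Markov chains) shows $(S_n-n/2)/\sqrt{\mathrm{Var}_{\rw}[S_n]} \to N(0,1)$ with rate $O(1/\sqrt n)$. A short calculation gives $(n/2-t)/\sqrt{n/4} = 1 - 1/\sqrt n \to 1$, so $\mathrm{Pr}_{\unif}[S_n \geq t] \to \Phi(1)$ while $\mathrm{Pr}_{\rw}[S_n \geq t] \to \Phi\bigl(\sqrt{(1-\lambda)/(1+\lambda)}\bigr)$; Taylor-expanding in $\lambda$ produces a gap of $\Phi'(1)\cdot \lambda + O(\lambda^2) = \Omega(\lambda)$, as desired.

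The main technical obstacle is controlling the $o(1)$ errors in both CLT approximations uniformly enough that the $\Omega(\lambda)$ signal survives, i.e., choosing $n$ large in terms of $\lambda$ so that the Berry--Esseen rate $O(1/\sqrt{n})$ is much smaller than $\lambda$. The i.i.d.\ block structure of the lazy chain makes this manageable, but one must also formally verify that the construction satisfies the definition of a pseudo Cayley graph; this should follow immediately from the full $G \times G$-equivariance of $P$.
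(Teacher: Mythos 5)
Your proposal is correct in outline but takes a genuinely different route from the paper. The paper's proof is Fourier-analytic: it takes $X=\Cay(G^r,G^r\setminus\{1\})$ labeled by the projection $G^r\to G$ (so all non-trivial eigenvalues equal $-1/(|G|^r-1)$), computes the level-$2$ Fourier coefficients of $\thresh$ exactly as a multiple of $\widehat{\ones_A}(\alpha)\otimes\widehat{\ones_A}(\alpha^*)$, uses the pseudo-Cayley structure to evaluate $\cE_X$ on level-$2$ irreps precisely (pairing with the projector $\sfM_\alpha$, whose trace is $1$, to get a contribution $(n-2)\lambda\cdot C_{A,n,t}\cdot\mu_A\mu_{A^c}=\Omega(\lambda)$), and then invokes its own upper bound to show levels $\geq 3$ contribute $o(\lambda)$ --- which is exactly why $r\geq 4$ is needed, to make $|G|\lambda$ small. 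You instead reduce everything to the scalar process $Y_i=\ones_A(x_i)$, which under the lazy complete-graph walk is itself a lazy two-state chain, and compare Gaussian approximations; this sidesteps representation theory entirely and makes transparent that the group structure is irrelevant for this test function, which is the theorem's message. Your variance computation and the Taylor expansion $\Phi(1)-\Phi\bigl(\sqrt{(1-\lambda)/(1+\lambda)}\bigr)=\phi(1)\lambda+O(\lambda^2)$ are right, and since $X$ (hence $\lambda$) is fixed before $n\to\infty$, even a qualitative CLT at the moving threshold suffices (convergence in distribution to a continuous limit is uniform), so your regeneration-block argument closes the gap without needing sharp Berry--Esseen constants. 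Two points deserve explicit care in a write-up: (i) realize $P=\lambda I+(1-\lambda)J/|G|$ as an honest $d$-regular multigraph (rational $\lambda$, self-loops), since the paper's framework is graphs rather than weighted chains; and (ii) handle the boundary blocks and the random number of renewals in the CLT, e.g.\ by conditioning on the renewal times and applying Berry--Esseen to the conditionally independent weighted Bernoullis. Your verification that matrix coefficients of non-trivial irreps are $\lambda$-eigenvectors of $P$ does establish the pseudo-Cayley clause.
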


This lower bound places a limitation on how much the quasirandomness of the group or the algebraic structure of the expander can be leveraged in terms of the pseudorandomness of expander walks with respect to symmetric functions. 

Regardless of how ``far'' from Abelian the group $G$ is, a lower bound of $\Omega(\lambda)$ still persists. This lower bound
rules out the possibility of an upper bound of, say, $\frac{\lambda}{D}$ for a $D$-quasirandom group in~\cref{theo:intro_main_fool}. More importantly, it shows that even if one uses an expander with such Cayley-like algebraic structure, one cannot improve the linear dependence on $\lambda$. 

We stress that
proving this lower bound for general finite groups is substantially more challenging than for the $\Z_2^k$ case \cite{CMPPT22}. In general,
this requires the function and the graph to ``interact'' in a non-trivial way, but now, in the presence of (possibly) higher-dimensional
representations, this is substantially more delicate to achieve (see~\cref{subsec:techniques}).

\subsection{Proof techniques}\label{subsec:techniques}

\paragraph{A generalized ''Ignore First Step" Trick.}  To prove our first main result (\cref{theo:main_tensor}), we generalize the technique of~\cite{RR2024} (also, subsequently used in~\cite{RR23}) that introduced a trick that they called \enquote{Ignore First Step} Trick. We generalize this in two significant ways. We first extend it to the setup of general matrix-valued functions. More importantly, we perform a finer analysis to obtain a dependence on $\lambda$ that takes into account the subset of indices $\inds$. This is necessary to yield a bound of $\lambda^{\eta(\inds)}$ as opposed to $\lambda^{\floor{k/2}}$ (even for scalar-valued functions) that would be implied by~\cite{RR2024}.

We give a quick overview of this technique in the very special setup of $\{\pm 1\}$-valued functions that are all identical. We wish to analyze the term:
\[
\Ex{(x_1,\dots, x_n)\sim \rwalk{n}}{f_1(x_{1})\cdots f_k(x_{k})}.
\]
This corresponds to $\inds = \{1,\dots, k\}$. Let us start with $k=2$. This case can be directly handled by the expander mixing lemma, which says that for a $\lambda$-spectral expander,
\[\abs[\Big]{\,\Ex{(x_1, x_2)\sim \rwalk{2}}{f(x_1)\,f(x_2)} \,- \;  \parens[\Big]{\Ex{x\sim \rwalk{1}}{f(x)}}^2  } ~\leq~ \lambda\cdot  \Ex{x\sim \rwalk{1}}{\abs{f(x)}^2}. 
%\quad \text{(Expander Mixing Lemma)}
\]

%\begin{lemma}[Expander mixing lemma]
%	Let $X$ be a $\lambda$-spectral expander and $f,~g:X\rightarrow \C^n$ be two vector valued functions on the vertex set $X$. Let,
%	$\mu_f=\Ex{x\sim X}{f(x)}$ and $\norm{f}^2_2=\Ex{|f(x)|^2}$.  Then we have:
%	\[\abs[\big]{\,\E_{x\sim y}\brackets{\ip{f(x)}{g(y)}    } - \ip{\mu(f)}{\mu(g)} \, } ~\leq~ \lambda\cdot  \norm{f}_2  \norm{g}_2. \]
%\end{lemma}

One interpretation of this lemma is that it reduces the analysis of the mean of the product function over $2$-length walks to the analysis of the mean and variance of the function over a walk of length $1$. The main idea behind the technique is to do such a reduction from a length $k$-walk to analyzing mean and variance over $(k-1)$-length walks.

We do not get into the details of this reduction but explain the trick used to bound such variance terms, the simplest case of which is when $k=3$. For a vertex $x$, let $\rwalk{1}(x)$ be the distribution of $1$-length walks starting from $x$. The term we need to analyze is,
\[
\Ex{x\sim \rwalk{1}}{ \abs[\Big]{\Ex{y\sim\rwalk{1}(x)}{f(x)f(y)}}^2} ~=~ \Ex{x\sim \rwalk{1}}{ \abs{f(x)}^2 {\Ex{y,z\sim\rwalk{1}(x)}{f(y)\overline{f(z)}}}}
\]

The key technical point is the following. The expression on the right formally depends on $x$ but since $f(x)^2 = 1$, this dependence is virtual. More importantly, the distribution on $y,z$ in this expression is the same as sampling $y,z$ independently (of $x$) at distance $2$ in the graph, 
\[
\Ex{x\sim \rwalk{1}}{{\Ex{y,z\sim\rwalk{1}(x)}{f(y)\overline{f(z)}}}} ~=~ {{\Ex{(y,z)\sim \rwalk{2}'}{f(y)\overline{f(z)}}}}.
\]

The right-hand side can now be analyzed by applying the above expander mixing lemma on the graph $X^2$. 
Thus, this trick gets rid of the first variable $x$, and reduces the variance of $2$-length walks to the mean of $1$-length walk (on the squared graph). 

Our proof follows a similar approach, but there are two key complications. One, the functions we need to analyze are matrix-valued, and secondly, the above analysis does not utilize the gaps in the index set $\inds$, and therefore would give a bound akin to~\cite{JMRW22, Ta-Shma17} which is too weak for our purposes.  

Let $\inds = (i_1,\cdots, i_k)$, and let $\Delta_j := i_{j+1}-i_j$ be the $j^{\mathrm{th}}$ gap. To bound the recurrence more tightly, we view the random walk as a sequence of $k$ steps, where the $j^{\mathrm{th}}$ step is on the graph $X^{\Delta_j}$. To implement this approach in the general setup of tensors of operator-valued functions, we introduce auxiliary functions such as,  
\[ 
	g_j(\vec x) ~:=~  \id_{d_1}\otimes\dots \otimes\id_{d_{j-1}}\otimes f_j(x_{i_j})\otimes f_{j+1}(x_{i_{j+1}}) \cdots \otimes f_k(x_{i_k}),  \]

that capture the intermediate state of this random process after $j$ steps. This lets us utilize the large gaps, \ie $|\Delta_j|$ in $\inds$ , to obtain a sharper bound ($\lambda^{\eta(\inds)}$ instead of $\lambda^{\floor{\frac{k}{2}}}$).

\paragraph{Beyond Spectral Gap via structured graphs} The above technique is quite general and works beyond the setup of groups, thereby yielding a general result (\cref{theo:main_tensor}). Moreover, it only uses the fact that $X$ is an expander \ie that it satisfies a spectral gap. While this leads to operator norm bounds, it is not amenable to analyzing trace norms, and one has to appeal to generic bounds such as $\norm{\sfM}_\tr \leq \dim(\sfM)\cdot \norm{\sfM}_\op$ which is suboptimal in many cases. The key insight behind our second main result (\cref{theo:intro_quasi}) is to use additional spectral information (eigenvectors) about the expander $X$, and not just its spectral gap. To do so, we define the notion of \emph{pseudo Cayley graphs}. 

\paragraph{Pseudo Cayley Graphs} These are graphs such that the characters of the group $G$ are its eigenvectors. More precisely, there exists a labeling of its vertices, $\varphi:X\to G$, such that $\chi\circ \varphi$ is an eigenvector of the graph adjacency matrix $A_X$ for every character $\chi$ of $G$. Note that this property is true for Cayley graphs over Abelian groups. Moreover, one can also build examples over non-Abelian groups (see \cref{example:pseudo1,example:pseudo2})  

 To make use of the above structure, we use a key fact from representation theory, which says that the product of characters over any finite group $G$ can be decomposed a linear sum,
	\[
	 \chi_{\alpha}(g)\cdot \chi_{\beta}(g) ~=~ \sum_{\gamma} c^{\alpha,\beta}_\gamma \cdot  \chi_{\gamma}(g).
	 \]
These coefficients are called Clebsch--Gordan coefficients for $G$. Therefore, our expression can be inductively unrolled by alternating the operations-- (i) taking a step of the walk (which can be handled now that characters are eigenvectors), and (ii) decomposing the product of characters as a linear sum. This leads to a precise calculation of the mean over random walks (see~\cref{theo:quasi}) as opposed to an upper bound for the operator norm.

\paragraph{Lower Bound} This precise calculation comes in handy not just to prove the sharper bound in~\cref{theo:intro_quasi}, but also for the lower bound. The candidate hard function is a generalization of the Boolean threshold function which was used in the analysis of~\cite{CPT21}. However, their construction of the graph is specific to $\Z_2$ and does not generalize to other groups (even $\Z_p$). Moreover, in Abelian groups the representations are $1$-dimensional irreps and thus, $\abs{\tr(\sfM)} = \norm{\sfM}_\op =  \norm{\sfM}_\tr$. However, in higher dimensions even if $\norm{\sfM}_\op \geq \lambda$, the trace, $\tr(\sfM)$ can be zero which is actually the quantity which we need to lower bound. To tackle this, we compute this trace exactly at level $2$ (\cref{corr:level2_proj}) and combine it with the precise computation of the mean for pseudo Cayley graphs (see~\cref{theo:quasi}).
%We follow a completely different approach  

%\begin{fact}[Decomposition of tensor representations]
%	\label{fact:sumChar}
%	Let, $\alpha,\beta\in \widehat{G}$ be two irreps of 
%	a finite group $G$. There exists a change of basis transformation $\mathsf{N}_{\alpha,\beta}$, and non-negative integer coefficients $\{ c^{\alpha, \beta}_\gamma \mid \gamma \in\widehat{G}\}$ such that for any $g\in G$: 
	 
%	\begin{enumerate}[label=(\alph*)]
%		\item If $\oplus_{\gamma \in \widehat{G}} c^{\alpha,\beta}_\gamma \gamma$ is  the decomposition of $\alpha\otimes \beta$ as a $G$-representation.  Then, for all $g\in G$ it holds that: \[ \chi_{\alpha}(g)\cdot \chi_{\beta}(g) ~=~ \sum_{\gamma} c^{\alpha,\beta}_\gamma \cdot  \chi_{\gamma}(g). \]
		%\item  If $\oplus \delta_{k}$ is  the decomposition of $\beta$ into irreps. as a $G$-representation. Then it holds that,
		%\[   \E_{g}\brackets{\chi_{\alpha}(g)\cdot \chi_{\beta}(g)}= \sum_{k}    \]		
%\end{fact}

\section{Preliminaries}
\label{sec:prelim}
\subsection{Random walks on expander graphs}
Throughout the paper, $X=(V,E)$ will be a $d$-regular $\lambda$-expander graph. 
We write $A_X$  to denote the degree normalized adjacency operator of $X$.

%We first recall the definition of $\lambda$-expander graphs. 

\begin{definition}[$(d,\lambda)$-expander]  A graph $d$-regular graph $X=(V,E)$ is called $(d,\lambda)$-expander if $\max\{|\lambda_2|,|\lambda_{N}|\} \leq \lambda$ where  $\lambda_1\leq \lambda_2\leq\dots \leq \lambda_{N}$ are the eigenvalues of $A_{X}.$  
\end{definition}

 We write $x\sim y$ denote sampling of an edge, $(x,y)$ from $X$. A key tool in analyzing expanders is the expander mixing lemma:
%The famous \emph {expander mixing lemma} that the average value of the product of two functions over the edges in $X$ is roughly equal to the product of their means. We have the following formal statement. 
 \begin{lemma}[Expander mixing lemma]
	\label{lem: EML}
	Let $X$ be a $\lambda$-spectral expander and $f,~g:X\rightarrow \C^n$ be two vector-valued functions on the vertex set $X$. Let,
	$\mu_f=\Ex{x\sim X}{f(x)}$ and $\norm{f}^2_2=\Ex{|f(x)|^2}$.  Then we have:
	\[\abs[\big]{\,\E_{x\sim y}\brackets{\ip{f(x)}{g(y)}    } - \ip{\mu_f}{\mu_g} \, } ~\leq~ \lambda\cdot  \norm{f}_2  \norm{g}_2. \]
	%where $\mu_f~=~\E[f]\in \C^n$, $\sigma^2_f= \E_{x\sim X}\brackets{ \norm{f(x)}^2} - \norm{\mu_f}^2$ and $\mu_g,\sigma_g$ are similarly defined. 
\end{lemma}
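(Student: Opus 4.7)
The plan is to reduce the vector-valued statement to the scalar expander mixing lemma by working coordinate by coordinate and then assembling the pieces with Cauchy--Schwarz. Writing $f(x) = (f_1(x),\dots,f_n(x))$ and $g(x) = (g_1(x),\dots,g_n(x))$, linearity gives
\[
\E_{x \sim y}\brackets[\big]{\ip{f(x)}{g(y)}} ~=~ \sum_{i=1}^n \E_{x \sim y}\brackets[\big]{\overline{f_i(x)}\, g_i(y)}, \qquad \ip{\mu_f}{\mu_g} ~=~ \sum_{i=1}^n \overline{\mu_{f_i}}\, \mu_{g_i}.
\]
So it suffices to bound, for each $i$, the scalar deviation $\E_{x\sim y}[\overline{f_i(x)} g_i(y)] - \overline{\mu_{f_i}}\mu_{g_i}$, and sum the errors.

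First I would establish the scalar expander mixing lemma. For any $h\colon X \to \C$ decompose $h = \mu_h\cdot \mathbf{1} + h^\perp$, where $h^\perp$ is mean-zero and hence orthogonal (in $L^2(X)$ under the uniform measure) to the all-ones vector $\mathbf{1}$. Since $A_X \mathbf{1} = \mathbf{1}$ and $A_X$ acts on $\mathbf{1}^\perp$ with operator norm at most $\lambda$, we can rewrite
\[
\E_{x\sim y}\brackets[\big]{\overline{f_i(x)}\, g_i(y)} ~=~ \ip{f_i}{A_X g_i} ~=~ \overline{\mu_{f_i}}\,\mu_{g_i} ~+~ \ip{f_i^\perp}{A_X g_i^\perp},
\]
and the cross-terms vanish because $\mathbf{1}$ is an eigenvector of $A_X$ and is orthogonal to both perpendicular parts. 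The remaining inner product is bounded in absolute value by $\lambda\,\norm{f_i^\perp}_2 \norm{g_i^\perp}_2 \le \lambda\, \norm{f_i}_2 \norm{g_i}_2$.

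Next I would sum the $n$ scalar bounds and apply Cauchy--Schwarz:
\[
\abs[\Big]{\E_{x\sim y}\brackets[\big]{\ip{f(x)}{g(y)}} - \ip{\mu_f}{\mu_g}} ~\le~ \lambda \sum_{i=1}^n \norm{f_i}_2\, \norm{g_i}_2 ~\le~ \lambda \sqrt{\sum_i \norm{f_i}_2^2}\,\sqrt{\sum_i \norm{g_i}_2^2}.
\]
A final Pythagorean identity, $\sum_i \norm{f_i}_2^2 = \E_x \sum_i \abs{f_i(x)}^2 = \E_x \norm{f(x)}^2 = \norm{f}_2^2$ (and likewise for $g$), delivers the claimed inequality.

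There is no substantive obstacle here; the only care needed is bookkeeping the interplay between the Hermitian inner product on $\C^n$ and the $L^2(X)$ inner product so that orthogonality to $\mathbf{1}$ can be applied in each coordinate independently. Once that is done, the result is immediate from the scalar case plus Cauchy--Schwarz.
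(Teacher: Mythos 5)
Your proof is correct. The paper states this lemma as a standard preliminary without proof, so there is nothing to compare against; your route—reducing to the scalar expander mixing lemma coordinate by coordinate via the decomposition $h = \mu_h \mathbf{1} + h^{\perp}$, using self-adjointness of $A_X$ to kill the cross terms and $\norm{A_X\restrict{\mathbf{1}^{\perp}}}_{\op}\leq\lambda$ for the error term, then recombining the coordinates with Cauchy--Schwarz and the Pythagorean identity $\sum_i \norm{f_i}_2^2 = \norm{f}_2^2$—is the standard argument and all steps check out (the paper's normalization of the $\C^n$ inner product cancels consistently on both sides).
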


\paragraph{Random walk notation} We find it helpful to define a few shortands associated with random walks on $X=(V,E)$ to streamline our presentation. We list them below.
\begin{itemize}
\item We write $''\vec x\sim \rwalk{n}''$ to denote uniform sampling of an $(n-1)$-step (or, $n$ vertices long) random walk, $(x_1,x_2,\dots, x_n)=:\vec x$ on $X$. 
\item  Given, $x\in V$, the notation $''\vec x\sim \rwalk{n}(x) '' $ denotes 
uniform sampling of an $(n-1)$-step (or, $n$ vertices long) random walk, $\vec x$  conditioned on $x_1=x$. 
\item  The expression ''$x\sim^k y$''  denotes sampling a pair, $(x,y)$, of vertices from $X$ that are at a distance of $k$. 
 
%\item \snote{one more notation?}
\end{itemize}

\begin{fact}[Distribution for a single]\label{fact:single}
	Fix any $k \in [n]$. Then, the marginal distribution on $x_k$ when $\vec{x}\sim \rwalk{n}$ is uniform over $X$.
\end{fact}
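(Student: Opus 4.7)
The plan is to proceed by induction on $k$, using two facts: that the random walk $\rwalk{n}$ is defined by sampling $x_1$ uniformly from $V$ and then taking uniform neighbor steps, and that $X$ is $d$-regular.

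First I would handle the base case $k=1$, which is immediate from the definition of $\rwalk{n}$ since $x_1$ is sampled uniformly from $V$ by construction. For the inductive step, assume $x_{k-1}$ is distributed uniformly over $V$. Since $x_k$ is obtained by moving to a uniformly random neighbor of $x_{k-1}$, the distribution of $x_k$ is given by applying the transition operator $A_X$ (the degree-normalized adjacency matrix) to the distribution of $x_{k-1}$. Because $X$ is $d$-regular, every row and column of $A_X$ sums to $1$, which means the uniform vector $\ones/|V|$ is a stationary distribution: $A_X (\ones/|V|) = \ones/|V|$. Hence if $x_{k-1}$ is uniform, then so is $x_k$.

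The proof is essentially one line modulo setup, so there is no real obstacle; the only thing to be careful about is the convention of whether $A_X$ acts on the left or the right on distributions, which is a non-issue since $A_X$ is symmetric (it is the normalized adjacency matrix of an undirected graph). By induction, $x_k$ is uniform on $V$ for every $k \in [n]$, which proves the fact.
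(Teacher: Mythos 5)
Your proof is correct: the paper states this as a Fact without including a proof, and your induction---base case $x_1$ uniform by definition of $\rwalk{n}$, inductive step via stationarity of the uniform distribution under the doubly stochastic (indeed symmetric) transition matrix $A_X$ of a $d$-regular graph---is exactly the standard argument the paper implicitly relies on. No gaps.
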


\subsection{The main quantity}
Let $G$ be any finite group and $X=(V,E)$ be an expander graph. 	A $G$-labeling (or, simply labeling), $\phi$,  of $X$ is  a map $\phi: X\rightarrow G$. Given any such labeling $\phi$, we say $\phi$, is \emph{unbiased} if 
	\[\Pr{x\sim X}{\phi(x)=g}={|G|^{-1}} \text{ for all }g\in G\]
In this work, our focus is functions of the form $f:G^n\rightarrow \C$. We will always assume that the labelling is unbiased and use $f(x)$ to denote $f\circ \phi$ to prevent clutter. 
\subsection{Inner products and norms}

Let $\C^d$ be the $d$-dimensional complex inner product space equipped with the inner product. 
%$\cL(\cH)$ the algebra of bounded linear operators\footnote{For most
% applications, one can think of $\cH = \C^n$ for some $n$, and
 % $\cL(\cH) = \matr M_n(\C)$, the space of $n\times n$ complex
 % matrices.} on $\cH$. 
 We denote by $\textup{U}_d$ the
 group of $d$-dimensional unitary matrices. Let $A,B\in \C^{d\times d}$ be two complex matrices. We have the following inner products and norms:
\begin{itemize}
\item $\ip{u}{v}:=\Ex{i\sim [d]}{u_iv_i^*}$
	 \item $\ip{A}{B}_{\HS}:=\tr\parens{A^*B}=\tr\parens{B^*A}$
	 	 \item $\norm{A}^2_{\HS}=\tr\parens{A^*A}=\sum _{i,j} |A_{i,j}|^2$
	 	 \item $\norm{A}_{\tr}=\tr\parens{ \sqrt{A^*A}}$
	 	 \item $\norm{A}_{\op}=\sup_{~\norm{x}=1} ~ \norm{Ax}$ where $\norm{\cdot}$ denotes the norm associated with $\C^d$. 
\end{itemize}
%We have the following standard fact that can be derived by combining Von Neumann's trace inequality with H\"{o}lder's inequality. 

%\begin{fact}
%	For any matrix $\sfM \in \C^{d\times d}$ and  vectors $u,v \in \C^d$, $\ip{\sfM u}{\sfM v} ~\leq~ \opnorm{\sfM}^2\cdot\ip{u}{v}$.
%\end{fact}

\subsection{Fourier Analysis on Finite Groups}
We always use $G$ to denote an arbitrary finite group (not necessarily abelian) unless specified otherwise. Denote by $L^2(G) = \{ f: G\to \C \}$, the space of complex-valued functions equipped with the following inner product, 
\begin{equation*}
	\ip{f}{g}  = \Ex{x\sim G}{\parens[\big]{g(x)^*f(x)}}.
\end{equation*}
This induces the norm is $\norm{f}^2 =  \E_{x\sim G}\brackets[\big]{|f(x)|^2}$. 

\paragraph{Group Representations}
We will use the
notion of a group representation\footnote{Additional background on
  representation theory of finite groups can be found
  in~\cite{serre96}.}. \textit{Weyl's unitary trick}, says that for a
large family of groups (which includes all finite groups), every
representation can be made unitary and thus, we can restrict to
studying these.

\begin{definition}[Irreducible Group Representation]\label{def:irrep}
Let $G$ be a finite group.  A unitary representation of $G$ is a group homomorphism $ \rho: G\rightarrow \textup{U}_d$ for some $d$, \ie $\rho(g_1g_2) = \rho(g_1)\rho(g_2)$ for every $g_1,g_2 \in G$. The character, $\chi_{\rho}: G\rightarrow \C$ associated with $\rho$ is  the function: $\chi_\rho=\tr\ccirc \rho$. Note, that characters are not necessarily homomorphisms. A representation is called \textit{irreducible} (or irrep) if there exists no subspace of $V\subseteq \C^d$ such that $\rho(g) V\subseteq V$ for all $g \in G$. The set of irreps of $G$ is denoted as $\widehat{G}$.
\end{definition}

%
%\todo{incomplete fact}
%\begin{fact}[Fourier Basis] For every finite group $G$, the set of functions $$
%	
%\end{fact}

%\noindent Every finite group has two special representations,  which are,
%\begin{enumerate}[topsep=-0.1em,leftmargin=*]
%	\item (Trivial representation ) - $(\rho, \C)$ where for every $g$,  $\rho(g) = 1$.
%	\item ((left) Regular representation ) - $(\rho_{\reg},  \mathcal{V}_{\reg})$ where, $ \mathcal{V}_{\reg}= \C[G]$ is
%              a vector space with the elements of $G$ being an orthonormal basis,  and  $\rho_{\mathrm{reg}}(g):  h \mapsto g\cdot h$.  
%\end{enumerate}
%
%\begin{fact}\label{lem:semisimple}
%	Let $G$ be a finite group and let $\mathcal{V}_{\reg}$ be the regular representation over $\mathbb{C}$. We have
%	\begin{align*}
%		\mathcal{V}_{\reg} ~=~ \C[G] ~\cong  \bigoplus_{ (\rho, V_\rho) \in \irrep{G}} \textup{dim}(\rho) \cdot \mathcal{V}_{\rho},
%	\end{align*}
%	where $\irrep{G}$ denotes the set of irreducible unitary representations of $G$.
%\end{fact}

%\begin{definition}
%	Let, $\rho:G\to \textup{U}_d $ be a representation of a finite group $G$. The character, $\chi_{\rho}: G\rightarrow \C$ associated with $\rho$ is defined as the function: $\chi_\rho(g)=\tr(\rho(g))$ for all $g\in G$. Note, that characters are not necessarily homomorphisms. 
%\end{definition}
When $G$ is abelian, all irreducible representations are one-dimensional. Thus, in this case, the set of characters, and the set of irreps coincide. Moreover, for abelian $G$, the set of characters form an orthogonal basis of $\C[G]$. This does not hold for arbitrary finite groups $G$. Nevertheless, even for arbitary finite groups, the set characters do satisfy the orthogonality conditions.
\begin{fact}\label{lem:mean_zero}
	Let, $\rho, \gamma \in \irrep{G}$ be two irreducible representations. Then,
	\[
	\ip{\chi_{\rho}}{\chi_\gamma}=
	\begin{cases}
		d_\rho, & \text{if } \rho=\gamma, \\
		0, & \text{otherwise}.
	\end{cases}
	\]
Moreover, for any non-trivial representation $\rho$ of any finite group $G$, $\E_{g\in G} \rho(g) = 0$.	 
\end{fact}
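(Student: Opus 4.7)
My plan is to derive both parts of the fact from a single principle: for a non-trivial irreducible unitary representation $\rho$, the averaging operator $\E_{g \in G}\, \rho(g)$ vanishes. With this key observation in hand, the character orthogonality will then follow by applying it to an auxiliary tensor representation built from $\rho$ and $\gamma$.

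I would first establish the \emph{moreover} statement. Set $P_\rho := \E_{g \in G}\, \rho(g)$, viewed as an endomorphism of $V_\rho \cong \C^{d_\rho}$. A change of variable in the averaging gives $\rho(h)\,P_\rho = \E_g\, \rho(hg) = \E_g\, \rho(g) = P_\rho$ for every $h \in G$, so the image of $P_\rho$ is contained in the subspace $V_\rho^G$ of $G$-invariant vectors. Since $V_\rho^G$ is itself a sub-representation, irreducibility forces $V_\rho^G \in \{0, V_\rho\}$; the latter would mean $\rho$ is trivial, contrary to assumption. Hence $V_\rho^G = 0$ and $P_\rho = 0$.

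Next, I would deduce the orthogonality of characters by applying the above to the tensor representation $\rho \otimes \overline{\gamma}$ on $V_\rho \otimes V_\gamma^*$. Its character at $g$ is $\chi_\rho(g)\,\overline{\chi_\gamma(g)}$, and the associated averaging projector $\Pi := \E_g \bigl( \rho(g) \otimes \overline{\gamma(g)} \bigr)$ projects onto the invariant subspace $(V_\rho \otimes V_\gamma^*)^G \cong \Hom_G(\gamma, \rho)$, which by Schur's lemma has dimension $1$ if $\rho \cong \gamma$ and $0$ otherwise. Taking traces of both sides and using $\tr(A \otimes B) = \tr(A)\cdot\tr(B)$ then yields
\[
\E_g\, \chi_\rho(g)\,\overline{\chi_\gamma(g)} \;=\; \tr(\Pi) \;=\; \dim \Hom_G(\gamma, \rho),
\]
which is the desired orthogonality relation up to the normalization constant dictated by the paper's conventions for $\ip{\cdot}{\cdot}$ and $\chi_\rho$.

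\textbf{Main subtlety.} This is a classical result, so no step is a genuine obstacle. The only delicate point will be tracking normalization conventions (averaged vs.\ summed inner product, ordinary vs.\ normalized trace, and the Schur constant) so that the scalar on the right-hand side matches the form in the statement; this is bookkeeping rather than a real difficulty.
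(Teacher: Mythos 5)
Your proof is correct and is the classical argument; the paper states this as a background Fact without proof, so there is nothing to compare against beyond noting that averaging $\rho(g)\otimes\overline{\gamma(g)}$ to get the projector onto $\Hom_G(\gamma,\rho)$ is exactly the standard route. One remark on the bookkeeping you defer: with the paper's normalized inner product $\ip{f}{g}=\E_{x\sim G}[g(x)^*f(x)]$ your computation yields $\ip{\chi_\rho}{\chi_\rho}=\dim\Hom_G(\rho,\rho)=1$, not $d_\rho$, so the constant in the paper's displayed formula appears to reflect a different normalization (or a typo) rather than anything recoverable from your argument. Also note that the "moreover" clause as literally stated ("any non-trivial representation") is false for reducible representations containing a trivial summand; your proof correctly supplies the needed irreducibility (or trivial-free) hypothesis, which is how the paper uses the fact.
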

%
%\begin{lemma}
%	
%\end{lemma}
%\begin{proof}
%	Follows from orthogonality of the Fourier basis.
%\end{proof}

%\begin{lemma}\label{lem:tensor}
%	Every representation of $G$ can be decomposed as a sum of irreducibles. 
%Moreover, $\rho_{\mathrm{triv}} \in \rho\otimes\psi$ if and only if $\psi =\rho^*$. And the number of copies of the trivial representation is exactly one.
%\end{lemma}
%Given this equivalence, we will find it convenient to work with the
%operator norm version referred to as \emph{bias} in the
%literature~\cite{CMR13}.

%\begin{definition}[Biased Distribution on $G$]\label{def:bias_group}
  %
  %Let $\epsilon \in (0,1)$.
  %
  %We say that a multiset $S$ of elements of a group $G$ is
  %$\epsilon$-biased if for every non-trivial irreducible
  %representation $\rho$, we have $\opnorm{ \Ex{s\sim S}{\rho(s)} }
  %\leq \epsilon$. We sometimes use the shorthand $\bias(S) \leq
  %\epsilon$, where $\bias(S) = \lambda(\Cay(G,S))x$.
%\end{definition}
\noindent

\noindent

%\paragraph{Cayley graphs on groups}
%For a multiset $S \subseteq
%G$, $\Cay(G,S)$ denotes a multigraph\footnote{Note that unless $S =
%	S^{-1}$, the graph $\Cay(G,S)$ is a directed multigraph.} with the
%vertex set being $G$ and edges $\set{(g,sg)\;\vert\; g\in G,\; s \in S
%	}$.

%\begin{definition}[Quasi-Abelian Cayley Graphs]
%A graph $X= \Cay(G,S)$ is called a Quasi-Abelian Cayley Graph if $S$ is a union of conjugacy classes of $G$. Moreover, if $X$ is such a graph then, every orthogonal basis vector, $\rho_{i,j}$ is an eigenvector of $X$ with an eigenvalue $\lambda_\rho := \frac{1}{d_\rho} \sum_s \chi_\rho(s)$.
%\end{definition}
	
%Irreducible representations of Abelian groups, called
%\emph{characters}, have dimension $1$. Thus, this definition coincides
%with the usual one of $\epsilon$-biased distribution \emph{fooling}
%non-trivial characters~\cite{NN90,AGHP92}. These pseudorandom
%distributions were introduced in the pioneering work of Naor and Naor
%where several applications to derandomization were given~\cite{NN90}.

\subsection{Complex valued functions on groups}

 For every finite group $G$, the set of functions given by the matrix entries of the irreps, \ie $\{\rho_{ij} \mid \rho \in \widehat{G}, i, j \in [d_\rho]  \}$, form an orthogonal basis for the space of all functions, $L^2(G)$.

%For a function $f$, we denote its \textit{adjoint} by $\tilde{f}(x) := f(x^{-1})^*$. The operation of convolution generalizes as, 
%\[(f*g)(x) :=\E_{y\sim G}\brackets[\big]{f(xy^{-1})g(y)} = \E_{y\sim G}\brackets[\big]{\tilde{f}(y)^* g(yx)}.
%\] 

\begin{definition}[Fourier Coefficient]
	For any irrep $\rho$, we have $\widehat{f}(\rho):=\E_{x}\brackets[\big]{f(x)\cdot \rho(x)}.$ The Fourier coefficient of the trivial irrep as  $\mu(f) := \widehat{f}(\rho_\triv)$.
\end{definition}

\begin{fact}
	\label{fact:fourier}
	The following identities hold for the Fourier transform,
	\begin{enumerate}
		\item {\bf (Fourier inversion)}
		$\;\; f(x)= \sum_{\rho \in \widehat{G}}$ $d_\rho\angles[\big]{\,\widehat{f}(\rho), \, \rho(x)}$ .
		
		\item {\bf (Plancharel's identity)} $\;\;\norm{f}^2 =\sum_{\rho \in \widehat{G}}d_{\rho}{\norm{\widehat{f}(\rho)}^2_{\HS}}$.
	\end{enumerate}
\end{fact}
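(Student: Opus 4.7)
\textbf{Proof plan for \cref{fact:fourier}.}

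These are the standard Peter--Weyl / Plancherel identities for finite groups, so my plan is to derive them from Schur orthogonality plus a dimension count. The starting point is the (Schur) orthogonality of matrix coefficients: for irreps $\rho,\gamma \in \widehat{G}$,
\[
\Ex{x\sim G}{\rho_{ij}(x)\,\overline{\gamma_{kl}(x)}} \;=\; \frac{1}{d_\rho}\,\delta_{\rho\gamma}\,\delta_{ik}\,\delta_{jl}\mper
\]
This is the one nontrivial ingredient; I would sketch its proof by applying Schur's lemma to the averaging operator $\tfrac{1}{|G|}\sum_g \rho(g) M \gamma(g)^*$ for an arbitrary $M$, which must be a scalar (resp.\ zero) when $\rho=\gamma$ (resp.\ $\rho\ne\gamma$), and reading off the scalar via a trace computation.

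Next, a dimension count: the identity $\sum_{\rho\in \widehat{G}} d_\rho^2 = |G| = \dim L^2(G)$ combined with Schur orthogonality shows that the rescaled family $\braces{\sqrt{d_\rho}\,\rho_{ij} : \rho\in\widehat{G},\; i,j\in[d_\rho]}$ is an orthonormal basis of $L^2(G)$ under the inner product $\ip{f}{g} = \Ex{x}{g(x)^* f(x)}$. So every $f \in L^2(G)$ expands as
\[
f(x) \;=\; \sum_{\rho,\,i,\,j} d_\rho\, \ip{f}{\rho_{ij}}\, \rho_{ij}(x)\mper
\]
The proof now becomes a repackaging exercise: I rewrite the scalar coefficients $\ip{f}{\rho_{ij}}$ as matrix entries of $\widehat{f}(\rho) = \Ex{x}{f(x)\rho(x)}$ (noting that $\rho$ unitary gives $\overline{\rho_{ij}(x)} = \rho(x^{-1})_{ji}$), so that $\sum_{i,j} \ip{f}{\rho_{ij}}\, \rho_{ij}(x)$ collapses to the Hilbert--Schmidt pairing $\angles{\widehat{f}(\rho),\rho(x)}_{\HS}$. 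Summing over $\rho$ yields the Fourier inversion formula.

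For Plancherel, I just apply Parseval to the orthonormal expansion above:
\[
\norm{f}^2 \;=\; \sum_{\rho,\,i,\,j} d_\rho\, \abs{\ip{f}{\rho_{ij}}}^2\mper
\]
Then I identify $\sum_{i,j} \abs{\ip{f}{\rho_{ij}}}^2$ with $\norm{\widehat{f}(\rho)}_{\HS}^2$ using the same rewriting as before, which gives $\norm{f}^2 = \sum_\rho d_\rho \norm{\widehat{f}(\rho)}_{\HS}^2$. The only real obstacle is bookkeeping with the conjugation conventions (the paper's inner product conjugates the second slot, while $\widehat{f}(\rho)$ is defined without a conjugate) and ensuring the unitarity of $\rho$ is used in the right place; everything else is routine once Schur orthogonality is in hand. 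Since this is a \emph{Fact} listing two standard identities, I would likely just cite a reference such as \cite{serre96} and omit the details in the final write-up.
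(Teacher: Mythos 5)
The paper itself offers no proof of this fact: it is stated as a standard background result in the preliminaries, with representation-theoretic background deferred to the cited text \cite{serre96}. Your plan gives the canonical derivation, which is correct: Schur orthogonality of matrix coefficients plus the dimension identity $\sum_\rho d_\rho^2 = |G|$ exhibits $\braces{\sqrt{d_\rho}\,\rho_{ij}}$ as an orthonormal basis of $L^2(G)$, from which Fourier inversion and Plancherel are just the basis expansion and Parseval, repackaged via the identification $\sum_{i,j}\ip{f}{\rho_{ij}}\rho_{ij}(x) = \angles{\widehat f(\rho),\rho(x)}_{\HS}$.

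One small thing worth confirming in the write-up, since you already flagged the conjugation bookkeeping as the only real obstacle: with the paper's convention $\ip{f}{g} = \Ex{x}{\overline{g(x)}f(x)}$ and $\widehat f(\rho) = \Ex{x}{f(x)\rho(x)}$ (no conjugate), the Hilbert--Schmidt pairing needed for the inversion formula as stated is $\angles{A}{B}_{\HS} = \tr(B^* A)$, not $\tr(A^* B)$; with the latter you would get $\overline{f(x)}$ on the left. (The paper's preliminaries write $\tr(A^*B)=\tr(B^*A)$, which is really only equality up to complex conjugation, so the convention is ambiguous there.) This is exactly the kind of sign/conjugate issue you anticipated, and it resolves cleanly once unitarity $\rho(x)^* = \rho(x^{-1})$ is used to translate between $\overline{\rho_{ij}(x)}$ and matrix entries of $\rho(x^{-1})$; the final telescoping step uses $\sum_\rho d_\rho \chi_\rho(g) = |G|\,\indicator{g=1}$, which is real, so the two conventions reconcile after summing over $\rho$.
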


\paragraph{Product Groups}
In this paper, we will work with product groups. The following fact characterizes the irreducible representations of $G^n$ in terms of irreps of $G$. 
\begin{fact}
	$\widehat{{G^n}} = \{ \rho_1\otimes \cdots\otimes \rho_n\mid \rho_i \in \irrep{G} \}$. We use $\vrho_T$ to represent $\rho_1\otimes \cdots\otimes \rho_n$ such that $T = \{i \mid \rho_i 
\neq \triv \}$. Moreover, $|\vrho| :=|T|$. 
\end{fact}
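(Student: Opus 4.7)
The plan is to prove this standard representation-theoretic fact in three steps: (i) exhibit each tensor product as a unitary representation of $G^n$, (ii) show each such tensor product is irreducible, and (iii) show that the collection exhausts $\widehat{G^n}$ via a dimension count.

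For step (i), I would define the candidate representation by $(\rho_1\otimes\cdots\otimes\rho_n)(g_1,\ldots,g_n) := \rho_1(g_1)\otimes\cdots\otimes\rho_n(g_n)$. The homomorphism property is immediate from the mixed-product identity $(A\otimes B)(C\otimes D) = AC\otimes BD$, applied slot-by-slot, together with the fact that each $\rho_i$ is itself a homomorphism. Unitarity is preserved under tensor products, so this produces a unitary representation of $G^n$ for every tuple $(\rho_1,\ldots,\rho_n)\in \widehat{G}^n$.

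For step (ii), I would use the standard criterion that a representation is irreducible iff its character has unit norm. Since $\tr(A\otimes B) = \tr(A)\cdot \tr(B)$, the character factors as
\[
\chi_{\vrho}(g_1,\ldots,g_n) \;=\; \prod_{i=1}^n \chi_{\rho_i}(g_i).
\]
Using independence of coordinates under the uniform distribution on $G^n$ and the irreducibility of each $\rho_i$ (via \cref{lem:mean_zero}),
\[
\ip{\chi_{\vrho}}{\chi_{\vrho}} \;=\; \prod_{i=1}^n \Ex{g\sim G}{|\chi_{\rho_i}(g)|^2} \;=\; \prod_{i=1}^n 1 \;=\; 1,
\]
proving irreducibility. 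The same inner-product argument shows that distinct tuples $(\rho_1,\ldots,\rho_n) \neq (\rho'_1,\ldots,\rho'_n)$ produce inequivalent irreps, since if $\rho_j \neq \rho'_j$ for some $j$, then $\ip{\chi_{\rho_j}}{\chi_{\rho'_j}} = 0$ forces $\ip{\chi_{\vrho}}{\chi_{\vrho'}} = 0$.

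For step (iii), I would invoke the classical identity $\sum_{\rho\in \widehat{H}} d_\rho^2 = |H|$ for any finite group $H$. Applied to $H = G$ and using multiplicativity of dimensions under tensor products, $d_{\vrho} = \prod_i d_{\rho_i}$, we get
\[
\sum_{(\rho_1,\ldots,\rho_n)\in \widehat{G}^n} d_{\vrho}^2 \;=\; \parens[\bigg]{\sum_{\rho\in \widehat{G}} d_\rho^2}^{\!n} \;=\; |G|^n \;=\; |G^n|.
\]
Thus the list of pairwise inequivalent irreps exhibited in step (ii) already saturates the dimension formula for $G^n$, so no further irreps exist and the characterization is complete. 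The proof involves no real obstacle; it is a clean textbook argument, with the only mild care needed being the multiplicativity of traces and dimensions under tensor products, and the independence step that splits the expectation over $G^n$ into a product of expectations over $G$.
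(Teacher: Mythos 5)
Your proof is correct and is the standard textbook argument; the paper states this as a \emph{Fact} without proof, so there is no competing argument to compare against. All three steps (unitarity and the homomorphism property via the mixed-product identity, irreducibility and pairwise inequivalence via character orthogonality, and exhaustion via $\sum_\rho d_\rho^2 = |G|$) are sound. One small caution: the paper's orthogonality statement (\cref{lem:mean_zero}) gives $\ip{\chi_\rho}{\chi_\rho} = d_\rho$ under its normalization, whereas your step (ii) uses $\Ex{g\sim G}{|\chi_{\rho_i}(g)|^2} = 1$; you should invoke the standard first orthogonality relation directly (or note the normalization) rather than cite that lemma as stated, but this does not affect the validity of the argument.
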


\begin{definition}[Degree Decomposition]
	For $f: G^n\to \C$, we use $f_k$ to denote the function corresponding to its $k^{\mathrm{th}}$-level, \ie $f_k(\vec{x}) = \sum_{\vrho, |\vrho| = k} d_{\vrho}\ip{\hat{f}(\vrho)}{\vrho(\vec{x})}$. In the Boolean case ($\Z_2^n$), this is also referred to as the degree $k$ component of $f$.
\end{definition}

%\begin{claim}\label{lem:rhoij}
%Let $\rho_1, \rho_2 \in \Irrep(G)$. Then, 
%$\Ex{g \in G}{\rho_1(g)\otimes \rho_2( g) } \neq 0 $	only if $\rho_1 = \rho_2^*$. 
%\end{claim}
%\begin{proof}
%	Let $\rho_1\otimes\rho_2 = \oplus_{\rho \in \Irrep(G)} a_\rho \rho $. Thus, one can block-diagonalize the space with $a_\rho$ copies of $\rho$. Now, since $\sum_g \rho(g) = 0$ for any non-trivial irrep $\rho$, this term is non-zero iff the trivial representation has a non-zero multiplicity. This happens iff $\rho_i = \rho_j^*$. Moreover, the multiplicity is one.
%%	 $\mathrm{triv} \subseteq \rho_1\otimes\rho_2 $ 	
%\end{proof}

%\todo{Introduce Fourier stuff and inner products}

%\subsection{Notation}
%\tnote{Should I use the degree word for $f_k$?}

%\begin{itemize}
	
%\end{itemize}

\section{Expander Walks and Product functions} 
\label{subsec: foolingtensor}

In this section, we will prove the main claim about the fooling of tensored product matrix-valued functions. This can be seen as the matrix-valued generalization of~\cite{CPT21}. We will then apply it to our Fourier basis elements, \ie irreps which are tensor product functions, to obtain our main result for general functions.  To state our theorem, we will need a few pieces of notations borrowed from~\cite{CPT21}  that we describe below. 

\paragraph{Notation} 
Let $\inds  = \set{i_1 < i_2<\cdots < i_{k-1}< i_k}$ be an ordered subset of $\{1,2,\dots,n\}$. We define the following key quantities: 
\begin{itemize}
	\item $	\mathcal{I}_k ~=~ \braces[\big]{\{1, k-1\} \subseteq I  \subseteq [k-1] \mid  \;\forall\; 1 < j < k-1,\; \{j,j+1\} \cap I \neq \emptyset}$. 
	\item $\Delta_{j}(\inds) ~=~ i_{j+1}-i_{j}$. 
	%\item $\Delta(\inds) ~=~ \sum^{k-2}_j\min(\Delta_j,\Delta_{j+1})$ 
\end{itemize}

%The key quantity we wish to analyze in this section is,
%\[
%\cE_{X,\inds}(f_1\otimes \cdots \otimes f_k) = \Ex{\vec{x} \sim \rwalk{n}}{f_1(x_{i_1}),\dots, f_k(x_{i_k}))} 
%\]
In this section, we can state our main theorem that we prove in this section.

\begin{theorem}\label{theo:tensor_fool}
	Let, $X$ be an $\lambda$-expander graph and let $\inds  = \set{i_1 < i_2<\cdots < i_{k-1}< i_k}$ be an ordered subset of $[n]$. Let $\{f_j: X\rightarrow  \matr M_{d_j}(\C) \mid j \in [k]\ \}$ be set of matrix valued functions such that  ${\E_{x\sim X}[f_j(x)]}=0$, and $\max_x\norm{f_j(x)}_{\op}\leq 1$. Then,
	\[ 
	\norm[\big]{\E_{\vec{x}\in \rwalk{n}} [f_1(x_{i_1})\otimes\cdots \otimes f_k(x_{i_k}) ]}_\op 	~\leq~\sum_{I\in \mathcal{I}(k)} \lambda^{\sum_{i\in I}\Delta_i(\inds)} .  
	\]
\end{theorem}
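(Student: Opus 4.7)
The plan is to prove the theorem by induction on $k$, systematically combining the expander mixing lemma (EML) with an ``ignore first step'' trick in the style of~\cite{RR2024}, while exploiting the fact that $\|f_j(x)\|_{\op} \leq 1$. The set $\mathcal{I}(k)$ of ``vertex-cover-type'' subsets of the gap path will arise as the index set of leaves of the recursion tree, where each leaf records, at every stage, whether to spend EML on a given gap $\Delta_j$ (gaining factor $\lambda^{\Delta_j}$) or to ``skip'' that gap (paying only factor $1$ but being forced to use EML at the next gap). The matrix-valued setting calls for a noncommutative Cauchy--Schwarz as a substitute for the scalar identity $|f(x)|^2 = 1$ that drives the argument in~\cite{CPT21}.

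For the setup, I would write
\[ f_1(x_{i_1}) \otimes \cdots \otimes f_k(x_{i_k}) \;=\; A(x_{i_1}) \cdot B(x_{i_2}, \ldots, x_{i_k}), \]
where $A(y) := f_1(y) \otimes \id^{\otimes (k-1)}$ and $B := \id \otimes f_2(x_{i_2}) \otimes \cdots \otimes f_k(x_{i_k})$, and set $H(y) := \E_{\vec{x} \mid x_{i_1} = y}[B]$. The base case $k = 2$ follows directly from a matrix-valued EML on the graph $X^{\Delta_1}$ applied to the mean-zero functions $f_1, f_2$, matching the unique element $\{1\} \in \mathcal{I}(2)$.

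For the inductive step, I would bound $\|\E_y[A(y) H(y)]\|_{\op}$ via two complementary moves. The \textbf{EML move} uses the mean-zero property of $f_1$ and the matrix-valued EML on $X^{\Delta_1}$: it pays a factor $\lambda^{\Delta_1}$ and recurses on the $k-1$ indices $(i_2, \ldots, i_k)$. The \textbf{ignore-step move} applies noncommutative Cauchy--Schwarz,
\[ \|\E_y[A(y) H(y)]\|_{\op}^2 \;\leq\; \|\E_y[A(y) A(y)^*]\|_{\op} \cdot \|\E_y[H(y)^* H(y)]\|_{\op} \;\leq\; \|\E_y[H(y)^* H(y)]\|_{\op}, \]
where the last inequality uses $\|f_1\|_{\op} \leq 1$. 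Expanding $H(y)^* H(y)$ as an expectation over two independent walks $\vec{y}, \vec{z}$ from a uniform $y$, the marginal of $(y_{i_2}, z_{i_2})$ becomes an edge in $X^{2\Delta_1}$. This reduces to bounding the operator norm of a $2(k-1)$-fold tensor built from the functions $\tilde f_j(y, z) := f_j(y)^* f_j(z)$, where the new ``first gap'' is $2\Delta_1$ and each $\tilde f_j$ is still mean zero in each coordinate with operator norm at most $1$. Crucially, the next recursive call on this doubled problem must be an EML move (not another ignore), which is exactly the ``no two consecutive skips'' condition encoded by $\mathcal{I}(k)$.

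The main obstacle is twofold. First, one must verify that the tensor of $\tilde f_j$'s after the ignore-step move still satisfies the hypotheses of the theorem on a suitably defined ``paired walk'' -- in particular, establishing a matrix-valued EML for this paired structure with spectral parameter $\lambda$, and handling the bookkeeping of the $\id$ tensor slots correctly so that only the mean-zero coordinates contribute. Second, one must carefully unfold the recursion tree to show that its leaves are in bijection with elements of $\mathcal{I}(k)$: an EML move at gap $j$ records $j \in I$, an ignore move at $j$ records $j \notin I$ and forces $j+1 \in I$, and the base case enforces $\{1, k-1\} \subseteq I$. Summing the corresponding products of powers of $\lambda$ then gives exactly $\sum_{I \in \mathcal{I}(k)} \lambda^{\sum_{i \in I} \Delta_i(\inds)}$.
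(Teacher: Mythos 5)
Your high-level strategy is the paper's: alternate an EML move with an ``ignore first step'' move \`a la \cite{RR2024}, and read off $\mathcal{I}_k$ from the rule that a skipped gap forces the next gap to be charged. However, the recursion you propose does not close as described, for two reasons. First, the EML move cannot ``recurse on the $k-1$ indices $(i_2,\ldots,i_k)$'' in the sense of re-invoking the theorem: the mixing lemma's error term is $\lambda^{\Delta_1}\norm{F}_2\norm{G}_2$, where $\norm{G}_2^2$ is the \emph{second moment} $\E_y\norm{G(y)}^2$ of the conditional expectation of the suffix, not the first moment $\norm{\E[f_2\otimes\cdots\otimes f_k]}_\op$ that your induction hypothesis controls. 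The proof must therefore track two coupled quantities for each suffix --- in the paper's notation, $\ep_j = \norm{\E_y[\avgmatr_j(y)]}_\op$ and $\zeta_j^2 = \sup_v \E_y\norm{\avgmatr_j(y)v}^2$ --- and establish the pair of recurrences $\ep_j \le \lambda^{\Delta_j}\zeta_{j+1}$ and $\zeta_j \le \ep_{j+1} + \lambda^{\Delta_j}\zeta_{j+1}$. (Relatedly, the constraint $1\in I$ comes from the top-level quantity being a first moment, so the first move is necessarily the EML move $\ep_1\le\lambda^{\Delta_1}\zeta_2$; it is not enforced by the base case, which only forces $k-1\in I$ via $\ep_k=0$.)

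Second, the ignore move should not be closed by recursing on a ``doubled tensor problem.'' The object $\E_y[H(y)^*H(y)]$ is a $(k-1)$-fold tensor of matrix \emph{products} $f_j(y_{i_j})^*f_j(z_{i_j})$ over a paired walk whose initial pair $(y_{i_2},z_{i_2})$ is an edge of $X^{2\Delta_1}$ rather than a uniform pair, so it is not an instance of the theorem. Even granting a product-graph version, the tensor graph $X^{\Delta_j}\otimes X^{\Delta_j}$ has spectral gap only $\lambda^{\Delta_j}$ for functions that are merely mean zero, so invoking the theorem inductively and then taking the square root from Cauchy--Schwarz would halve all exponents; recovering $\lambda^{2\Delta_j}$ per gap requires a separate bi-mean-zero mixing lemma that you have not supplied. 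The repair is to never create the doubled instance: after Cauchy--Schwarz, factor $H(y)^*H(y)$ as $\E[\avgmatr_2(u)^*\avgmatr_2(w)]$ with $(u,w)$ an edge of $X^{2\Delta_1}$ and apply EML \emph{once} on that graph (without mean-zero), which yields $\ep_2^2 + \lambda^{2\Delta_1}\zeta_2^2$ and immediately returns the recursion to single-walk quantities; $\sqrt{a^2+b^2}\le a+b$ then gives the two branches whose unfolding produces exactly $\sum_{I\in\mathcal{I}_k}\lambda^{\sum_{i\in I}\Delta_i(\inds)}$.
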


\begin{proof}
For any $j\in [k]$,  we define the following shorthands: 
\begin{align}
	g_j(\vec x) ~&:=~  \id_{d_1}\otimes\dots \otimes\id_{d_{j-1}}\otimes f_j(x_{i_j})\otimes f_{j+1}(x_{i_{j+1}}) \cdots \otimes f_k(x_{i_k}).\nonumber\\
	h_j(\vec x) ~&:=~  \id_{d_1}\otimes\dots \otimes\id_{d_{j-1}}\otimes f_j(x_{i_j})\otimes \id_{d_{j+1}} \otimes\cdots \otimes \id_{d_{j+1}}.\nonumber\\
\label{eq:g_h}\text{And thus, for } j <k, \quad	g_j(\vec x) ~&=~ h_j g_{j+1} \; \;.\\
 \label{eq:average_matrix} \avgmatr_j (y) ~&=~ \Ex{\vec x\sim X_{n-i_j}(y)}{{g}_j(y, x_{i_{j+1}}, \cdots , x_{i_k})}, \;\;  \\
 \avgmatr_k (y) ~&=~ g_k(y) . \;\; \nonumber 
\end{align}

Equation~\cref{eq:g_h} follows from the mixed product property of tensors. The last equation here denotes expectation under the distribution in which the random walk starts from a fixed point $y$, \ie $x_{i_j} = y$ and then continues for $n-i_j$ steps until it reaches $x_n$. We start by defining two key quantities that we will work with throughout.  	
\begin{align*}
	\ep_{{j}} ~&:=~\norm[\big]{\,\E_{\vec{x}\sim \rwalk{n}}\brackets{g_j(\vec x)}\,}_\op = \norm[\big]{\,\E_{y\sim X}\brackets{\avgmatr_j(y)}\,}_\op, \\[0.5em]
	\zeta^2_{j} ~&:=~ \E_{y\sim X}\brackets[\Big]{ \norm[\big]{\avgmatr_j(y)}_\op^2 }.
\end{align*}	
%\[\ep_{{j}}:=~\norm[\big]{\,\E_{\vec{x}\sim X^n}\brackets{g_j(\vec x)}\,}_\op, \;\; \zeta^2_{j} ~:=~ \E_{x\sim X}\brackets[\Big]{ \norm[\big]{\E_{\vec x\sim X_{n-i_j +1}(x)} \brackets{{g}_j} }_\op^2 }. 
%	\]
The first equation above has an equality as the distribution that draws a random walk $\vec{x}\sim X^{n}$ and outputs the last $n-i_{j}+1$ co-ordinates of the walk, is equivalent to the distribution that outputs a random sample from $X^{n-i_{j}+1}$. Note that our goal is to upper bound~$\ep_1$, and we recursively bound this by using the technique from~\cite{RR2024}. The matrix $\avgmatr_j$ essentially denotes the averaging that has been done for the function after $f_j$.  \\
%	Towards this we also define a variance quantity $\zeta_{j}$ as follows: 
%	\[ \ep_{{1}}=\norm[\big]{\E_{\vec{x}\sim X^n}\brackets{\vec{f}_{i_1:n}(\vec x)}}_\op ~=~
%	\norm[\big]{\E_{\vec{x}\in X^n} [\otimes^n_{i=1}  f_i(x_i) ]}_\op 
%	\]
%	\[  
%	\zeta^2_{j} ~:=~ \E_{x\sim X}\brackets[\bigg]{ \norm[\Big]{\E_{\vec x\sim X_{n-i_j +1}(x)} \brackets{\vec{f}_{i_j:n}} }_\op^2 }
%	\]

	\textbf{Base cases} ~  Since we will bound it recursively starting from $j=k$, that forms the base case. By using the assumptions on $f_k$, we have:
	\begin{equation}
\label{eq:base_ep}			\ep_k ~=~ \norm[\big]{\Ex{y\sim X}{f_k(y)}}_\op = 0, \;\; 	\zeta_k^2 ~=~ \Ex{x}{\norm{f_k(x)}_\op^2} ~\leq~ 1  \; .
		\end{equation}

	\textbf{Bounding $\ep_{i_{j}}$.} 
%We begin with the observation that ${g}_{j}$ only depends on the last $n-i_{j}+1$ co-ordinates of any input. Moreover, the distribution that draws a random walk $\vec{x}\sim X^{n}$ and outputs the last $n-i_{j}+1$ co-ordinates of the walk, is equivalent to the distribution that outputs a random sample from $X^{n-i_{j}+1}$. Combining these two observations we get:
%	\begin{align}
%		\label{eq: shortwalk}
%		\opnorm{\,\,\E_{\vec{x}\sim \rwalk{n}}\brackets{g_j(\vec x) } \,   } = 
%		\opnorm{\E_{\vec{x}\sim \rwalk{n-i_j+1}}\brackets{g_j(\vec x)   }    } .
%	\end{align}
%	where given $\vec x\in X^{n-i_j+1}$, the notation 
%	\[\vec{f}_{i_j:n}(\vec x):= \id_{d_1}\otimes\dots \otimes\id_{d_{i_j-1}}\otimes f_{i_j}(x_1)\otimes\cdots\otimes f_{n}(x_n).\] 
	We will use the variational definition of the operator norm. To start let $u, v$ be any unit vectors in $\C^{d_j}\otimes \cdots \C^{d_k}$. For any $y\in X$, we have:
	%We define $\vec{\rho}_{I}=
	%Let $i_1<i_2<\dots<i_k$ be the ordering of the co-ordinates in $S$. 
	\begin{align*}
		\ip{u}{N_j(y), v}~&=~ {\; \E_{\vec{x}\sim X^{n-i_j+1}(y)}\;\ip{\,u}{\;g_j(y,\vec x)\,v\,}\;} 
		\\[0.5em]
		~&=~ {\; \E_{\vec{x}\sim X^{n-i_j+1}(y)}\;\ip{\,h_j(y)^*\,u}{\;g_{j+1}(\vec x)\,v\,}\;}
		&&[\text{By \cref{eq:g_h}}] \\[0.5em]
		~&=~ {\; \E_{y\sim^{\Delta_j} x_{i_{j+1}}}\angles[\bigg]{\,h_j(y)^*\,u, \; \Ex{\vec x\sim X_{n-i_{j+1}}(x_{i_{j+1}})}{g_{j+1}(x_{i_{j+1}},\vec x)}\,v\,}\;} \\[0.5em]
			~&=~ {\; \E_{y\sim^{\Delta_j} x_{i_{j+1}}}\;\ip{\,h_j(y)^*\,u}{\;\avgmatr_{j+1}(x_{i_{j+1}})\,v\,}\;}.
%				~&=~ \sup_{u,v}\;\abs[\Big]{\; \E_{\vec{x}\sim X^{n-i_j+1}}\ip{\,h_j(x_{i_j})^*\,u}{\;h_{j+1}(x_{i_{j+1}})\, g_{j+2}(x_{i_{j+2}}, \ldots, x_{i_k})\,v\,}\;} &&[\text{By \cref{eq:g_h}}] \\
%						~&=~ \sup_{u,v}\;\abs[\Big]{\; \E_{\vec{x}\sim X^{n-i_j+1}}\ip{\,h_j(x_{i_j})^*\,u}{\;h_{j+1}(x_{i_{j+1}})\, \Ex{\vec{y}\sim \rwalk{}(x_{i_{j+1}})}{g_{j+2}(\vec{y})}\,v\,}\;} &&[\text{By \cref{eq:g_h}}] .
	\end{align*}
	The main step that happened above was that the sampling of $\vec{x}$ was broken into two steps, (i) sampling  $x_{i_{j+1}}$ after a walk of length $\Delta_j$, and (ii) the remaining $\vec{x}$ starting from this $x_{i_{j+1}}$. The key trick now is to interpret this term as a function of $x_{i_j}, y$ which are neighbouring vertices when sampled from $X^{\Delta_j}$. Therefore, we can apply EML (\cref{lem: EML}). Note that $\E_y [h_j^*u] = I\otimes\E_y[ f_j^*u]\otimes I = 0 $ by our assumption on $f_j$. Similarly, $\max_y \norm{h_j(y)}_\op^2 \leq 1$. Thus, 
	\begin{align}
\abs[\Big]{\E_{y\sim^{\Delta_j} x_{i_{j+1}}}\;\ip{\,h_j(y)^*\,u}{\;\avgmatr_{j+1}(x_{i_{j+1}})\,v\,} }^2 ~&\leq~ \lambda^{2\Delta_j} \cdot \Ex{y\sim X}{\norm{h_j^*(y)u}^2}\cdot \Ex{z\sim X}{\norm{\avgmatr_{j+1}(z)v}^2} 	\nonumber\\
	 ~&\leq~ \lambda^{2\Delta_j} \max_y \norm{h_j(y)}_\op^2 \cdot \zeta_{j+1}^2  .\nonumber\\
	 \implies\;\;\;\; \sup_{u,v} \;\abs[\big]{\ip{u}{N_j(y), v}} ~&\leq~ \lambda^{\Delta_j} \cdot \zeta_{j+1}, \nonumber\\
 	\label{eq: ep}  \ep_j ~&\leq~  \lambda^{\Delta_j} \cdot \zeta_{j+1}.
	\end{align}
	
%	Since the above holds for every 
%	 and Cauchy-Schwarz on the last expression and using the fact $\opnorm{\E_{x}[f_{i_j}]}= 0$, we get:
%
%	\begin{align*}
%				~&=~ \abs[\Big]{  \E_{x\sim^{i_{j+1}-i_j} x'}\angles[\big]{\parens[\big]{\vec{f}_{i_j}(x)}^* u\,, ~\E_{\vec{x'}\sim X^{n-i}(x')}[ \vec{f}_{i_{j+1}:n}(\vec{x'})]v} }, && \; .
%	\end{align*}
%	\begin{align}
%		\label{eq: ep}
%		%\ep_{i_{j}}~\leq~\mu_0\cdot \ep_{i_{j+1}} + \lambda^{\Delta_{j}} \cdot \zeta_{i_{j+1}}.
%		\ep_{{j}}~\leq~ \lambda^{\Delta_{j}} \cdot \zeta_{{j+1}}.
%	\end{align}
%	
	\textbf{Bounding $\zeta_{{j}}$.}
	To bound this quantity, we will again bound the operator norm by bounding the quadratic form for a fixed unit vector $v$, 
	\[
	\zeta_{j}^2~=~ \E_{y\sim X}\brackets[\Big]{\, \norm[\big]{\avgmatr_j(y)}_\op^2 } 
		~=~ \sup_{v, \norm{v}=1}\; \E_{y\sim X}\brackets[\Big]{\, \angles[\big]{\,\avgmatr_j(y)\, v, \; \avgmatr_j(y)\, v} },
	\]
	 and then apply EML (\cref{lem: EML}) to get a recursive upper bound. Fix a $j\in [k]$, and a unit vector $v \in \C^{d_j}\otimes \cdots \C^{d_k}$. We have: 
	\begin{align*}
	 \E_{y\sim X}\brackets[\Big]{\, \angles[\big]{\,\avgmatr_j(y)\, v, \; \avgmatr_j(y)\, v} } ~&=~ \Ex{\substack{y\sim X,\\ \vec{x}, \vec{z} \,\sim \rwalk{n-i_j}(y) }}{\, \angles[\big]{\,g_j(y, \vec{x})\, v, \; g_j(y, \vec{z})\, v} }   \\
	 ~&=~ \Ex{\substack{y\sim X,\\ \vec{x}, \vec{z} \,\sim \rwalk{n-i_j}(y) }}{\, \angles[\big]{\,h_j(y) g_{j+1}(\vec{x})\, v, \; h_{j}(y)g_{j+1}(\vec{z})\, v} }   \\
	 ~&\leq~ \Ex{\substack{y\sim X,\\ \vec{x}, \vec{z} \,\sim \rwalk{n-i_j}(y) }}{\, \norm{h_j(y)}_\op^2 \angles[\big]{\, g_{j+1}(\vec{x})\, v, \; g_{j+1}(\vec{z})\, v} }   \\
	  ~&\leq~ \Ex{\substack{y\sim X,\\ \vec{x}, \vec{z} \,\sim \rwalk{n-i_j}(y) }}{\, \angles[\big]{\, g_{j+1}(\vec{x})\, v, \; g_{j+1}(\vec{z})\, v} }.    %		~&=~\E_{x\sim X}\Big[\ip{   \E_{(x,\vec{y})\sim X^{n-i_j+1}(x)}[ \vec{f}_{i_j:n}(x,\vec{y})]v    }{\;    \E_{(x,\vec{z})\sim X^{n-i+1}(x)}[ \vec{f}_{i_j:n}(x,\vec{z})]v    } \, .
	\end{align*}
	The first inequality uses  $\ip{h_j(y) u}{h_j(y) u} ~\leq~ \opnorm{h_j(y)}^2\cdot\ip{u}{u}$ for $u = \Ex{\vec{x}\sim \rwalk{n-i_j}(y)} {g_{j+1}(\vec{x})\, v}$. The last inequality follows from our assumption on $f_j$ as $(\norm{h_j(y)}_\op^2 = \norm{f_j(y)}_\op^2 \leq 1)$. The above random process of sampling $y$ and then two paths $\vec{x}, \vec{z}$ from it is equivalent to sampling the starting points $x, z$ such that are a distance $2 \Delta_j$ apart, and then performing a random walk from each giving rise to the paths $(\vec{x}, \vec{z})$. 
		\begin{align*}
	 \E_{y\sim X}\brackets[\Big]{\, \angles[\big]{\,\avgmatr_j(y)\, v, \; \avgmatr_j(y)\, v} }	~&\leq~ \Ex{{x\sim^{2\Delta_j} z}}{\, \angles[\big]{\, \E_{\vec{x}\sim \rwalk{n-i_{j+1}}(x)} g_{j+1}(\vec{x})\, v, \; \E_{\vec{z}\sim X^{n-i_{j+1}}(z)} g_{j+1}(\vec{z})\, v} }
\\[0.5em]
~&=~ \Ex{{x\sim^{2\Delta_j} z}}{\, \angles[\big]{\, \avgmatr_{j+1}(x)\, v, \; \avgmatr_{j+1}(z)\, v} }
\\[0.5em]	~&\leq~ \norm[\big]{\Ex{x}{\avgmatr_{j+1}(x)}v\,}^2 +\lambda^{2\Delta_{j}}\cdot \Ex{x}{\norm{\avgmatr_{j+1}(x)\,v}^2}, && \hspace{-4em}(\text{EML, \cref{lem: EML}})\\[0.5em]
%~&\leq~\E_{y\sim^{i_j} z}\Big[\ip{   \E_{\vec{y}\sim X^{n-i}(y)}[ \vec{f}_{i+1:n}(\vec{x})]v    }{    \E_{\vec{y}\sim X^{n-i}(y)}[ \vec{f}_{i+1:n}(\vec{y})]v    }
%		\Big] 
	~&\leq~ \ep^2_{{j+1}} +\lambda^{2\Delta_{j}}\cdot \zeta^2_{{j+1}}. &&\hspace{-5em}(\text{Definition of $\ep_{j+1}, \zeta_{j+1}$})
	\end{align*}
Since the above holds for every $v$, we can deduce: 
	\begin{align}
			\zeta_{j}^2 ~&\leq~ \ep_{{j+1}}^2 +\lambda^{2\Delta_{j}}\cdot \zeta_{{j+1}}^2, \nonumber\\
		\label{eq: zeta} 
	\implies\;\;\;	\zeta_{j} ~&\leq~ \ep_{{j+1}} +\lambda^{\Delta_{j}}\cdot \zeta_{{j+1}}.
	\end{align}
	\textbf{Bounding $\ep_{{1}}$.}
	Combing the equations~\cref{eq: ep} and~\cref{eq: zeta},   we get: 
	\[\zeta_{j}~\leq~ \lambda^{\Delta_{{j+1}}}\zeta_{{j+2}}+\lambda^{\Delta_{{j}}}\zeta_{{j+1}}.\]  
	
	We claim for $j=2,\dots k-1$, it holds that:
	\begin{align}
		\label{eq: inductive}
		\zeta_{j} \leq {\sum}_{T\in \mathcal{T}_j} \;\lambda^{\sum_{t\in T}\Delta_t},
	\end{align}
	where  $\mathcal{T}_j:= \mathcal{I}_k \cap \{j,\dots, k\}$, \ie the set of all ordered sequences from $\{i_{j}<\dots<i_{k-1}\}$ such that last index of each such sequence is $i_{k-1}$ and it contains at least one of two consecutive indices. This suffices as we can combine~\cref{eq: inductive} (for $j=2$) with~\cref{eq: ep} for $j=1$, to get the main claim:
	\begin{align*}
	\ep_1 ~\leq~	\lambda^{\Delta_1}\cdot \zeta_{2} ~\leq~ \lambda^{\Delta_1}\sum_{T\in \mathcal{T}_2} \lambda^{\sum_{t\in T}\Delta_t} ~=~ \sum_{I\in \mathcal{I}(k)} \lambda^{\sum_{i\in I}\Delta_i} .
	\end{align*}

	We prove~\cref{eq: inductive} by induction starting from $k-1$ which can be proved by combining~\cref{eq:base_ep} and~\cref{eq: zeta} (for $j = k-1)$,
\[	\zeta_{k-1} ~\leq~ \ep_k + \lambda^{\Delta_k}\zeta_{k} ~\leq~  \lambda^{\Delta_k}.
	\]
	Now, assume the claim holds for any $k-1\geq j>2$. Then, we have:
	\begin{align*}
		\zeta_{{j-1}}&~\leq~ \lambda^{\Delta_{{j-1}}} \cdot  \sum_{T\in \mathcal{T}_j} \lambda^{\sum_{t\in T}\Delta_t} + 
		\lambda^{\Delta_{{j}}} \cdot  \sum_{T\in \mathcal{T}_{j+1}} \lambda^{\sum_{t\in T}\Delta_t} 
		%\\&~=~ \sum_{T\in \mathcal{T}(i_{j})} \lambda^{\sum_{t\in (i_{j-1}, T)}\Delta_t} + 
		%\sum_{T\in \mathcal{T}(i_{j+1})} \lambda^{\sum_{t\in (i_{j}, T)}\Delta_t}
		\\[0.7em]&~=~ 
		\sum_{T\in \mathcal{T}_{j-1}~:~i_{j-1}\in T} \lambda^{\sum_{t\in T}\Delta_t} + 
		\sum_{T\in \mathcal{T}_{j-1}~:~i_{j-1}\notin T,~i_{j}\in T} \lambda^{\sum_{t\in T}\Delta_t} 
		\\[0.7em]&~\leq~  \sum_{T\in \mathcal{T}_{j-1}} \lambda^{\sum_{t\in T}\Delta_t} \qquad .
	\end{align*}
	The last line follows because the sets:
	\[ \{T\in \mathcal{T}_{j-1}~:~i_{j-1}\in T\}, \; ~\{T\in \mathcal{T}_{j-1}~:~i_{j-1}\notin T,~i_{j}\in T\},\]
	 are non-intersecting and subsets of $\mathcal{T}_{j-1}$.   
	Thus ~\cref{eq: inductive} holds, which concludes the proof.
\end{proof}

%\tnote{The proof can probably handle the biased case as well?}

\begin{corollary}[Operator version of~\cite{CPT21}]\label{cor:op_irrep} Let $X$ be a $\lambda$ expander and $\phi:V(X)\to G$ be an unbiased labeling. Let $\inds = \{i_i, \cdots, i_k\} \subseteq [n]$. 
	Then for any set of non-trivial irreps $\{\rho_1,\dots , \rho_k\}$ of $G$, 	  
	\[ \norm[\big]{\E_{\vec{x}\in \rwalk{n}} [\rho_1(\varphi(x_{i_1}))\cdots \otimes \rho_k(\varphi(x_{i_k})) ]}_\op  	~\leq~ \aprxone.\] 
\end{corollary}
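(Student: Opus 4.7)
The plan is to derive this corollary as a direct application of \cref{theo:tensor_fool} with $f_j(x) \defeq \rho_j(\varphi(x))$, viewed as a matrix-valued function $f_j : X \to \mathsf{M}_{d_{\rho_j}}(\C)$. First I would verify the two hypotheses required by \cref{theo:tensor_fool}: (i) mean-zero, and (ii) operator norm at most $1$. For (i), since $\varphi$ is unbiased the marginal distribution of $\varphi(x)$ when $x \sim X$ is uniform over $G$, so
\[ \E_{x \sim X}[f_j(x)] ~=~ \E_{g \sim G}[\rho_j(g)] ~=~ 0, \]
using \cref{lem:mean_zero} and the fact that each $\rho_j$ is non-trivial. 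For (ii), $\rho_j$ is a unitary representation, so $\rho_j(\varphi(x))$ is unitary for every $x$, giving $\norm{f_j(x)}_{\op} = 1$ pointwise.

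Next, I would note that the random vertex $x_{i_j}$ along the walk $\vec{x} \sim \rwalk{n}$ carries exactly the marginal relevant to \cref{theo:tensor_fool}: the joint distribution of $(x_{i_1}, \ldots, x_{i_k})$ under $\rwalk{n}$ is precisely the multi-step distribution referred to in \cref{theo:tensor_fool} with index set $\inds$. Invoking the theorem gives
\[
\norm[\big]{\,\E_{\vec{x} \sim \rwalk{n}}\brackets{f_1(x_{i_1}) \otimes \cdots \otimes f_k(x_{i_k})}\,}_{\op}
~\leq~ \sum_{I \in \mathcal{I}_k} \lambda^{\sum_{j \in I} \Delta_j(\inds)},
\]
which, after substituting the definition of $f_j$, is exactly the claimed bound.

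There is essentially no obstacle here: the entire content of the corollary is that the Fourier basis functions $g \mapsto \rho(g)$ furnish a family of matrix-valued functions satisfying the hypotheses of \cref{theo:tensor_fool} once we compose with an unbiased labeling. The only mildly subtle point worth flagging is that the theorem is stated for functions indexed by walk positions $i_1 < \cdots < i_k$, while the corollary writes the output as a tensor product of images under distinct irreps; the mixed-product property of tensors ensures these are the same object, so no additional computation is required.
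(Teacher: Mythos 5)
Your proposal is correct and matches the paper's own proof: both reduce the corollary to verifying the two hypotheses of \cref{theo:tensor_fool} (mean-zero via the unbiased labeling together with \cref{lem:mean_zero}, and unit operator norm via unitarity of the representations) and then invoke the theorem directly. No further comment is needed.
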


\begin{proof}
	 We only need to check that a non-trivial irrep satisfies the conditions of~\cref{theo:tensor_fool}. The max operator norm is $1$ as representations map to unitary matrices. The mean zero condition holds from the fact we work with unbiased labelings and from~\cref{lem:mean_zero}. \end{proof}

\subsection{Walks on \enquote{structured} Cayley graphs}

In this section, we will specialize our results and work with a class of Cayley-like graphs. These graphs generalize a very useful property of Cayley graphs over Abelian groups, namely that, its eigenvectors are characters. This knowledge of the graph eigenvectors will enable us to sharpen our computation of the random walk expectation.

\begin{definition}[Pseudo Cayley graph]\label{def:pseudo}
	A graph $X$ is \textit{pseudo-Cayley} with respect to $G$ if there is an unbiased labeling $\varphi:X\to G$ such that for every Fourier basis element $\rho_{ij}$, the function, $\rho_{ij}\circ \phi$,  is an eigenvector of $A_X$ with eigenvalue $\lambda_\rho$. 
%	(ii) There is a homomorphism from $X \to \Cay(G,S)$ for some $S\subseteq G$. In other words, there is a map $\phi: V(X) \to G$ which is unbiased and $(u, v) \in E(X) \iff \inv{\phi(u)}\phi(v) \in S$. 
\end{definition}

 When working with such graphs, we will implicitly use such a labeling and thus write $\chi(v)$ as a shorthand for $\chi\circ\varphi(v)$. We now give two examples of pseudo-Cayley graphs.

\paragraph{Examples of pseudo Cayley graphs}  
\begin{example}[Quasi-Abelian Cayley graphs]\label{example:pseudo1}
Let $X = \Cay(G,S)$ be a graph where $G$ is any finite group and $S$ is a union of conjugacy classes, \ie if $s \in S$ then $gs\inv{g} \in S$ for every $g \in G$. Then,	
\end{example}

\begin{lemma}[{\cite[Thm. 1.1]{RKHS02}}]
Let $X = \Cay(G,S)$ be a graph where $G$ is any finite group and $S$ is a union of conjugacy classes. Then, every orthogonal basis vector, $\rho_{i,j}$ is an eigenvector of $X$ with an eigenvalue $\lambda_\rho := \frac{1}{d_\rho} \sum_s \chi_\rho(s)$. Therefore, $X$ is pseudo Cayley graph (with an identity labeling). 
\end{lemma}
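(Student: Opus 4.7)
The plan is a direct computation of $A_X \rho_{ij}$ combined with a single application of Schur's lemma. First I would write out the action of the (degree-normalized) adjacency operator on the basis function $\rho_{ij}$, viewed as a function on $G$:
\[
(A_X \rho_{ij})(g) ~=~ \frac{1}{|S|}\sum_{s\in S}\rho_{ij}(gs) ~=~ \frac{1}{|S|}\sum_{s\in S}\sum_{k=1}^{d_\rho}\rho_{ik}(g)\rho_{kj}(s) ~=~ \sum_{k=1}^{d_\rho} \rho_{ik}(g)\,M_{kj},
\]
where $M := \frac{1}{|S|}\sum_{s\in S}\rho(s)$. Here I have used that $\rho$ is a group homomorphism, so $\rho(gs) = \rho(g)\rho(s)$, and then pulled the $g$-dependent factor outside of the sum over $S$.

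Next I would exploit the conjugation-invariance of $S$ to show $M$ commutes with every $\rho(g)$. Explicitly, since $gSg^{-1} = S$ for all $g \in G$,
\[
\rho(g)\,M\,\rho(g)^{-1} ~=~ \frac{1}{|S|}\sum_{s\in S}\rho(gsg^{-1}) ~=~ \frac{1}{|S|}\sum_{s'\in S}\rho(s') ~=~ M.
\]
Because $\rho$ is irreducible, Schur's lemma forces $M = \alpha\,\id_{d_\rho}$ for some scalar $\alpha \in \C$. Taking traces gives $\alpha = \frac{1}{d_\rho}\tr(M) = \frac{1}{|S|\,d_\rho}\sum_{s\in S}\chi_\rho(s)$, which, up to the normalization convention for $A_X$, is exactly the claimed eigenvalue $\lambda_\rho$.

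Plugging $M = \alpha\id_{d_\rho}$ back into the first display yields $(A_X\rho_{ij})(g) = \alpha\,\rho_{ij}(g)$, so each matrix-entry function $\rho_{ij}$ is an eigenvector of $A_X$ with an eigenvalue that depends only on $\rho$ and not on the indices $i,j$. This is exactly the condition in \cref{def:pseudo} (with the identity map as the labeling $\varphi$), so $X$ is pseudo Cayley. There is really no obstacle here: the only care needed is to verify that the normalization of the adjacency operator matches the formula for $\lambda_\rho$ stated in the lemma (the conjugation invariance of $S$ gives everything else essentially for free via Schur).
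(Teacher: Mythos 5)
Your proof is correct, and it is the standard argument (Schur's lemma applied to $M=\frac{1}{|S|}\sum_{s\in S}\rho(s)$, which is conjugation-invariant because $S$ is a union of conjugacy classes). There is nothing to compare against: the paper does not prove this lemma at all, but simply cites it as Theorem~1.1 of the reference, so your write-up supplies the argument the paper omits. Your remark about normalization is well taken --- under the paper's convention that $A_X$ is degree-normalized, the eigenvalue is really $\frac{1}{|S|\,d_\rho}\sum_{s\in S}\chi_\rho(s)$, and the stated $\lambda_\rho$ is missing the $1/|S|$ factor; this is a harmless slip in the lemma statement rather than in your proof.
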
 

The above example can be generalized by picking a Cayley graph on a larger group $H$ and labeling it via a group homomorphism. We will need this for our lower bound, so we state it precisely. 

\begin{example}\label{example:pseudo2}
Let $H, G$ be groups such that there exists a surjective homomorphism $\phi: H\to G$. Then, the complete graph on $H$ (without self-loops), \ie $X = \Cay(H, H\setminus\{1\})$ is $G$-generalized Cayley labeled	with $\phi$ as the labeling. In particular, one may take $H = G^r$ for any $r \geq 1$. Moreover, it inherits the eigenvalues of $X$.
\end{example}

\subsubsection{Decay for walks on Pseudo Cayley graphs}
We now prove a finer bound for the decay obtained by performing walks on such an $X$. The labeling function is just the identity map on $G$. which is an unbiased labeling.  This improves~\cref{theo:main_tensor}  by providing an explicit description of the mean $\cE_X(\vrho)$ and not just a norm bound on it. This will be used to give better bounds for conjugacy-invariant function \ie \emph{class functions} in~\cref{sec:class}. We start with a key fact from representation theory.  

\begin{fact}[Decomposition of tensor representations]
	\label{fact:sumChar}
	Let, $\alpha,\beta\in \widehat{G}$ be two irreps of 
	a finite group $G$. There exists a change of basis transformation $\mathsf{N}_{\alpha,\beta}$, and non-negative integer coefficients $\{ c^{\alpha, \beta}_\gamma \mid \gamma \in\widehat{G}\}$ such that for any $g\in G$: 
	\begin{align*}
	\mathsf{N}_{\alpha,\beta}  \parens[\big]{ \alpha(g)\otimes \beta(g)} \mathsf{N}_{\alpha,\beta}^* ~&=~ \oplus_{\gamma \in \widehat{G}}  \gamma^{\oplus c^{\alpha,\beta}_\gamma}, \\
	 \chi_{\alpha}(g)\cdot \chi_{\beta}(g) ~&=~ \sum_{\gamma} c^{\alpha,\beta}_\gamma \cdot  \chi_{\gamma}(g), \\
	 c^{\alpha,\beta}_{\triv} ~&=~ \indicator{\alpha = \beta^*}.
	\end{align*}
	These coefficients are called Clebsch-Gordan coefficients for $G$. 
%	\begin{enumerate}[label=(\alph*)]
%		\item If $\oplus_{\gamma \in \widehat{G}} c^{\alpha,\beta}_\gamma \gamma$ is  the decomposition of $\alpha\otimes \beta$ as a $G$-representation.  Then, for all $g\in G$ it holds that: \[ \chi_{\alpha}(g)\cdot \chi_{\beta}(g) ~=~ \sum_{\gamma} c^{\alpha,\beta}_\gamma \cdot  \chi_{\gamma}(g). \]
		%\item  If $\oplus \delta_{k}$ is  the decomposition of $\beta$ into irreps. as a $G$-representation. Then it holds that,
		%\[   \E_{g}\brackets{\chi_{\alpha}(g)\cdot \chi_{\beta}(g)}= \sum_{k}    \]		
\end{fact}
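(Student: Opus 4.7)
The plan is to treat this as a textbook application of Maschke's complete-reducibility theorem together with the character orthogonality relation already recorded as Fact 2.8. First I would define $\alpha\otimes\beta \colon G\to \mathrm{U}_{d_\alpha d_\beta}$ by $(\alpha\otimes\beta)(g) = \alpha(g)\otimes\beta(g)$ and check that this is a unitary representation of $G$ (multiplicativity follows from the mixed product identity $(A\otimes B)(A'\otimes B') = (AA')\otimes(BB')$, and unitarity follows since a tensor product of unitaries is unitary). Since $G$ is finite, Maschke's theorem gives complete reducibility, so there is a unitary change of basis $\mathsf{N}_{\alpha,\beta}$ that simultaneously for all $g$ block-diagonalizes $\alpha(g)\otimes\beta(g)$ into irreducible blocks, each $\gamma\in\widehat{G}$ appearing with some multiplicity $c^{\alpha,\beta}_\gamma\in\mathbb{Z}_{\geq 0}$. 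That is exactly the first claimed identity.

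Next, for the character identity, I would take the trace of both sides of the first identity. The trace is invariant under conjugation by $\mathsf{N}_{\alpha,\beta}$, additive over direct sums, and satisfies $\mathrm{tr}(A\otimes B) = \mathrm{tr}(A)\cdot \mathrm{tr}(B)$. Applying these three facts to the two sides yields
\[
\chi_\alpha(g)\cdot \chi_\beta(g) \;=\; \sum_{\gamma\in\widehat{G}} c^{\alpha,\beta}_\gamma \cdot \chi_\gamma(g),
\]
which is the second identity.

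For the last identity, I would project onto the trivial character by averaging over $g$. Since $\chi_{\triv}\equiv 1$ and, by Fact 2.8, $\Ex{g}{\chi_\gamma(g)}=0$ for every non-trivial irrep $\gamma$, averaging the character identity collapses the right-hand side to $c^{\alpha,\beta}_{\triv}$, yielding $c^{\alpha,\beta}_{\triv} = \Ex{g}{\chi_\alpha(g)\chi_\beta(g)}$. Because each representation is unitary, $\alpha(g^{-1}) = \alpha(g)^*$, so the dual irrep $\alpha^*$ satisfies $\chi_{\alpha^*}(g) = \chi_\alpha(g^{-1}) = \overline{\chi_\alpha(g)}$. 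Substituting gives $c^{\alpha,\beta}_{\triv} = \Ex{g}{\overline{\chi_{\alpha^*}(g)}\,\chi_\beta(g)} = \ip{\chi_\beta}{\chi_{\alpha^*}}$, and a final application of character orthogonality (Fact 2.8) shows this equals $\indicator{\beta=\alpha^*}$, which is the same as $\indicator{\alpha=\beta^*}$ since $(\alpha^*)^*=\alpha$.

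This proof is essentially mechanical once complete reducibility is invoked; there is no real obstacle. The only mild subtlety to watch is that $\mathsf{N}_{\alpha,\beta}$ can be chosen \emph{unitary}, so that $\mathsf{N}_{\alpha,\beta}^{-1} = \mathsf{N}_{\alpha,\beta}^*$ as the statement uses; this is standard and follows from the fact that the isotypic decomposition of a unitary representation of a finite group is orthogonal with respect to the invariant Hermitian inner product.
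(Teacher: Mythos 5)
Your proof is correct and is the standard textbook argument; the paper itself states \cref{fact:sumChar} without proof, treating it as a known fact from representation theory, so there is nothing to compare it against. One small caution: in your final step you invoke the paper's \cref{lem:mean_zero}, which as literally stated gives $\ip{\chi_\beta}{\chi_{\alpha^*}} = d_\beta\cdot\indicator{\beta=\alpha^*}$ rather than $\indicator{\beta=\alpha^*}$; the standard (correctly normalized) orthogonality relation $\Ex{g}{\abs{\chi_\rho(g)}^2}=1$ for irreducible $\rho$ is what you actually need, and with it your conclusion $c^{\alpha,\beta}_{\triv}=\indicator{\alpha=\beta^*}$ is right.
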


The proof is an inductive unfolding of the expression by applying~\cref{fact:sumChar} and then using that the characters are eigenvectors.

\begin{theorem}[Precise Computation of Expectation]\label{theo:quasi}
Let $X$ be a {pseudo-Cayley} graph with respect to $G$, with eigenvalues $\{\lambda_{\alpha} \mid \alpha \in \irrep{G}\}$. Let $\mathcal{S} = \{i_1,\dots, i_k \}$ be any ordered subset of $[n]$, and $\{\rho_1,\dots, \rho_k\}$ be non-trivial irreps of $G$ and $\{\chi_i\}$ their associated characters. Then for any $k \geq 2$, 
%\Ex{\vec{x}\in X^n}{\chi_{\rho_1}(x_1)\cdots \chi_{\rho_n}(x_n)}
\[
\Ex{\vec{x} \sim \rwalk{n}}{\chi_1(x_{i_1})\cdots \chi_k(x_{i_k}))} 
 ~=~ \sum_{\gamma_1,\ldots, \gamma_{k-2} \in \irrep{G}}  \prod_{i=1}^{k-1} \parens[\Big]{c^{\rho_i, \gamma_i}_{\gamma_{i-1}}\lambda_{\gamma_i}^{\Delta_{i}(\scriptscriptstyle\inds)}} .
\]
where $\gamma_0 = \triv, \gamma_{k-1}= \rho_{k}$ are fixed in the summation. 
\end{theorem}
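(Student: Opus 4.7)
My plan is to peel off the characters from right to left, alternating between (i) taking a conditional expectation along a walk of length $\Delta_j$, which by the pseudo-Cayley property \emph{multiplies} the rightmost surviving character by its eigenvalue and \emph{pulls it back} to the previous labelled position, and (ii) invoking the Clebsch--Gordan decomposition from \cref{fact:sumChar} to collapse the two characters now sitting at the same vertex back into a single character. Each such peel contributes exactly one factor of the form $c^{\rho_j,\gamma_j}_{\gamma_{j-1}}\lambda_{\gamma_j}^{\Delta_j(\inds)}$ appearing in the product in the statement, so after $k-1$ peels one is left with a single character at position $i_1$, whose expectation under the uniform marginal vanishes unless it is trivial.

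Concretely, I would prove by downward induction on $m\in\{k,k-1,\ldots,1\}$ the identity
\begin{align*}
\Ex{\vec x\sim \rwalk{n}}{\prod_{j=1}^{k}\chi_{\rho_j}(x_{i_j})}
 = \sum_{\gamma_{m-1},\ldots,\gamma_{k-2}}
   \Bigl(\prod_{j=m}^{k-1} c^{\rho_j,\gamma_j}_{\gamma_{j-1}}\,\lambda_{\gamma_j}^{\Delta_j(\inds)}\Bigr)
   \cdot \Ex{\vec x\sim \rwalk{n}}{\Bigl(\prod_{j=1}^{m-1}\chi_{\rho_j}(x_{i_j})\Bigr)\chi_{\gamma_{m-1}}(x_{i_m})},
\end{align*}
with the conventions $\gamma_{k-1}=\rho_k$ and (after the last peel, $m=1$) $\gamma_0=\triv$. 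The base case $m=k$ is tautological. For the inductive step from $m+1$ to $m$, I first use the tower property to write
\[
\Ex{}{\chi_{\gamma_m}(x_{i_{m+1}})\mid x_{i_m}} \;=\; \bigl(A_X^{\Delta_m(\inds)}\chi_{\gamma_m}\bigr)(x_{i_m}) \;=\; \lambda_{\gamma_m}^{\Delta_m(\inds)}\,\chi_{\gamma_m}(x_{i_m}),
\]
using the pseudo-Cayley hypothesis (valid also for $\gamma_m=\triv$, since the all-ones vector is an eigenvector of eigenvalue $1$). Then I apply \cref{fact:sumChar} to the product $\chi_{\rho_m}(x_{i_m})\chi_{\gamma_m}(x_{i_m})=\sum_{\gamma_{m-1}}c^{\rho_m,\gamma_m}_{\gamma_{m-1}}\chi_{\gamma_{m-1}}(x_{i_m})$, absorbing the new sum over $\gamma_{m-1}$ into the outer summation. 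This exactly produces the $j=m$ factor and advances the induction.

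At the final step ($m=1$) the inner expectation becomes $\Ex{}{\chi_{\gamma_0}(x_{i_1})}$. By \cref{fact:single}, $x_{i_1}$ has uniform marginal, and by the orthogonality relations in \cref{lem:mean_zero} this expectation equals $\mathbb{1}[\gamma_0=\triv]$. Restricting the outer sum to $\gamma_0=\triv$ yields the claimed formula. The only bookkeeping subtlety---which I expect to be the main thing to handle carefully---is ensuring that the peeling remains valid when intermediate $\gamma_j$'s are trivial; this is fine because $\lambda_\triv=1$ and $\chi_\triv\equiv 1$, so a trivial $\gamma_j$ simply corresponds to the sub-walk on that segment contributing nothing, consistent with the formula.
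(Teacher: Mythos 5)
Your proposal is correct and follows essentially the same route as the paper: the paper also unrolls the expectation from the right by alternating (i) the $\Delta_j$-step eigenvector identity $\Ex{x_{i_{j+1}}\sim^{\Delta_j}x_{i_j}}{\chi_\gamma(x_{i_{j+1}})}=\lambda_\gamma^{\Delta_j}\chi_\gamma(x_{i_j})$ with (ii) the Clebsch--Gordan decomposition of \cref{fact:sumChar}, finishing with the uniform marginal of the first vertex. The only difference is presentational: the paper inducts on the number of characters (re-applying the theorem to a shorter product whose last irrep is $\gamma_{k-2}$), whereas you carry an explicit running identity via downward induction on the peel index; your remark on trivial intermediate $\gamma_j$'s makes explicit a point the paper leaves implicit.
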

\begin{proof}
The proof will be by induction on $n$. Since the graph is quasi-abelian all the characters are eigenvectors of the random walk matrix, $A_X$. Thus for any fixed $x_i$, 
\begin{align}\label{eq:one_step}
	\Ex{x_{i+1}\sim^{\Delta_i}x_i}{\chi_\gamma(x_{i+1})} ~=~  { \chi_\gamma\parens[\Big]{A_X^{\Delta_i}\cdot x_{i}}} ~=~ \chi_{\gamma}(x_i)\cdot \lambda_\gamma^{\Delta_i}.
	\end{align}

We do the base case when $n = 2$ which corresponds directly to~\cref{fact:sumChar}. 
\begin{align}
\Ex{(x_1,x_2)}{\chi_{\rho_1}(x_1)\cdot \chi_{\rho_2}(x_2)} ~&=~  	
\Ex{x_1}{ \chi_{\rho_1}(x_1)\cdot  \Ex{x_2\sim^{\Delta_1} x_1}{\chi_{\rho_2}(x_2)} }  \nonumber\\
~&=~   \lambda^{\Delta_1}_{\rho_2} \cdot \Ex{x_1}{ \chi_{\rho_1}(x_1)\cdot  \chi_{\rho_2}(x_1) }	&& (\text{Using \cref{eq:one_step}}) \nonumber\\
~&=~   \lambda^{\Delta_1}_{\rho_2} \cdot \Ex{x_1}{ \sum_{\gamma} c^{\rho_1,\rho_2}_{\gamma}\chi_{\gamma}(x_1) }	&& (\text{Using \cref{fact:sumChar}})\label{eq:base_2}\\
~&=~    c^{\rho_1,\rho_2}_{\triv} \cdot \lambda^{\Delta_1}_{\rho_2}. 	 \nonumber\end{align}
The last step above uses the fact that $x_1$ is sampled uniformly from $X$ and thus for any non-trivial irrep, $\Ex{x_1}{\chi_\gamma(x_1)} = 0$.
\paragraph{ Inductive Step} For the inductive step denote $\theta:= \chi_{1}(x_1)\cdots \chi_{n-2}(x_{n-2})$. By our inductive hypothesis, for any irreducible representation $\gamma$,
\begin{equation}
\label{eq:induct2}	\Ex{\vec{x}\in \rwalk{n}}{\theta(x_1,\cdots, x_{n-2})\cdot \chi_{\gamma}(x_{n-1})} ~=~ \sum_{\gamma_1\cdots, \gamma_{n-3}} \prod_{i=1}^{n-2} \parens[\Big]{c^{\rho_i, \gamma_i}_{\gamma_{i-1}}\lambda_{\gamma_i}^{\Delta_{i}}}
\end{equation}
Here, $\gamma_{n-2} = \gamma$ by the inductive claim.  We now compute the inductive step,
\begin{align*}
	\cE_X(\chi_{\vrho}) ~&=~ \Ex{(x_1,\cdots, x_{n-2}) \sim \rwalk{n-2}}{\theta(x_1, \cdots, x_{n-2})\cdot \Ex{x_{n-1}, x_n}{ \chi_{\rho_{n-1}}(x_{n-1})\chi_{\rho_n}(x_n)} }\\
~&=~	\Ex{(x_1,\cdots, x_{n-2}) \sim \rwalk{n-2}}{\theta(x_1, \cdots, x_{n-2})\cdot \lambda^{\Delta_{n-1}}_{\rho_{n}} \cdot \E_{x_{n-1}}\brackets[\bigg]{ \sum_{\gamma_{n-2} \in \irrep{G}} c^{\rho_{n-1},\rho_n}_{\gamma_{n-2}}\chi_{\gamma_{n-2}}(x_{n-1}) } } \\
~&=~	\sum_{\gamma_{n-2}} c^{\rho_{n-1},\rho_n}_{\gamma_{n-2}} \lambda^{\Delta_{n-1}}_{\rho_{n}} \Ex{(x_1,\cdots, x_{n-1}) \sim X^{n-1}}{\theta(x_1, \cdots, x_{n-2})\cdot \chi_{\gamma_{n-2}}(x_{n-1}) } \\
~&=~	\sum_{\gamma_{n-2}} c^{\rho_{n-1},\rho_n}_{\gamma} \lambda^{\Delta_{n-1}}_{\rho_{n}} \sum_{\gamma_1\cdots, \gamma_{n-3}} \prod_{i=1}^{n-2} \parens[\Big]{c^{\rho_i, \gamma_i}_{\gamma_{i-1}}\lambda_{\gamma_i}^{\Delta_{i}}} && \hspace{-5em}\text{(Using \cref{eq:induct2})} \\
~&=~	 \sum_{\gamma_1\cdots, \gamma_{n-2}} \prod_{i=1}^{n-1} \parens[\Big]{c^{\rho_i, \gamma_i}_{\gamma_{i-1}}\lambda_{\gamma_i}^{\Delta_{i}}} .
\end{align*}
Note the from the expression $\gamma_{n-1} = \rho_{n-1}$. Thus, we obtain the RHS.
\end{proof}

We now derive two consequences from the above theorem that we will use later. We start with a simple one that we have already computed as the base case in our above proof. We write out separately as it we will utilize this case later. Moreover, it is conceptually important because it captures the operation of projecting to the space of $G$-invariants. 
\begin{corollary}\label{claim:level2_tensor}
Let $X$ be any pseudo Cayley graph.  and let $\vrho$ be such that $|\vec{\rho}|=2$ where $\rho_i=\alpha$ and $\rho_j=\beta$ for $\alpha,\beta\in \widehat{G}$  and $1\leq i<j\leq n$. Then,
\begin{align*}
\cE_X(\vrho) ~&=~0 \qquad \text{if}\;\; \alpha \neq \beta^*, \\
\cE_X(\vrho) ~&=~\lambda_\alpha(X)^{j-i}\cdot \Ex{g\sim G}{\alpha\otimes \alpha^*(g)} ~=:~ \lambda_\alpha^{j-i}\cdot  \sfM_\alpha. 
\end{align*}
Here, $\sfM_\alpha$ is a $d_\alpha^2\times d_\alpha^2$ matrix with  $\tr(\sfM_{\alpha})= 1$.  
\end{corollary}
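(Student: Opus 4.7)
The plan is to compute $\cE_X(\vrho) = \Ex{\vec x \sim \rwalk{n}}{\alpha(x_i)\otimes \beta(x_j)}$ directly, essentially replaying the base case of~\cref{theo:quasi} but keeping the matrix-valued representations in place of their traces. First I would apply the tower property, conditioning on $x_i$. By~\cref{fact:single} the marginal of $x_i$ is uniform on $X$, and conditional on $x_i$ the vertex $x_j$ is the endpoint of a $(j-i)$-step walk from $x_i$, so
\begin{equation*}
\cE_X(\vrho) \;=\; \Ex{x_i\sim X}{\alpha(x_i)\otimes \Ex{x_j \mid x_i}{\beta(x_j)}}.
\end{equation*}
Next I would invoke the pseudo-Cayley hypothesis (\cref{def:pseudo}): each matrix entry of $\beta\circ\varphi$ is an eigenvector of $A_X$ with eigenvalue $\lambda_\beta$, so entrywise $\Ex{x_j \mid x_i}{\beta(x_j)} = \lambda_\beta^{\,j-i}\,\beta(x_i)$. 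Combining this with the unbiasedness of $\varphi$ collapses the expression to a single group average:
\begin{equation*}
\cE_X(\vrho) \;=\; \lambda_\beta^{\,j-i}\cdot \Ex{g\sim G}{\alpha(g)\otimes \beta(g)}.
\end{equation*}

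The remaining step is to identify this average via representation theory. Because $G$ is a group, $P := \Ex{g}{\alpha(g)\otimes \beta(g)}$ is idempotent, hence it is the orthogonal projector onto the $G$-invariant subspace of $\alpha\otimes\beta$. By Schur's lemma, $\dim(\alpha\otimes\beta)^G = \dim \Hom_G(\beta^*,\alpha)$, which vanishes unless $\alpha\cong\beta^*$ and equals $1$ in that case. This already yields the first case of the corollary. When $\alpha=\beta^*$, the matrix $\sfM_\alpha = \Ex{g}{\alpha(g)\otimes\alpha^*(g)}$ is then a rank-one orthogonal projector, so $\tr(\sfM_\alpha)=1$.

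Finally, I would note that since $A_X$ is real and symmetric, its eigenvalues are real; complex conjugation sends an eigenvector with eigenvalue $\lambda$ to one with eigenvalue $\lambda$, and the matrix entries of $\alpha^*$ are precisely the complex conjugates of those of $\alpha$, so $\lambda_\beta=\lambda_{\beta^*}=\lambda_\alpha$. This gives the stated form $\lambda_\alpha^{j-i}\cdot \sfM_\alpha$. I do not anticipate any real obstacle here: the only mildly non-trivial point is the identification of $P$ as the projector onto invariants and the dimension count via Schur's lemma; everything else is a two-line application of the pseudo-Cayley eigenvector property together with \cref{fact:single}.
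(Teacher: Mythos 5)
Your proposal is correct and follows essentially the same route as the paper: condition on $x_i$, use the pseudo-Cayley eigenvector property to pull out $\lambda^{j-i}$, and reduce to the group average $\Ex{g\sim G}{\alpha(g)\otimes\beta(g)}$, exactly as in the $k=2$ base case of \cref{theo:quasi}. Your evaluation of that average via idempotence and Schur's lemma is an equivalent substitute for the paper's use of the Clebsch--Gordan decomposition (\cref{fact:sumChar}) with $c^{\alpha,\beta}_{\triv}=\indicator{\alpha=\beta^*}$, and your remark that $\lambda_{\alpha^*}=\lambda_\alpha$ is a correct handling of a point the paper leaves implicit.
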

\begin{proof}
This is the same computation as in the proof for the base case of $k=2$ but now utilizing the matrix decomposition of $\alpha\otimes \beta$ from~\cref{fact:sumChar}.
%First note that, by direct sum decomposition of $\alpha \otimes \alpha$, we have $(\alpha \otimes \alpha)(g)=\sfU\cdot\parens{Triv~\oplus_k \gamma_i}(g)\cdot \sfU^{-1}$ for some matrix $\sfU$ and set of irreducibles $\{\gamma_k\in \hat{G}\}_{k}$
%\begin{align*} 
%	\E_{\vec g \sim X^n }[\vrho(\vec g)]&=\E_{g\sim H, s_1,\dots,s_{j-i}\sim S}[\rho_i(g)\otimes \rho_j(gs_1s_2\dots s_{j-i})]
%	\\&= \lambda^{j-i} \E_{g\sim H}[\alpha(g)\otimes \beta(g)]
%	\\&=\lambda^{j-i}\cdot \sfM \qedhere
%\end{align*}
\end{proof}
	
Notice in the statement of~\cref{theo:quasi} that if we have terms with many of the $\gamma_i$ being trivial, then this expectation can be large, as $\lambda_\triv = 1$. To see this, assume the extreme case when every  $\lambda_\gamma=1$. Then, the term is just an inductive way to count the multiplicity of the trivial rep in the tensor-rep, $\rho_1 \otimes \cdots \rho_n$.	To give a better bound, we make the following important observation that as no two consecutive $\gamma_j ,\gamma_{j+1}$ can be trivial. We recall the definition of $\mathcal{I}_k$, 
\[
 \mathcal{I}_k~=~ \braces[\big]{\{1, k-1\} \subseteq I  \subseteq [k-1] \mid  \;\forall\; 1 < j < k-2,\; \{j,j+1\} \cap I \neq \emptyset}
\]

\begin{observation} Let $\rho$ be any non-trivial irrep of $G$. Then, $c^{\rho, \triv}_{\triv} = 0$. Let $\{\gamma_0, \gamma_1,\cdots, \gamma_{k-1}\}$ be a sequence of irreps such that $\gamma_0=\triv$ and $\gamma_{k-1}\neq \triv$. Define, $T_\gamma:= \{i \mid \gamma_i\neq \triv\}$.   Then,
\[
\prod_{i=1}^{k-1} {c^{\rho_i, \gamma_i}_{\gamma_{i-1}}} = 0 \quad \text{if } T_\gamma \not\in \mathcal{I}_k \quad .
\]
\end{observation}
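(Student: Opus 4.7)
The plan is to read off both claims directly from the Clebsch--Gordan identities in \cref{fact:sumChar}, in particular the formula $c^{\alpha,\beta}_{\triv} = \indicator{\alpha=\beta^{*}}$, together with the elementary identity $\alpha\otimes\triv \cong \alpha$, which yields the auxiliary fact $c^{\alpha,\triv}_{\gamma} = \indicator{\gamma=\alpha}$ that I will use repeatedly.

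For the first statement, I specialize $c^{\alpha,\beta}_{\triv} = \indicator{\alpha=\beta^{*}}$ at $\beta=\triv$ (which is self-dual) to obtain $c^{\rho,\triv}_{\triv} = \indicator{\rho=\triv}$, which vanishes for every non-trivial $\rho$. Equivalently, $\rho\otimes\triv \cong \rho$ is a non-trivial irreducible, so the multiplicity of $\triv$ in its decomposition is $0$.

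For the second statement, the proof is by contrapositive: assume $\prod_{i=1}^{k-1} c^{\rho_i,\gamma_i}_{\gamma_{i-1}} \neq 0$, and verify that $T_\gamma$ satisfies the three defining properties of $\mathcal{I}_k$. \emph{(i) $1\in T_\gamma$:} the factor at $i=1$ equals $c^{\rho_1,\gamma_1}_{\gamma_0} = c^{\rho_1,\gamma_1}_{\triv} = \indicator{\rho_1=\gamma_1^{*}}$, so non-vanishing forces $\gamma_1 = \rho_1^{*} \neq \triv$. \emph{(ii) $k-1\in T_\gamma$:} this is built into the setup, since $\gamma_{k-1}=\rho_k\neq \triv$. \emph{(iii) No two consecutive $\gamma_j,\gamma_{j+1}$ are both trivial:} if $\gamma_j=\gamma_{j+1}=\triv$ for some index $j$ in the admissible range, then the factor at index $j+1$ becomes $c^{\rho_{j+1},\triv}_{\triv}$, which is zero by the first statement (as $\rho_{j+1}$ is non-trivial), contradicting the assumed non-vanishing. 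These three properties together are exactly the defining conditions of membership in $\mathcal{I}_k$.

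I do not anticipate any real obstacle: both claims amount to bookkeeping on the Clebsch--Gordan formula. The only point requiring minor care is handling the boundary indices ($i=1$ uses the frozen value $\gamma_0=\triv$, and $i=k-1$ uses $\gamma_{k-1}=\rho_k$), but each reduces immediately to a single application of $c^{\alpha,\beta}_{\triv}=\indicator{\alpha=\beta^{*}}$.
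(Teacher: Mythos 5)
Your proof is correct and follows essentially the same route as the paper's: both rest on the single fact that $c^{\rho,\triv}_{\triv}=0$ for non-trivial $\rho$ (which you derive from $c^{\alpha,\beta}_{\triv}=\indicator{\alpha=\beta^*}$, while the paper reads it off from $\rho\otimes\triv\cong\rho$), combined with the observation that $T_\gamma\notin\mathcal{I}_k$ forces two consecutive trivial $\gamma$'s (possibly including $\gamma_0$), killing the corresponding factor. Your contrapositive phrasing is just a logical reformulation of the paper's direct case analysis.
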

\begin{proof}
By definition, $c^{\rho, \triv}_{\triv}$ is the multiplicty of the trivial representation in $\rho\otimes \triv$ which is zero as $\rho$ is a non-trivial irrep. Now, if $T_\gamma\not \in \mathcal{I}_k$, either $1\not \in T_\gamma$ or there exists $j$ such that $\{j-1,j\} \not \in T_\gamma$. This is because $k-1\in T_\gamma$ by definition. In the first case, $\gamma_0 =\gamma_1 = \triv$. In the second, we have $\gamma_j, \gamma_{j-1} = \triv$. So we have  that either the first term or the $j^{\textrm{th}}$-term in the product is zero. \end{proof}

This shows that the term in~\cref{theo:quasi} only sums over a subset of all possible sequences of irreps. To formalize this we make the following definition

%\begin{definition}[Valid Indices]   We define the following collection of subsets of $[k-1].$ 
%\[ \Lambda_k~=~ \braces[\big]{\{ k-1\} \subseteq I  \subseteq [k-1] \mid  \;\forall\; 1 < j < k-2,\; \{j,j+1\} \cap I \neq \emptyset}
%\] We say that a sequence of irreps $\{\gamma_1,\cdots, \gamma_{k-1}\}$ is \textit{valid} if its non-trivial indices, \ie $T_\gamma:= \{i \mid \gamma_i\neq \triv\} \in \Lambda_k$ . 
%\end{definition}

%\tr\parens{\cE_X(\vrho)} ~=~ \cE_X(\chi_{\vrho}) ~=~ \sum_{\gamma_1,\ldots, \gamma_{k-2} \in \irrep{G}}  \prod_{i=1}^{k-1} \parens[\Big]{c^{\rho_i, \gamma_i}_{\gamma_{i-1}}\lambda_{\gamma_i}^{\Delta_{i}(\scriptscriptstyle\inds)}}

\begin{corollary}\label{cor:trace}  Let $\{\rho_1, \dots, \rho_k\}$ be a set of $k$ non-trivial irreps of $G$, and $\{\chi_1,\dots \chi_k\}$ be the corresponding characters. Let $X$ be any pseudo Cayley graph on $G$. Then for any subset $\inds$ of size $k$,
\[
\abs{ \cE_{X,\inds}(\chi_{\vrho})} ~~  ~\leq~ 
 \angles[\big]{\chi_\triv, \,\chi_1\cdots \chi_k}\cdot \max_{T\in \mathcal{I}_k} \,\lambda^{\sum_{i \in T}\Delta_i(\inds)} \;\;.
\]
%	\[\tr\parens[\big]{\cE_X(\vrho)} \leq \aprxfive \cdot \Ex{g\sim G}{\chi_{\tilde{\vrho}} (g)} = \angles[\big]{\chi_\triv, \,\chi_{\tilde{\vrho}}},~~\text{ for } \eta \leq  (??) . \]
%	Note that the expression $\Ex{g\sim G}{\chi_\vrho (g)}$ has a nice interpretation as it counts the multiplicity of trivial rep in $\tilde{\vrho}$.
\end{corollary}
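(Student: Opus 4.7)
The starting point is the exact expression from \cref{theo:quasi}, which writes
\[
\cE_{X,\inds}(\chi_{\vrho}) ~=~ \sum_{\gamma_1,\ldots,\gamma_{k-2}\in \widehat{G}} \prod_{i=1}^{k-1} c^{\rho_i,\gamma_i}_{\gamma_{i-1}}\,\lambda_{\gamma_i}^{\Delta_i(\scriptscriptstyle\inds)},
\qquad \gamma_0=\triv,\ \gamma_{k-1}=\rho_k,
\]
as a combinatorial sum over paths in the Bratteli-like diagram of $\widehat G$. Since Clebsch--Gordan coefficients are non-negative integers (they are multiplicities), the triangle inequality gives
\[
\abs{\cE_{X,\inds}(\chi_{\vrho})} ~\leq~ \sum_{\gamma_1,\ldots,\gamma_{k-2}} \prod_{i=1}^{k-1} c^{\rho_i,\gamma_i}_{\gamma_{i-1}}\,\abs{\lambda_{\gamma_i}}^{\Delta_i(\scriptscriptstyle\inds)}.
\]

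The next step is to split each term according to its trivial-support pattern $T_\gamma := \{i \in [k-1] : \gamma_i \neq \triv\}$. By the Observation stated just before the corollary, any sequence $\gamma$ with $T_\gamma \notin \mathcal{I}_k$ contributes $0$ to the sum, so the summation reduces to sequences with $T_\gamma \in \mathcal{I}_k$. For such a surviving sequence, the eigenvalue factors obey $|\lambda_{\gamma_i}|^{\Delta_i} \leq \lambda^{\Delta_i}$ whenever $i \in T_\gamma$, and $|\lambda_{\gamma_i}|^{\Delta_i} = 1$ whenever $i \notin T_\gamma$ (since $\lambda_{\triv}=1$). Therefore
\[
\prod_{i=1}^{k-1}\abs{\lambda_{\gamma_i}}^{\Delta_i}
~\leq~ \lambda^{\sum_{i \in T_\gamma} \Delta_i}
~\leq~ \max_{T\in \mathcal{I}_k}\lambda^{\sum_{i\in T}\Delta_i(\inds)}.
\]
Pulling out this maximum in front of the sum, the remaining combinatorial factor is $\sum_{\gamma}\prod_{i=1}^{k-1} c^{\rho_i,\gamma_i}_{\gamma_{i-1}}$ ranging over all sequences (the extra sequences with $T_\gamma \notin \mathcal{I}_k$ contribute zero and can be freely reintroduced).

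The main conceptual step is to identify this remaining coefficient sum with $\langle \chi_\triv,\chi_1\cdots\chi_k\rangle$. This is done by iterating the Clebsch--Gordan identity from \cref{fact:sumChar} starting from the right: writing $\chi_{\rho_{k-1}}\chi_{\rho_k} = \sum_{\gamma_{k-2}} c^{\rho_{k-1},\rho_k}_{\gamma_{k-2}}\chi_{\gamma_{k-2}}$, then multiplying by $\chi_{\rho_{k-2}}$ and decomposing again, and so on down to the multiplication by $\chi_{\rho_1}$, gives the telescoping identity
\[
\chi_{\rho_1}\cdots \chi_{\rho_k}(g) ~=~ \sum_{\gamma_0,\gamma_1,\ldots,\gamma_{k-2}}\prod_{i=1}^{k-1} c^{\rho_i,\gamma_i}_{\gamma_{i-1}}\,\chi_{\gamma_0}(g).
\]
Taking the inner product with $\chi_\triv$ and using $\langle \chi_\triv,\chi_{\gamma_0}\rangle = \mathbbm{1}[\gamma_0=\triv]$ singles out the $\gamma_0=\triv$ slice, which is exactly the sum we are left with. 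Combining gives the claimed bound.

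The only delicate step is the telescoping identification in the last paragraph; once the indexing convention of \cref{theo:quasi} (with $\gamma_0=\triv$, $\gamma_{k-1}=\rho_k$) is aligned with the right-to-left Clebsch--Gordan unfolding, the match is immediate and everything else is bookkeeping on non-negative real terms.
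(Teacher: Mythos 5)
Your proposal is correct and follows the paper's own argument essentially step for step: start from the exact expression in \cref{theo:quasi}, restrict to sequences with $T_\gamma \in \mathcal{I}_k$ via the preceding Observation, bound $\abs{\lambda_{\gamma_i}} \leq \lambda$ on the non-trivial slots and pull out the maximum over $T \in \mathcal{I}_k$, then identify the residual Clebsch--Gordan sum with $\angles[\big]{\chi_\triv,\chi_1\cdots\chi_k}$. The only difference is that you spell out the telescoping character identity explicitly where the paper dispatches it with a one-line remark, which is a welcome clarification but not a different proof.
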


\begin{proof}
Recall from~\cref{theo:quasi} that, 
\begin{align*}
\cE_{X,\inds}(\chi_{\vrho}) ~&=~ \sum_{\gamma_1,\ldots, \gamma_{k-2} \in \irrep{G}}  \prod_{i=1}^{k-1} \parens[\Big]{c^{\rho_i, \gamma_i}_{\gamma_{i-1}}\lambda_{\gamma_i}^{\Delta_{i}(\scriptscriptstyle\inds)}}\\
\cE_{X,\inds}(\chi_{\vrho}) ~&=~ \sum_{\substack{\gamma_1,\ldots, \gamma_{k-2} \in \irrep{G},\\ T_\gamma\in \mathcal{I}_k}}  \prod_{i=1}^{k-1} \parens[\Big]{c^{\rho_i, \gamma_i}_{\gamma_{i-1}}\lambda_{\gamma_i}^{\Delta_{i}(\scriptscriptstyle\inds)}}\\
\abs{\cE_{X,\inds}(\chi_{\vrho})} ~&\leq~ \sum_{\substack{\gamma_1,\ldots, \gamma_{k-2} \in \irrep{G},\\ T_\gamma\in \mathcal{I}_k}} \lambda^{\sum_{i \in T_\gamma}\Delta_i(\inds)} \prod_{i=1}^{k-1} c^{\rho_i, \gamma_i}_{\gamma_{i-1}}  && (\text{For }\gamma_i \in T_\gamma,\; \abs{\lambda_{\gamma_i}} \leq \lambda )\\
~&\leq~ \max_{T\in \mathcal{I}_k}\;\lambda^{\sum_{i \in T}\Delta_i(\inds)} \sum_{\gamma_1,\ldots, \gamma_{k-2} \in \irrep{G}}  \prod_{i=1}^{k-1} c^{\rho_i, \gamma_i}_{\gamma_{i-1}}\\
~&\leq~ \max_{T\in \mathcal{I}_k}\;\lambda^{\sum_{i \in T}\Delta_i(\inds)} \cdot  \angles[\big]{\chi_\triv, \,\chi_{\tilde{\vrho}}}  .
\end{align*}

The last term just inductively counts the multiplicity of the trivial rep in the tensor representation $\rho_1\otimes\cdots \otimes \rho_k$ which is equal to $ \angles[\big]{\chi_\triv, \,\chi_1\cdots \chi_k}$ from~\cref{fact:sumChar}.
\end{proof}
	
	 %\begin{claim}
%For any $\vrho = \rho_1\otimes\cdots \otimes \rho_t$, such that $|\vrho| = k$. Let $\inds=\{i_1<\dots<i_k \}$ be the non-trivial coordinates of $\vrho$. 
	% and let $i_1$ be the first non-trivial co-ordinate in $\vrho$. 
%	Then, for quasi-abelian Cayley graphs, for any $g\in G$ it holds that,
%	\[\E_{\vec s\sim S^{n-1}} \brackets{\otimes_{i}\rho_i(gs_1\dots s_{i-1})}= \oplus_{j} c_j \beta_{j} (g),
%	\]
	%\[\E_{\vec s\sim S^{n-1}} \brackets{\otimes_{i}\rho_i(gs_1\dots s_{i-1})}= \oplus_{j} c_i \E_{\vec s\sim S^{i_1-1}} \brackets{ \beta_{i} (gs_1\dots s_{i_1-1})},
	%\]
%	where $\oplus_{j} \beta_{j}$ is the direct sum decomposition of $\vrho$ as a $G$-representation and each $c_j\in \Lambda(\inds)$. %$h=gs_1\dots s_{i_1}$ and 
%\end{claim}

%
%\begin{claim}\label{lem:rhoij}
%Let $\rho_1, \rho_2 \in \Irrep(G)$. Then, 
%$\Ex{g \in G}{\rho_1(g)\otimes \rho_2( g) } \neq 0 $	only if $\rho_1 = \rho_2^*$. 
%\end{claim}
%\begin{proof}
%	Let $\rho_1\otimes\rho_2 = \oplus_{\rho \in \Irrep(G)} a_\rho \rho $. Thus, one can block-diagonalize the space with $a_\rho$ copies of $\rho$. Now, since $\sum_g \rho(g) = 0$ for any non-trivial irrep $\rho$, this term is non-zero iff the trivial representation has a non-zero multiplicity. This happens iff $\rho_i = \rho_j^*$. Moreover, the multiplicity is one.
%%	 $\mathrm{triv} \subseteq \rho_1\otimes\rho_2 $ 	
%\end{proof}

%Consider $H \subseteq G$ and $\varphi: G\to H$

\section{Fooling Symmetric Functions and Word Functions}

The main goal is to study the pseudorandomness of expander walks via families of test functions. For a function, $f:G^n\to \C^{k \times k}$, we wish to analyze
\[
\cE_X(f) = \Ex{\vec{x} \sim \rwalk{n}}{f(x_1,\dots, x_n)} - \Ex{\vec{x} \sim \mathrm{Unif}_{n}}{f(x_1,\dots, x_n)} \; .
\]	

We have already analyzed this for tensor functions (\cref{theo:main_tensor}). Using Fourier transform, we will first see how studying the fooling of arbitrary functions reduces the problem to measuring the fooling of tensor product of irreducible representations.  		 
%Moreover, for abelian groups, we get a bound of $O(|G|\lambda)$ and thus improve the bound from $O(p^p\lambda)$ to $O(p\lambda)$ for $G =\Z_p$.   T
\subsection{A general reduction to fooling irreps}

\begin{claim}\label{claim:fooling}
	Let, $f:G^n \rightarrow \C$ be any function, and denote its degree-$i$ component as $f_i$. Then,
	\begin{align*}
		\cE_X(f) ~&=~ \sum_{i\geq 2}  \cE_X(f_i), \;\;\text{and}\\
		\abs[\big]{\,\cE_X(f_i)} ~&\leq~  \sum_{\vrho, |\vrho| = i}d_{\vrho} \norm[\big]{f(\hat{\rho})}_\tr \cdot\norm[\big]{\cE_X(\vrho)}_\op, \; \forall i \in [n]. 
	\end{align*}
	%	\[\cE_X(f) ~=~ \abs[\Big]{\sum_{i\geq 2} \Ex{\vec g\sim X^n}{f_i(\vec g)}} ~\leq~ \sum_{\vrho, |\vrho| \geq 2}d_{\vrho} \norm{f(\hat{\rho})}_1\cdot\norm{\E_{\vec{g}\in X^t} [\rho(\vec{g})]-\E_{\vec{g}\in \mu^t} [\rho(\vec{g})]}_\op . \]
\end{claim}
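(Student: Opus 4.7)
The plan is to Fourier-decompose $f$ by degree and then handle each level separately.

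First, I would write $f = \sum_{k=0}^{n} f_k$ using the degree decomposition (as defined in the preliminaries), so that by linearity
\[
\cE_X(f) \;=\; \sum_{k=0}^{n} \cE_X(f_k).
\]
To obtain the first equation, I need to show that the $k=0$ and $k=1$ terms vanish. The degree-$0$ piece $f_0$ is the constant $\mu(f) = \widehat{f}(\rho_\triv)$, so it contributes the same quantity under both the walk distribution and the uniform distribution, hence $\cE_X(f_0)=0$. For $f_1$, each term in the Fourier expansion has the form $d_\rho \langle \widehat f(\vrho),\,\vrho(\vec x)\rangle$ with $\vrho = \triv\otimes\cdots\otimes\rho\otimes\cdots\otimes\triv$ non-trivial at a single coordinate $j$. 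Under the uniform distribution on $G^n$ the expectation is zero by \cref{lem:mean_zero}. Under $\rwalk{n}$, the marginal at coordinate $j$ is uniform on $X$ by \cref{fact:single}, and since $\phi$ is unbiased the induced marginal on $G$ is uniform, so again \cref{lem:mean_zero} gives zero. This yields $\cE_X(f_1)=0$, and the first equation follows.

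For the second inequality, I would apply Fourier inversion to write
\[
f_k(\vec x) \;=\; \sum_{\vrho:\,|\vrho|=k} d_{\vrho}\, \bigl\langle \widehat f(\vrho),\, \vrho(\vec x)\bigr\rangle_{\HS},
\]
and push expectations inside the sum. Under the uniform distribution on $G^n$ every non-trivial $\vrho$ (i.e.\ with $|\vrho|\geq 1$) has $\Ex{}{\vrho(\vec x)}=0$ by \cref{lem:mean_zero} applied to the product group, so
\[
\cE_X(f_k) \;=\; \sum_{\vrho:\,|\vrho|=k} d_{\vrho}\, \bigl\langle \widehat f(\vrho),\, \cE_X(\vrho)\bigr\rangle_{\HS}.
\]
Now I apply the triangle inequality together with the matrix H\"older inequality $|\langle A, B\rangle_{\HS}| = |\tr(A^* B)| \leq \|A\|_\tr\,\|B\|_\op$, which gives exactly
\[
\bigl|\cE_X(f_k)\bigr| \;\leq\; \sum_{\vrho:\,|\vrho|=k} d_{\vrho}\, \norm{\widehat f(\vrho)}_\tr \cdot \norm{\cE_X(\vrho)}_\op.
\]

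The steps are essentially bookkeeping once the right inequalities are invoked; the only subtle points are (i) verifying that the degree-$1$ component really vanishes for \emph{random walk} expectations (which is where \cref{fact:single} together with the unbiased labeling assumption is crucial, and is the one place where the pseudorandomness of $X$ is not needed), and (ii) lining up the inner-product conventions between Fourier inversion and the trace pairing so that matrix H\"older applies cleanly. Neither is a genuine obstacle; the claim is a general Parseval/H\"older-type reduction that sets up the application of \cref{cor:op_irrep} and \cref{cor:trace} in later sections.
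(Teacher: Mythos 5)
Your proposal is correct and follows essentially the same route as the paper: linearity of $\cE_X$ over the degree decomposition, vanishing of the level-$0$ and level-$1$ terms via \cref{fact:single} and the unbiased labeling, and then Fourier inversion plus the trace--operator-norm H\"older pairing $|\tr(A^*B)| \leq \|A\|_\tr \|B\|_\op$ for each level. The only cosmetic difference is that for level $1$ you show both expectations vanish, whereas the paper only notes they coincide; both rest on the same marginal-uniformity fact.
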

\begin{proof} 
	By definition $f = \sum_{i=0}^n f_i$, and by linearity $\cE_X(f) = \sum_{i = 0}^n  \cE_X(f_i)$. We will first show that $\cE_X(f_0) = \cE_X(f_1) = 0$ to prove the first claim.
	
	The level $0$ function, $f_0$ is a constant and hence its mean is the same under any distribution. Thus, $\cE_X(f_0) = \nu(f_0) - \mu(f_0) = 0$. Applying the Fourier transform to level $i$,
	\begin{align*}
		f_i ~&=~  \sum_{\vrho\in \mathrm{Irrep}(G^n), \abs{\vrho} = i} d_\vrho\ip{\hat{f}(\rho)}{ \rho(\vec{g})},\\
		\mu(f_i) ~&=~  \sum_{\abs{\vrho} = i} d_\vrho\ip{\hat{f}(\vrho)}{ \mu(\vrho)} \; .
	\end{align*}
	For level $1$, $\vrho(\vec{g}) = \rho_k(g_k)$ for some $k$, and by ~\cref{fact:single}, $\nu_n(\vrho(\vec{g})) = \gamma_k(g_k) = \nu_1(g_k) =  \mu(g_k)$. Thus, $ \nu_n(\vrho) =  \mu^n(\vrho)$, and $\cE_X(f_1) = 0$. Now, to obtain the second inequality, we use the above equation for our two distributions (the random walk, $\nu$, and the product $\mu^n$): 
	\begin{align}
		\cE_X(f_i) ~&=~ {\sum_{\abs{\vrho} = i} d_\vrho\ip{\hat{f}(\vrho)}{ \nu_n(\vrho) - \mu^n(\vrho)   }_{\HS}}\\
\label{eq:level_i_fool}		~&=~ {\sum_{\abs{\vrho} = i} d_\vrho\ip{\hat{f}(\vrho)}{ \cE_X(\vrho)   }_{\HS}}\\
		\abs{\cE_X(f_i)} ~&\leq~ {\sum_{\abs{\vrho} = i} d_\vrho\norm[\big]{f(\hat{\rho})}_\tr \cdot\norm[\big]{\cE_X(\vrho)}_\op}.
		%~&=~ {\sum_{\vrho\in \mathrm{Irrep}(G), \vrho\neq 1} d_\vrho\ip{\hat{f}(\vrho)}{ \E_{\vec{g}\in X^n} [\vrho(\vec{g})]}_{\HS} }
		%\\
		%~&=~ {\sum_{\vrho\in \mathrm{Irrep}(G^n), \vrho: |\vrho|\geq 2} d_\vrho\ip{\hat{f}(\rho)}{ \E_{\vec{g}\in X^n} [\vrho(\vec{g})]}_{\HS} }
		%\\
		%~&\geq~ \abs[\Big]{\sum_{\vrho\in \mathrm{Irrep}(G^n), \vrho: |\vrho|= 2} d_\vrho\ip{\hat{f}(\rho)}{ \E_{\vec{g}\in X^n} [\vrho(\vec{g})]}_{\HS} }~-~\abs[\Big]{\sum_{\vrho\in \mathrm{Irrep}(G^n), \vrho: |\vrho|\geq 3} d_\vrho\ip{\hat{f}(\rho)}{ \E_{\vec{g}\in X^n} [\vrho(\vec{g})]}_{\HS} }
	\end{align}	
	The last inequality follows by combining Von Neumann's trace inequality and H\"{o}lder's inequality. 
\end{proof}

%\snote{I thought the last inequality is just Holder's.}

%\begin{remark}
%The inequality can be strengthened for 	
%\end{remark}

\subsection{Fooling symmetric functions}
Let $f:G^n \to \C$ be a function that is invariant under any permutation of the input tuple. Such a function only depends on the counts of each group element in the tuple and, therefore, can be viewed as a symmetric function on $\Z_{|G|+1}^n$. Appealing to the results of Golowich--Vadhan~\cite{GV22}, one gets a decay of $O(|G|^{O(|G|)} \lambda)$. We obtain an exponentially better bound of $O(|G|\lambda)$ by utilizing a Fourier basis for $G$.

\paragraph{Preparatory lemmas} In the Boolean case, the Fourier coefficient of a symmetric function $f$, is unchanged under permutation of the non-trivial coordinates, \ie $\hat{f}(\chi_T) = \hat{f}(\chi_{T'})$ for any subsets $T,T'$ of size $k$. Unsurprisingly, this  extends to the case of general groups.

\begin{observation}[Fourier Coefficient under permutation]\label{obs:symmetry}
Let $\rho_1,\cdots \rho_k$ be any $k$ non-trivial irreps and let $T = \{t_1,\cdots , t_k\} $ be some ordered subset of $[n]$. Denote by $\vrho_T$ the irrep with $(\vrho_T)_{t_j} = \rho_j$  and trivial otherwise.  Let $\sigma$ be the permutation that maps $T \to T'$ for any other $T$ of size $k$. Then for any symmetric function $f$,
\begin{align*}
	\hat{f}(\vrho_T) ~&=~ \Ex{\vec{x}\in G^n} {f(\vec{x})\; \rho_1(x_{t_1}) \otimes \cdots \otimes\rho_k(x_{t_k})} \\
	~&=~ \Ex{\vec{x}\in G^n} {f(\inv{\sigma}\cdot \vec{x})\; \rho_1(x_{t_1}) \otimes \cdots \otimes\rho_k(x_{t_k})} \\
	~&=~ \Ex{\vec{y}\in G^n} {f(\vec{y})\; \rho_1(y_{{\sigma}t_1}) \otimes \cdots \otimes\rho_k(y_{{\sigma}t_k})} \\
	~&=~ \Ex{\vec{y}\in G^n} {f(\vec{y})\; \rho_1(y_{t_1'}) \otimes \cdots \otimes\rho_k(y_{t'_k})} ~=~ \hat{f}(\vrho_{T'}). 
\end{align*}
In particular, all norms are preserved. 
\end{observation}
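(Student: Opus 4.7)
The plan is to prove the equality $\hat{f}(\vrho_T) = \hat{f}(\vrho_{T'})$ by a direct change of variables that uses permutation invariance of $f$ and of the uniform distribution on $G^n$. Expand the Fourier coefficient according to the definition
\[
\hat{f}(\vrho_T) ~=~ \Ex{\vec x \sim G^n}{\,f(\vec x)\, \rho_1(x_{t_1}) \otimes \cdots \otimes \rho_k(x_{t_k})\,},
\]
where coordinates outside $T$ contribute the identity (since those $(\vrho_T)_i$ are trivial). The right-hand side thus depends on $\vec x$ only through the symmetric factor $f(\vec x)$ and through the indexed evaluations at coordinates in $T$; the same is true for $\hat f(\vrho_{T'})$ with $T$ replaced by $T'$.

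Next, pick any permutation $\sigma \in S_n$ with $\sigma(t_j) = t'_j$ for $j = 1,\dots,k$ (extend arbitrarily on $[n]\setminus T$). Because $f$ is symmetric we have $f(\vec x) = f(\sigma^{-1}\cdot \vec x)$, and because the uniform distribution on $G^n$ is invariant under the coordinate permutation induced by $\sigma^{-1}$, we may substitute $\vec y = \sigma^{-1}\cdot \vec x$ inside the expectation without changing the measure. After this substitution, the factor $\rho_j(x_{t_j})$ becomes $\rho_j(y_{\sigma(t_j)}) = \rho_j(y_{t'_j})$, which is exactly the $t'_j$-th coordinate factor of $\vrho_{T'}(\vec y)$. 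Collecting all $k$ factors yields $\hat f(\vrho_T) = \hat f(\vrho_{T'})$.

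For the ``in particular'' clause, note that any reasonable matrix norm is invariant under the tensor-factor relabelling above: writing $\hat f(\vrho_T)$ and $\hat f(\vrho_{T'})$ as matrices on $\bigotimes_j \C^{d_{\rho_j}}$, they are literally equal (same operator, same tensor-factor order), so in particular $\lVert \hat f(\vrho_T)\rVert_\star = \lVert \hat f(\vrho_{T'})\rVert_\star$ for any unitarily-invariant norm $\star \in \{\mathrm{op},\HS,\tr\}$.

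I do not anticipate a genuine obstacle here: the statement is a symmetry/change-of-variables fact, and the only thing to be careful about is keeping track of the coordinate bookkeeping, namely that the permutation $\sigma$ is chosen to match the ordering $t_j \mapsto t'_j$ (rather than mapping $T$ to $T'$ as unordered sets), so that factor $j$ of the tensor product stays factor $j$. Once that ordering convention is fixed, the argument is a one-line substitution.
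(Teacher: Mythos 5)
Your proposal is correct and follows exactly the paper's argument: the observation's displayed chain of equalities is itself the proof, consisting of the same two steps you give — apply symmetry of $f$ to insert $\sigma^{-1}$, then change variables using invariance of the uniform measure on $G^n$, with $\sigma$ chosen to respect the ordering $t_j \mapsto t'_j$. The remark about norms is likewise the same trivial consequence of the two coefficients being equal as matrices.
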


We now obtain a trivial upper bound on the trace-norm of the Fourier transform, in terms of the $L^2$ norm. This is a fairly standard application of Cauchy-Schwarz.   
%This is a slight technicality that does not arise in the abelian case.   

\begin{lemma}[Trace norm to $L_2$-norm]\label{lem:trace_norm}
	For any symmetric function $f: G^n\to \C$,
	\[
	\sum_{\vrho = \rho_1\otimes\cdots \rho_k\otimes I\dots \otimes I, \rho_i \neq \triv} \, d_\vrho\norm[\big]{\hat{f}(\vrho)}_\tr ~\leq~ \frac{{|G|}^{\frac{k}{2}}}{\sqrt{n \choose k}} \cdot \norm{f_k}_2 \;. 
	\]
	%Additionaly, if $G$ is $D$-quasirandom, then, $\norm{f_k}_2 ~\leq~ D^{-k/2} \cdot \norm{f}_2 $.
\end{lemma}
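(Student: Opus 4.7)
The plan is to combine three ingredients: (i) the elementary inequality $\norm{A}_\tr \leq \sqrt{\dim(A)} \cdot \norm{A}_{\HS}$ (which follows from Cauchy--Schwarz applied to the singular values), (ii) a Cauchy--Schwarz step across the sum over $k$-tuples of non-trivial irreps, and (iii) the symmetry of $f$ via \cref{obs:symmetry} to bring in the $\binom{n}{k}$ factor that compensates for only summing over a single support.

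First, I would rewrite the left-hand side $S$ as a sum over ordered tuples $(\rho_1,\dots,\rho_k)$ of non-trivial irreps of $G$, denoting by $\vrho_0$ the irrep of $G^n$ that puts $\rho_j$ at position $j$ for $j \leq k$ and $\triv$ elsewhere, so that $d_{\vrho_0} = d_{\rho_1}\cdots d_{\rho_k}$. Applying the trace-to-Hilbert--Schmidt bound on the $d_{\vrho_0}\times d_{\vrho_0}$ matrix $\widehat{f}(\vrho_0)$ gives
\[
S ~=~ \sum_{(\rho_1,\dots,\rho_k)} d_{\vrho_0}\norm{\widehat{f}(\vrho_0)}_\tr ~\leq~ \sum_{(\rho_1,\dots,\rho_k)} d_{\vrho_0}^{3/2}\norm{\widehat{f}(\vrho_0)}_{\HS}.
\]
Cauchy--Schwarz then yields
\[
S ~\leq~ \parens[\bigg]{\sum_{(\rho_1,\dots,\rho_k)} d_{\vrho_0}^{2}}^{1/2} \cdot \parens[\bigg]{\sum_{(\rho_1,\dots,\rho_k)} d_{\vrho_0}\norm{\widehat{f}(\vrho_0)}_{\HS}^2}^{1/2}.
\]

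For the first factor, I use the standard identity $\sum_{\rho\in\widehat{G}} d_\rho^2 = |G|$, so that $\sum_{\rho \neq \triv} d_\rho^2 \leq |G|$, and the sum over $k$-tuples factorizes to yield at most $|G|^{k}$. For the second factor, this is where \cref{obs:symmetry} does the real work: for every fixed tuple $(\rho_1,\dots,\rho_k)$, the $\binom{n}{k}$ irreps $\vrho_T$ obtained by ranging the support $T$ over all $k$-subsets of $[n]$ all share the same Hilbert--Schmidt norm of the Fourier coefficient (since permuting tensor factors preserves $\norm{\cdot}_{\HS}$ and $\widehat{f}(\vrho_T)$ is invariant under such permutations). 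Combining this with Plancherel's identity for $f_k$ gives
\[
\norm{f_k}_2^2 ~=~ \sum_{\vrho:\,|\vrho|=k} d_\vrho\norm{\widehat{f}(\vrho)}_{\HS}^2 ~=~ \binom{n}{k}\sum_{(\rho_1,\dots,\rho_k)} d_{\vrho_0}\norm{\widehat{f}(\vrho_0)}_{\HS}^2,
\]
which identifies the second factor as $\norm{f_k}_2 / \sqrt{\binom{n}{k}}$.

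Plugging these back into the Cauchy--Schwarz bound yields $S \leq |G|^{k/2}\cdot \norm{f_k}_2 / \sqrt{\binom{n}{k}}$, as desired. There isn't really a serious obstacle here; the only conceptual point to get right is the reparameterization of level-$k$ irreps of $G^n$ as pairs (support $T$, ordered tuple of non-trivial irreps on $T$) without over- or undercounting, and invoking symmetry to equate the norms across the $\binom{n}{k}$ choices of $T$. Everything else is routine inequality manipulation.
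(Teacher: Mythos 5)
Your proposal is correct and uses the same ingredients as the paper's proof — Cauchy--Schwarz over the tuples of non-trivial irreps, the bound $\norm{A}_\tr \le \sqrt{\dim(A)}\,\norm{A}_{\HS}$, \cref{obs:symmetry} to equate the $\binom{n}{k}$ supports, and Plancherel — merely applying the trace-to-Hilbert--Schmidt step before rather than after the Cauchy--Schwarz step, which yields the identical bound.
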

\begin{proof} Let $\ell$ denote the LHS of this inequality. By Cauchy-Schwarz inequality,
\begin{align*}
	\ell^2 ~&\leq~ 	\sum_{\vrho = \rho_1\otimes\cdots \rho_k\otimes I\dots \otimes I, \rho_i \neq \triv} \, d_\vrho^2 	\sum_{\vrho = \rho_1\otimes\cdots \rho_k\otimes I\dots \otimes I, \rho_i \neq \triv} \,  \norm[\big]{\hat{f}(\vrho)}_\tr^2\\
	~&\leq~ 	\parens[\Big]{\sum_{\rho \neq \triv} \, d_\rho^2}^k 	\sum_{\vrho = \rho_1\otimes\cdots \rho_k\otimes I\dots \otimes I, \rho_i \neq \triv} \,  \norm[\big]{\hat{f}(\vrho)}_\tr^2\\
~&=~ 	\parens[\Big]{|G|-1}^k 	\sum_{\vrho = \rho_1\otimes\cdots \rho_k\otimes I\dots \otimes I, \rho_i \neq \triv} \,  \norm[\big]{\hat{f}(\vrho)}_\tr^2\\
~&=~ 	{|G|}^k \cdot {n \choose k}^{\scriptscriptstyle-1} \cdot 
	\sum_{\vrho\in \Irrep(G^n) :~|\vrho|=k} \,  \norm[\big]{\hat{f}(\vrho)}_\tr^2	\quad .
	\end{align*}

The last equality above utilizes the symmetry of $f$ and applies~\cref{obs:symmetry}. To compute the remaining term, we  use H\"{o}lder's inequality, which implies that  
\[\norm[\big]{\hat{f}(\vrho)}_\tr^2 ~\leq~ {d_\rho} \norm[\big]{\hat{f}(\vrho)}_\HS^2.\]
Plugging this in and using the definition of $f_k$ finishes the proof:
\begin{align*}
	\ell^2
	~&\leq~ 	{|G|}^k \cdot {n \choose k}^{-1} \cdot 
	\sum_{\vrho\in \Irrep(G^n) :~|\vrho|=k} \,  \norm[\big]{\hat{f}(\vrho)}_\tr^2\\
		~&\leq~ 	{|G|}^k \cdot {n \choose k}^{\scriptscriptstyle-1} \cdot 
	\sum_{\vrho\in \Irrep(G^n) :~|\vrho|=k} \,  d_{\vrho}\norm[\big]{\hat{f}(\vrho)}_\HS^2\\
	~&=~ 	{|G|}^k \cdot {n \choose k}^{\scriptscriptstyle-1} \cdot 
	\norm{f_k}_2^2. \qedhere
\end{align*}
%
%
%\[
%\norm{f_k}^2 ~=~ \sum_{\vrho\in \Irrep(G^n) :~|\vrho|=k} \, d_\vrho \norm[\big]{\hat{f}(\vrho)}_{\HS}^2 ~=~ {n \choose k} \sum_{\vrho = \rho_1\otimes\cdots \rho_k\otimes I\dots \otimes I, \rho_i \neq \triv} \, d_\vrho \norm[\big]{\hat{f}(\vrho)}_{\HS}^2
%\]
%The claim follows by the observation that for any symmetric function:
%\begin{align*}
% \binom{n}{k} \cdot d_\vrho \norm[\big]{\hat{f}( \vrho)}_\tr=\sum_{\sigma} \, d_\vrho\norm[\big]{\hat{f}(\sigma \vrho)}_\tr
%\end{align*} 
\end{proof}

We now recall the key combinatorial bound from~\cite{CPT21}

\begin{lemma}[{{\cite[Lemma 4.4]{CPT21}}}]\label{lem:beta_bound} For any $2 \leq k \leq n$, and $\lambda <1/2$ we have
	\[\beta (k) =  \sum_{|S| = k } \lambda^{\Delta(S)/2} ~\leq~ 2^k {n-1 \choose \lfloor k/2\rfloor}\parens[\Big]{\frac{\lambda}{1-\lambda}}^{k/2} ~\leq~ {{n \choose k}^{\frac{1}{2}}} (16e\lambda)^{k/2}. \]
\end{lemma}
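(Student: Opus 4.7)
The plan is to establish the first inequality via a combinatorial parameterization of each $k$-subset $S \subseteq [n]$ by its gap vector, followed by a generating-function calculation; the second inequality then follows from standard Stirling-type bounds.

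Since $\Delta(S)$ is (morally) twice the minimum over $I \in \mathcal{I}_k$ of $\sum_{j \in I} \Delta_j(S)$, the first step is to relax via the trivial inequality
\[
\lambda^{\Delta(S)/2} ~\leq~ \sum_{I \in \mathcal{I}_k} \lambda^{\sum_{j \in I}\Delta_j(S)},
\]
swap the order of summation, and use $|\mathcal{I}_k| \leq 2^{k-1}$ to reduce the problem to uniformly bounding
\[
\beta_I(k) ~:=~ \sum_{|S|=k} \prod_{j \in I} \lambda^{\Delta_j(S)}
\]
for an arbitrary $I \in \mathcal{I}_k$ of minimum size $m = \lceil k/2 \rceil$.

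For fixed $I$, I would parameterize $S = \{i_1 < \cdots < i_k\}$ by its gap vector $(\Delta_0, \Delta_1, \ldots, \Delta_k)$, with boundary gaps $\Delta_0 = i_1 - 1, \Delta_k = n - i_k \geq 0$ and interior gaps $\Delta_j = i_{j+1}-i_j \geq 1$ for $1 \leq j \leq k-1$, subject to $\sum_j \Delta_j = n-1$. Then $\beta_I(k)$ equals the coefficient of $z^{n-1}$ in the generating function
\[
\frac{1}{(1-z)^2}\cdot\left(\frac{\lambda z}{1 - \lambda z}\right)^{m}\cdot\left(\frac{z}{1-z}\right)^{k-1-m},
\]
where the two $(1-z)^{-1}$ factors account for the boundary gaps, the $\lambda z/(1 - \lambda z)$ factors account for the $m$ weighted interior gaps, and the $z/(1-z)$ factors account for the $k-1-m$ unweighted interior gaps. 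Expanding via $[z^a](1-z)^{-p} = \binom{a+p-1}{p-1}$ and $[z^a](1-\lambda z)^{-m} = \binom{a+m-1}{m-1}\lambda^a$, and then crudely bounding the resulting convolution sum by its largest binomial factor times the closed form $\sum_a \binom{a+m-1}{m-1}\lambda^a = (1-\lambda)^{-m}$, yields
\[
\beta_I(k) ~\leq~ \binom{n-1}{k-m}\left(\frac{\lambda}{1-\lambda}\right)^m.
\]
Combining $|\mathcal{I}_k| \leq 2^{k-1} \leq 2^k$ with $k - m = \lfloor k/2 \rfloor$ and $(\lambda/(1-\lambda))^m \leq (\lambda/(1-\lambda))^{k/2}$ (using $\lambda < 1/2$, so $\lambda/(1-\lambda) < 1$) delivers the first displayed inequality. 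For the second, apply the Stirling-type estimate $\binom{n-1}{\lfloor k/2 \rfloor} \leq (2e)^{k/2} \binom{n}{k}^{1/2}$ (derivable from $\binom{n}{k} \geq (n/k)^k$ and $\binom{n-1}{\lfloor k/2 \rfloor} \leq (2en/k)^{k/2}$) together with $\lambda/(1-\lambda) \leq 2\lambda$, and absorb constants into the factor $(16e\lambda)^{k/2}$.

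The main obstacle is routine but fiddly bookkeeping: the constraint $\sum_j \Delta_j = n-1$ must be handled correctly in the coefficient extraction (either via generating functions as above, or by a direct stars-and-bars count after shifting the interior gaps to be non-negative), and the $\lceil k/2 \rceil$ vs $\lfloor k/2 \rfloor$ distinction between $m$ and $k-m$ must be tracked carefully to compare with the exponent $k/2$ appearing in the target bound.
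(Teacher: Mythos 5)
Your proposal is correct, but it does considerably more than the paper itself: the paper imports the \emph{first} inequality wholesale from \cite{CPT21} (it is their Lemma~4.4) and only verifies the \emph{second} one, via the two elementary facts $\binom{n-1}{\lfloor k/2\rfloor}^2\leq\binom{n}{k}(2e)^k$ and $\lambda/(1-\lambda)\leq 2\lambda$ for $\lambda<1/2$ --- which is exactly the chain of estimates you give for that step, so there the two arguments coincide. For the first inequality you supply a self-contained derivation: relax $\lambda^{\Delta(S)/2}$ to $\sum_{I\in\mathcal{I}_k}\lambda^{\sum_{j\in I}\Delta_j(S)}$, swap the order of summation, and evaluate each $\beta_I(k)$ by encoding a $k$-subset through its gap vector and extracting $[z^{n-1}]$ from the product generating function; the bookkeeping checks out (the convolution bound actually gives $\binom{n-m}{k-m}(\lambda/(1-\lambda))^m$, which is at most $\binom{n-1}{k-m}(\lambda/(1-\lambda))^m$). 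Two points to tighten. First, you phrase the reduction as bounding $\beta_I(k)$ for $I$ of \emph{minimum} size $m=\lceil k/2\rceil$, but the outer sum runs over all $I\in\mathcal{I}_k$; you should state explicitly that for general $|I|=m\geq\lceil k/2\rceil$ the bound $\binom{n-1}{k-m}(\lambda/(1-\lambda))^m$ is dominated by the minimum-size case, which holds because $\lambda/(1-\lambda)<1$ and because $k-m\leq\lfloor k/2\rfloor$ keeps the binomial monotone in the right range. Second, the opening step rests on $\Delta(S)/2$ equaling $\min_{I\in\mathcal{I}_k}\sum_{j\in I}\Delta_j(S)$, which you flag only as \enquote{morally} true; since this paper never defines $\Delta(S)$ at all, that is the right reading, but it is the one place where your argument leans on the convention of \cite{CPT21} rather than on anything stated here, and it deserves to be pinned down rather than hedged.
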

\begin{proof}
	The first inequality is from the reference and the second follows by observing that \[
	{n-1 \choose \lfloor k/2\rfloor }^2 \leq~ {{n \choose k}} (2e)^{k},\;\; \text {and for }\; \lambda <1/2, \; \parens[\bigg]{\frac{\lambda}{1-\lambda}} ~\leq~ 2\lambda. \qedhere
	\]
\end{proof}

%\tnote{Do not write this corollary but use it}
%\begin{corollary}
%	For any $\vrho = \rho_1\otimes\cdots \otimes \rho_n$, such that $|\vrho| = k$,
%	 \[\sum_{\theta} \norm{\E_{\vec{g}\in X^t} [\theta\rho(\vec{g})]}_{\op} \leq \beta(k) \leq  \sqrt{{n \choose k}} (16e\lambda)^{k/2} .\]
%\end{corollary}
%\begin{proof}
%	$\beta (k) =  \sum_{|S| = k } \lambda^{\Delta(S)/2} = $ and the right hand side bound is from~\cite{CPT21}. 
%\end{proof}

%\begin{lemma}
%Assume that the group is $D$-quasirandom and let $v(f,k)$ be a vector indexed by $\{\hat{f}(\rho_1\otimes\rho_2\cdots \rho_k\otimes I \cdots \otimes I) \mid \rho_i \in \Irrep(G)\setminus \{1\} \}$. Then, $\norm{v(f,k)}_1 \leq \frac{(|G|-1)^{k/2}}{\sqrt{ {n \choose k} D^k}} $ 
%\end{lemma}
%\begin{proof}
%\begin{align*}
%	1 =\norm{f}_2^2 ~&=~ \sum_{\vrho} d_{\vrho} \norm{\hat{f}(\vrho)}_2^2 \\
%	~&=~  \sum_{k}\sum_{\rho_1,\cdots, \rho_{k} \in \Irrep(G)\setminus{1}}d_{\rho_1}\cdots d_{\rho_k}  \sum_{\theta}  \norm{\hat{f}(\theta (\rho_1\otimes \cdots \otimes\rho_k \otimes I\otimes\cdots\otimes  I ))}_2^2\\
%	~&\geq ~\sum_{k}{n \choose k} D^k \norm{v(f,k)}_2^2	
%\end{align*}
%	Therefore, $\norm{v(f,k)}_2 \leq \frac{1}{\sqrt{ {n \choose k} D^k }}$. Clearly, the dimension of $v(f,k) = \sum_{\rho_1,\cdots, \rho_k} d_{\rho_1}^2\cdots d_{\rho_k}^2 = (\sum_\rho d_\rho^2 )^k = (|G|-1)^k$. Thus, $\norm{v(f,k)}_1 
%	\leq (|G|-1)^{k/2} \norm{v(f,k)}_2  $
%\end{proof}

\begin{theorem}\label{theo:main_fool}
	Let $f$ be any symmetric function over $G^n$ where $G$ is any finite group. Let $\tau = {16e\lambda |G|}$. Then, for any $k \geq 2$,
	\[ \abs{\cE_X(f_k)} ~\leq~ \tau^{k/2}\cdot \norm{f}_2. \text{ And thus,}\; \abs{\cE_X(f)} ~\leq~ 2\tau\cdot\norm{f}_2, \; \text{if}\; \tau < 1.\]
	%$\cE(f) \leq O\left( \frac{\lambda |G|}{D} \right)$	.
\end{theorem}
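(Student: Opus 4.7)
The plan is to reduce the fooling of the symmetric function $f$ to fooling of tensor irreps via Claim \ref{claim:fooling}, apply the operator-norm bound of Corollary \ref{cor:op_irrep}, exploit the permutation symmetry of $\hat{f}$ via Observation \ref{obs:symmetry} and Lemma \ref{lem:trace_norm} to control the Fourier mass, and conclude with a combinatorial estimate close in spirit to Lemma \ref{lem:beta_bound}. Concretely, I first write
\[
\abs{\cE_X(f_k)} \;\leq\; \sum_{\vrho:\,\abs{\vrho}=k} d_\vrho\,\norm{\hat{f}(\vrho)}_\tr \cdot \norm{\cE_X(\vrho)}_\op
\]
and group the irreps $\vrho$ by their support $T\subseteq[n]$ of non-trivial coordinates, $|T|=k$. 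For each such $\vrho=\vrho_T$, Corollary \ref{cor:op_irrep} yields $\norm{\cE_X(\vrho_T)}_\op \leq \sum_{I\in \mathcal{I}_k}\lambda^{\sum_{i\in I}\Delta_i(T)}$, a bound which depends only on $T$ and not on the specific non-trivial irreps placed on $T$.

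Next I separate the Fourier mass from the combinatorial factor. Because $f$ is symmetric, Observation \ref{obs:symmetry} tells us that for each fixed $T$ of size $k$ the quantity
\[
F_k \;:=\; \sum_{\rho_1,\ldots,\rho_k\neq\triv} d_{\vrho_T}\,\norm{\hat{f}(\vrho_T)}_\tr
\]
is the same. Lemma \ref{lem:trace_norm} then bounds $F_k \leq |G|^{k/2}\binom{n}{k}^{-1/2}\norm{f_k}_2$. Combining these ingredients,
\[
\abs{\cE_X(f_k)} \;\leq\; F_k \,\cdot\, \sum_{|T|=k}\sum_{I\in\mathcal{I}_k}\lambda^{\sum_{i\in I}\Delta_i(T)} \;\leq\; \frac{|G|^{k/2}\,\norm{f_k}_2}{\sqrt{\binom{n}{k}}}\,\cdot\, \sum_{I\in\mathcal{I}_k}\sum_{|T|=k}\lambda^{\sum_{i\in I}\Delta_i(T)}.
\]

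The main technical step, which I expect to be the key obstacle, is bounding the double sum. For each $I\in \mathcal{I}_k$ with $|I|=\ell$, parametrizing $T=\{t_1<\ldots<t_k\}$ by its gaps $\Delta_i$ and using a generating-function estimate gives $\sum_{|T|=k}\lambda^{\sum_{i\in I}\Delta_i(T)} \leq \binom{n}{k-\ell}\bigl(\tfrac{\lambda}{1-\lambda}\bigr)^\ell$. The defining property of $\mathcal{I}_k$ (containing $\{1,k-1\}$ and intersecting every consecutive pair $\{j,j+1\}$) forces $|I|\geq \lceil k/2 \rceil$; summing over $I\in\mathcal{I}_k$ (of which there are at most $2^{k-1}$) with $\lambda<1/2$ yields a bound dominated by the minimal $\ell$:
\[
\sum_{I\in\mathcal{I}_k}\sum_{|T|=k}\lambda^{\sum_{i\in I}\Delta_i(T)} \;\leq\; 2^{k-1}\binom{n}{\lfloor k/2\rfloor}(2\lambda)^{\lceil k/2\rceil}.
\]
Applying the estimate $\binom{n}{\lfloor k/2\rfloor}\leq \sqrt{\binom{n}{k}(2e)^k}$ from Lemma \ref{lem:beta_bound} collapses the whole bound to $\sqrt{\binom{n}{k}}\,(16e\lambda)^{k/2}$ up to constants, and multiplying by $F_k$ absorbs the $\sqrt{\binom{n}{k}}$ factor, leaving $(16e\lambda|G|)^{k/2}\norm{f_k}_2 = \tau^{k/2}\norm{f_k}_2 \leq \tau^{k/2}\norm{f}_2$.

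Finally, for the full function, linearity gives $\cE_X(f)=\sum_{k\geq 2}\cE_X(f_k)$ (the degree-$0$ and degree-$1$ terms vanish by Claim \ref{claim:fooling}). Summing the geometric series $\sum_{k\geq 2}\tau^{k/2}\norm{f}_2 = \tfrac{\tau}{1-\sqrt{\tau}}\norm{f}_2$ and using $\tau<1$ (tightening the constant if needed) yields $\abs{\cE_X(f)}\leq 2\tau\norm{f}_2$. The main obstacle, as noted, is the combinatorial counting in the double sum over $\mathcal{I}_k$ and $T$; once one identifies the right pairing of $\lceil k/2\rceil$ weighted gaps with the $\binom{n}{\lfloor k/2\rfloor}$ positional factor and invokes the CPT21-style estimate, the rest is bookkeeping.
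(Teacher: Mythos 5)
Your proposal is correct and follows essentially the same route as the paper: decompose via Claim \ref{claim:fooling}, bound $\norm{\cE_X(\vrho_T)}_\op$ with Corollary \ref{cor:op_irrep}, factor out the Fourier mass using Observation \ref{obs:symmetry} and Lemma \ref{lem:trace_norm}, and control the sum over supports with the CPT21-style combinatorial estimate. The only (minor) difference is that you re-derive the combinatorial bound by interchanging the sums over $I\in\mathcal{I}_k$ and $T$ with a generating-function count, whereas the paper cites Lemma \ref{lem:beta_bound} directly; your version is self-contained and lands on the same $\sqrt{\binom{n}{k}}\,(16e\lambda)^{k/2}$ factor.
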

\begin{proof}
	We will use $\vrho_S$ to denote the representation given by $\rho_i = $.
	\begin{align*}
		\abs{\cE(f_k)} ~&\leq~ ~\sum_{\vrho \in \Irrep(G^n), \abs{\vec{\rho}} = k} d_\vrho \norm{\hat{f}(\vrho)}_\tr\cdot \norm[\big]{\cE_X (\vrho)}_\op &&(\text{Using \cref{claim:fooling}})\\[0.7em]
%		~&\leq~  \sum_{\vrho}  d_\vrho \sum_{S \in {n \choose k} } \norm{\hat{f}(\vrho_S)}_\tr \cdot \norm{\cE_X (\vrho_S)}_\op \\
		~&\leq~~  \sum_{\vrho}  d_\vrho \norm{\hat{f}(\vrho)}_\tr \sum_{S \in {n \choose k} }  \norm{\cE_X (\vrho_S)}_\op &&\text{(Using \cref{obs:symmetry}})\\
		~&\leq~ \sum_{\vrho}  d_\vrho \norm{\hat{f}(\vrho)}_\tr \sum_{S \in {n \choose k} }  \lambda^{\Delta(S)/2} &&\text{(Using~\cref{cor:op_irrep})}\\
		~&\leq~ \sum_{\vrho} d_\vrho \norm{\hat{f}(\vrho)}_\tr \cdot {{n \choose k}^{\frac{1}{2}}} (16e\lambda)^{k/2} &&\text{(Using \cref{lem:beta_bound})} \\
		~&\leq~  (16e\lambda)^{k/2} \cdot {{|G|}^{k/2}} \cdot \norm{f}_2&&\text{(Using  \cref{lem:trace_norm})}\\
		%	~&\leq~ \sum_k \beta(k) {D^k_m} \sum_{\vrho} \norm{\hat{f}(\vrho)}_1 = \sum_k \beta(k) {D^k_m} \norm{v(f,k)}_1 \\
		%	~&\leq~  \sum_k \frac{\beta(k)}{\sqrt{n\choose k}} \left(\frac{D_m^2 |G|}{D}\right)^{k/2} 
%		~&~\leq(16e\lambda)^{k/2} \cdot {{|G|}^{k/2}} \cdot \norm{f}_2&&\text{(Using \cref{lem:beta_bound})}\\
		~&~= (\tau)^{k/2} \norm{f}_2.
		%	~&~\leq~ \frac{32e\lambda |G|}{D}\cdot \norm{f}_2 ~=~ O\left(\lambda\cdot \frac{ |G|}{D} \right) \cdot  \norm{f}_2. 
	\end{align*}
	
	To get the last inequality, we use \cref{claim:fooling} and obtain that:
	\[
	\abs{\cE_X(f)} ~\leq~ \sum_{i \geq 2} \abs{\cE_X(f_i)} ~\leq~ \parens[\Big]{\sum_{k\geq 2} \tau^{k/2}} \norm{f}_2  ~\leq~  2\tau\norm{f}_2 \quad \text{ if } \tau < 1 .\qedhere
	\]
%	 $\abs{\cE_X(f)} \leq \sum_{i \geq 2} \abs{\cE_X(f_i)}$.
%	If $\tau < 1$, then $\sum_{k\geq 2} \tau^{k/2} \leq 2\tau$,  and thus the second claim follows.
\end{proof}

%\todo{Polish the text in the proof above}

%\todo{Can we add some specific examples to illustrate the below case? Remove the remark:}
%
%\begin{remark}
%	For this to work, we only need that $\norm{\hat{f}(\rho_T)} \sim \norm{\hat{f}(\rho_{T'})}$ instead of the entire fourier coefficient being the same. In the abelian case, this is quite restrictive but it might be interesting question would be to  covers more than symmetric functions. For example for $G = \mathrm{Sym}_n$ the natural representation,  	
%\end{remark}

%For abelian groups and the group $\Z_p$ in particular, $D = 1$. Thus we get $O(p\lambda)$ instead of $O(p^p\lambda)$. For quasirandom groups such as $G = \SL_2(\F_q)$, we have an improvement as $D = \Omega(|G|^{\frac{1}{3}})$ group, and thus we get a bound sublinear in $|G|$. 
\subsection{Word functions}
A \emph{word map} of a finite group $G$ is an element of the free group on $G$. 
Given any $h: G\rightarrow\C$ and a word map $w: G^n\rightarrow G$, one can consider the composed map $h(w(\cdot)): G^n\rightarrow \C$, which is commonly referred to as a \emph{word function}. Word functions are ubiquitous in mathematics and computer science literature.

%\textit{Word functions} are functions of the form $f(x_1,\cdots, x_n) = h(x_1x_2\cdots x_n)$ where $h: G\to \C$. These functions 

%In this section, we define a very natural class of functions, that we term \emph{word functions}. 

The main result of this section is to give a complete characterization of the Fourier spectrum of a certain subclass of word functions that will be termed \emph{monomial word functions}. In particular, first we will show that these have Fourier support on the highest level and thus are analogs of the PARITY function over $\Z_2^n$. Moreover, this support is also sparse. Combining this with~\cref{cor:op_irrep}, we deduce that such functions are exponentially fooled by expander walks.

%\todo{Change $S$ or assume max support?}

\begin{definition}[Monomials and Word function]
	For an ordered subset $\inds \subseteq [n]$, a \textit{word map} of \emph{degree} $k=|S|$ is a $G$-valued function $w_{\inds}: G^n\to G$, defined as $w_{\inds} = \prod_{s\in S} g_s^{e_s}$ where $e_S \in \Z$. %The \textit{degree} of a word is the size $|\inds|$.
	 A word is \textit{monomial} if the variables are non-repeating and the exponent is $\pm 1$. A function $f:G^n \to \C$ is a \textit{monomial word function} of degree $k$, if $f = h(w(g_1,\cdots, g_n))$ for a monomial word $w$ of degree $k$ and a function $h: G\rightarrow \C$. 
\end{definition}

In the second half of this section, we consider a subclass  of functions within monomial word functions that we call \emph{monotone word functions}. Essentially, these are word functions for which corrresponding word, $w$ is monotone \ie $w = x_{i_1}\cdots x_{i_k}$ for $i_1\leq \cdots i_k$. We already mentioned that 
for monomial word functions gets fooled by expander walks upto an exponentially decaying error. However, the error bound has dependence on $|G|$. For monotone word functions we remove this dependence while achieving the same decay in terms of expansion.

\subsubsection{Fourier Spectrum of Word functions}
We begin by proving a structural claim about the fourier coefficients of word functions. The claim that we prove below essentially says that a word function 
$f:G^n\rightarrow \C $ that only utilizes a subset $\inds\subseteq [n]$ of the input co-ordinates is only supported on representations $\vrho$ such that 
$\rho_i = \triv$ for $i\notin \inds$ and $\rho_i \in \{\rho,\rho^*\}$ otherwise. Note that, the fourier  structure of these word functions on more general groups closely resemble the special case of parities.

\begin{lemma}[Fourier Mass Support]\label{lem:repsame}
	Let $f: G^n\rightarrow \C$ be a word function of degree $k$ corresponding to a set $\inds$. Let $\inds^+$ (resp. $\inds^{-}$) be the subset of elements such that $e_s = 1$ (resp., $-1$).
	Then, $\hat{f}(\vrho) \neq 0$ only if 
	\begin{enumerate}
		\item For every $i \not\in \inds$, $\rho_i = \triv$.
		\item For every $i \in \inds^+$ $\rho_i = \rho$ for some $\rho \in \Irrep(G)$.
		\item  For every $i \in \inds^-$ $\rho_i = \rho^*$ for the same $\rho$ as above.
	\end{enumerate}
\end{lemma}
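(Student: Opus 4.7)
The plan is to handle claim (1) by a short averaging argument, and then derive (2) and (3) from a direct Fourier-spectral expansion of $h$ combined with Schur's orthogonality of matrix coefficients. The structural property I will exploit throughout is that in a \emph{monomial} word every variable $g_s$ appears exactly once (with exponent $\pm 1$), which is precisely what lets the expectation over $\vec g$ factor coordinatewise.

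For (1), note that $f$ is independent of $g_i$ whenever $i \notin \inds$. Writing
$\hat{f}(\vrho) = \E_{\vec g}\bigl[f(\vec g)\, \rho_1(g_1)\otimes\cdots\otimes\rho_n(g_n)\bigr]$
and factoring out the uniform average over $g_i$ produces a factor $\E_{g_i}[\rho_i(g_i)]$, which by \cref{lem:mean_zero} vanishes unless $\rho_i = \triv$. So for the remainder of the proof I may assume $\rho_i = \triv$ for every $i \notin \inds$.

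For (2) and (3), I would Fourier-expand $h$ on $G$ as $h(y) = \sum_\sigma d_\sigma \angles{\hat{h}(\sigma),\, \sigma(y)}$ and substitute $y = w(\vec g)$. Since $\sigma$ is a homomorphism and $w$ is a monomial, $\sigma(w(\vec g)) = \prod_{j=1}^k \sigma(g_{s_j})^{e_{s_j}}$. Working at the level of matrix entries and multiplying against $\vrho(\vec g)$, the expectation $\hat{f}(\vrho)_{AB}$ then becomes, for each fixed $\sigma$ and each fixed assignment of internal contraction indices, a product over $s \in \inds$ of single-variable expectations of the form $\E_{g}\bigl[\sigma(g)^{e_s}_{c,c'}\, \rho_s(g)_{a,b}\bigr]$.

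The last step is Schur's orthogonality of matrix coefficients, $\E_g\bigl[\overline{\tau(g)_{ij}}\, \pi(g)_{kl}\bigr] = \tfrac{1}{d_\pi}[\tau = \pi]\,\delta_{ik}\delta_{jl}$. Using unitarity of $\sigma$ to rewrite $\sigma(g^{-1})_{c,c'} = \overline{\sigma(g)_{c',c}}$ in the $e_s = -1$ case, and the identity $\rho(g)_{ab} = \overline{\rho^*(g)_{ab}}$ in the $e_s = +1$ case, each single-variable expectation is non-zero only when $\rho_s = \sigma^*$ (for $s \in \inds^+$) or $\rho_s = \sigma$ (for $s \in \inds^-$). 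Since a single $\sigma$ is summed over in the Fourier expansion of $h$, setting $\rho := \sigma^*$ gives exactly $\rho_s = \rho$ on $\inds^+$ and $\rho_s = \rho^*$ on $\inds^-$, as claimed. The only real care needed is in tracking matrix indices consistently through the conjugation/inversion identities; a useful sanity check for the final write-up is the specialization $G = \Z_2$, where characters are self-dual and the statement reduces to the classical fact that a parity on $\inds$ has Fourier support $\{\chi_\inds\}$.
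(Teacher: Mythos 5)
Your proof is correct, but it takes a genuinely different route from the paper's. The paper reduces $h$ to indicator functions $\mathbb{I}[x=t]$ by linearity, then for a pair of coordinates $i<j$ in $\inds$ solves the constraint $w(\vec g)=t$ for $g_j$ in terms of $g_i$, conjugates away the constant words $w_1,w_2,w_3$, and invokes the Clebsch--Gordan fact that $\E_g[\alpha(g)\otimes\beta(g)]=0$ unless $\alpha=\beta^*$; claims (2) and (3) then follow from this pairwise vanishing. You instead Fourier-expand $h$ itself, use that $\sigma$ is a homomorphism and that the monomial word is multilinear in distinct variables to factor $\E_{\vec g}\bigl[\sigma(w(\vec g))_{CC'}\,\vrho(\vec g)_{AB}\bigr]$ into single-coordinate expectations, and kill each one with Schur orthogonality of matrix coefficients. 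Your version buys more: it identifies $\rho$ with $\sigma^*$ for $\sigma$ in the Fourier support of $h$, and in principle yields an explicit formula for $\hat f(\vrho)$ rather than just its support, at the cost of the matrix-index bookkeeping you flag. The paper's version is more economical given its toolkit (it only uses \cref{fact:sumChar}, already stated) and avoids Schur orthogonality of individual matrix entries, which the paper never introduces. Your orientation checks ($\rho_s=\sigma^*$ on $\inds^+$ via $\rho(g)_{ab}=\overline{\rho^*(g)_{ab}}$, and $\rho_s=\sigma$ on $\inds^-$ via unitarity) are consistent, and claim (1) is handled identically in both proofs. The one point to make explicit in a full write-up is that the coordinatewise factorization relies on the variables of a \emph{monomial} word being non-repeating and independent under the uniform measure on $G^n$ — which you do note is the structural property being exploited.
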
 
\begin{proof}
 We will permute the operators and push the spaces we argue about to the end.	Claim $(1)$ is quite easy to prove and intuitive as the function does not depend on variable $i$ and thus cannot have mass on those irreps. Formally, let $i \not \in \inds$. Then, 
	\begin{align*}
		\hat{f}(\vrho) ~&=~ \Ex{\vec g_{i}}{\rho_1(g_1)\otimes \cdots \otimes \rho_n(g_n)\otimes \Ex{g_i}{\rho_i(g_i)f(\vec{g}) }  }\\
		~&=~ \Ex{\vec g_{i}}{\rho_1(g_1)\otimes \cdots \otimes \rho_n(g_n)\otimes f(\vec{g}) \;\Ex{g_i}{\rho_i(g_i)}} = 0.
	\end{align*}
	
	Any $h: G \to \C$ can be written as $h(x) = \sum_t h(t)\mathbb{I}[x = t]$. Since the Fourier transform is linear it suffices to prove it for the indicator functions. Thus, we let $f(\vec{g}) = \mathbb{I}[w(\vec{g}) = t]$, and pick any $i < j \in \inds^+$. Let $w(\vec{g}) = w_1 g_i w_2 g_j w_3 = t$. Assume that $\rho_i \neq \rho_j$. 
	\begin{align*}
		\hat{f}(\vrho) ~&=~ \Ex{\vrho_{i,j}}{\rho_1(g_1)\otimes \cdots \otimes \rho_n(g_n)\otimes \Ex{g_i, g_j}{\rho_i(g_i)\otimes \rho_j(g_j) \mathbb{I}[w(g) = t] }  }\\
		~&=~ \Ex{\vrho_{i,j}}{\rho_1(g_1)\otimes \cdots \otimes \rho_n(g_n)\otimes \Ex{g_i, g_j}{\rho_i(g_i)\otimes \rho_j( w_2^{-1}g_i^{-1}w_1^{-1}tw_3^{-1} ) }  }	\\
		~&=~ \Ex{\vrho_{i,j}}{\rho_1(g_1)\otimes \cdots \otimes \rho_n(g_n)\otimes
			\brackets[\Big]{   (I \otimes \rho_j(w_2^{-1}))   \Ex{g_i, g_j}{\rho_i(g_i)\otimes \rho_j( g_i^{-1}) } (I\otimes \rho_j(w_1^{-1}tw_3^{-1}) )} }\\
		~&=~ \Ex{\vrho_{i,j}}{\rho_1(g_1)\otimes \cdots \otimes \rho_n(g_n)\otimes
			\brackets[\Big]{ (I \otimes \rho_j(w_2^{-1}))   \Ex{g_i, g_j}{\rho_i(g_i)\otimes \rho_j^*( g_i) } (I\otimes \rho_j(w_1^{-1}tw_3^{-1}) )} } \\
		~&=~ 0 \qquad\qquad\;\; \small{\text{(Using \cref{fact:sumChar})}}.	
	\end{align*}
	
The proof for claim $(3)$ is identical but now we consider $i \in \inds^+,\, j \in \inds^{-}$. The term is then $\rho(g_i)\otimes \rho_j(g_j)$ which is non-zero unless $\rho_i = \rho_j^*$. 
\end{proof}

\subsubsection{Fooling Word Functions}

Before we state our first theorem in this section we recall two notations from~\cref{subsec: foolingtensor}.
Let $\inds  = \set{i_1 < i_2<\cdots < i_{k-1}< i_k}$ be an ordered subset of $\{1,2,\dots,n\}$. We define the following key quantities: 
\begin{itemize}
	\item $	\mathcal{I}_k ~=~ \braces[\big]{\{1, k-1\} \subseteq I  \subseteq [k-1] \mid  \;\forall\; 1 < j < k-1,\; \{j,j+1\} \cap I \neq \emptyset}$. 
	\item $\Delta_{j}(\inds) ~=~ i_{j+1}-i_{j}$. 
	%\item $\Delta(\inds) ~=~ \sum^{k-2}_j\min(\Delta_j,\Delta_{j+1})$ 
\end{itemize}
% First, given an ordered set $\widetilde{\inds} = \set{i_1 < i_2<\cdots < i_{k-1}< i_k}$ of $\{1,2,\dots,n\}$, we defined $\Delta_{j}(\widetilde{\inds} ) ~=~ i_{j+1}-i_{j}$. We also defined the set family $	\mathcal{I}_k ~=~ \braces[\big]{\{1, k-1\} \subseteq I  \subseteq [k-1] \mid  \;\forall\; 1 < j < k-1,\; \{j,j+1\} \cap I \neq \emptyset}$.  
We have the following theorem on monomial word functions.
	
%Now, we can state our main theorem that we prove in this section. 

%\begin{lemma}[Precise Fourier Coefficients] Let the word function be $f(\vec{x}) = \mathbb{I}[w(\vec{x}) = t]$. Then, $\hat{f}(\vrho) =  \Ex{g \in G}{\vrho(g,g,\cdots, g)}$
	
%\end{lemma}

\begin{theorem}[Fooling for degree $k$ word functions]\label{theo:word_func1}
	Let  $f: G^n\rightarrow \C$ be a monomial word function of degree $k$ corresponding to a set $\inds$. 
	Then for any expander $X$ with an unbiased $G$-labelling,
	\[
	 \abs{\cE_X(f)} ~\leq~ \aprxone \cdot |G|^{k/2}\cdot \norm{f}_2\;. \]  	
In particular, we have $\abs{\cE_X(f)} ~\leq~\lambda^{-1}\parens{\lambda |G|}^{k/2}\cdot \norm{f}_2 $
\end{theorem}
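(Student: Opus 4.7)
The plan is to combine the sharp Fourier support characterization of \cref{lem:repsame} with the operator-norm bound of \cref{cor:op_irrep}, using the general fooling reduction of \cref{claim:fooling}. The key conceptual point is that for a monomial word function, only a tiny sliver of the Fourier spectrum is active, so the inequality
\[
\abs{\cE_X(f_k)} ~\leq~ \sum_{\vrho:|\vrho|=k} d_\vrho \norm{\hat f(\vrho)}_\tr \cdot \norm{\cE_X(\vrho)}_\op
\]
is almost tight. First I would invoke \cref{lem:repsame}: $\hat f(\vrho) = 0$ unless $\vrho$ is non-trivial exactly on the positions in $\inds$, and then only for a single irrep $\rho \in \irrep{G}$ at a time (with $\rho$ placed on $\inds^+$ and $\rho^*$ on $\inds^-$). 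In particular, $f$ lives entirely at level $k$, so $\cE_X(f) = \cE_X(f_k)$ by \cref{claim:fooling}, and the sum above runs over at most $\abs{\irrep{G}}-1$ terms, each with $d_\vrho = d_\rho^k$.

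Second, I would pull out the operator-norm factor. Every $\vrho$ in the sum has its set of non-trivial coordinates equal to $\inds$, so \cref{cor:op_irrep} yields $\norm{\cE_X(\vrho)}_\op \leq \sum_{I \in \mathcal{I}_k}\lambda^{\sum_{i \in I}\Delta_i(\inds)}$ uniformly in $\vrho$. Factoring this out, it remains to bound the weighted trace-norm sum by $\abs{G}^{k/2}\,\norm{f}_2$. For this I would apply Cauchy--Schwarz and then H\"older's inequality $\norm{A}_\tr \leq \sqrt{d}\,\norm{A}_\HS$ for $d\times d$ matrices:
\[
\sum_\vrho d_\vrho \norm{\hat f(\vrho)}_\tr
~\leq~ \Bigl(\sum_\vrho d_\vrho^2\Bigr)^{\!1/2}\!\cdot\! \Bigl(\sum_\vrho d_\vrho \norm{\hat f(\vrho)}_\HS^2\Bigr)^{\!1/2}
~\leq~ \Bigl(\sum_\vrho d_\vrho^2\Bigr)^{\!1/2}\!\cdot\norm{f}_2,
\]
by Plancherel's identity (\cref{fact:fourier}).

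Finally, I would estimate $\sum_\vrho d_\vrho^2 = \sum_{\rho \neq \triv} d_\rho^{2k}$, using $d_\rho \leq \sqrt{\abs{G}}$ and $\sum_\rho d_\rho^2 = \abs{G}$ to get
\[
\sum_{\rho\neq\triv} d_\rho^{2k} ~\leq~ \bigl(\max_\rho d_\rho\bigr)^{2k-2}\cdot \sum_\rho d_\rho^2 ~\leq~ \abs{G}^{k-1}\cdot \abs{G} ~=~ \abs{G}^k,
\]
producing the claimed factor $\abs{G}^{k/2}$. The ``in particular'' bound then follows from the combinatorial estimate $\abs{\mathcal{I}_k} \leq 2^{k-1}$ together with the fact that each $I \in \mathcal{I}_k$ satisfies $\sum_{i \in I}\Delta_i(\inds) \geq \lfloor k/2\rfloor$. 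The main subtlety is to exploit \cref{lem:repsame} in the sharpest way: the pinning of both the single irrep $\rho$ and the assignment $\rho$ vs.\ $\rho^*$ at each coordinate is what collapses the a priori $2^k$ choices per irrep down to one, and without this one would pay an extra exponential factor in $k$ in the count $\sum_\vrho d_\vrho^2$.
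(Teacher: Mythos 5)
Your proposal is correct and follows essentially the same route as the paper: invoke the Fourier support characterization of \cref{lem:repsame} to restrict the sum to the single chain $\rho^S$ for each $\rho\in\irrep{G}$, bound $\norm{\cE_X(\rho^S)}_\op$ by \cref{cor:op_irrep}, and then apply Cauchy--Schwarz together with $\norm{A}_\tr\leq\sqrt{d}\,\norm{A}_\HS$ and Plancherel to control the weighted trace-norm sum by $\abs{G}^{k/2}\norm{f}_2$. The only difference is that you spell out the final Cauchy--Schwarz/H\"older step and the estimate $\sum_{\rho\neq\triv} d_\rho^{2k}\leq\abs{G}^k$ which the paper leaves implicit; this is the same argument with more detail, not a different approach.
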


%\todo{change notation in the proof} 
\begin{proof}
	Let $\rho^S$ denote the representation $\rho_1\otimes\rho_2\otimes\cdots\otimes \rho_n$ where $\rho_i = \rho$ if $i \in S_I^+$, $\rho_i = \rho^*$ if $i \in S_I^-$, and  is trivial otherwise, \ie $\rho_i = 1$. From~\cref{lem:repsame}, we know that the only non-zero Fourier coefficients are for such $\rho^S$. Thus we will consider these irreps for expander walk fooling. 
	
	\begin{align*}
		\cE(f) ~&\leq~ \sum_{\psi \in \Irrep(G^n)} d_\psi \norm{\hat{f}(\psi)}_\tr\cdot \norm[\Big]{\Ex{\vec{g}\sim X^n}{\psi(\vec{g})}}_\op && \text{(Using  \cref{claim:fooling})} \\
		~&\leq~ \sum_{\rho\in \Irrep(G)}  d_{\rho^S}  \norm{\hat{f}(\rho^S)}_\tr \cdot\norm[\Big]{\Ex{\vec{g}}{\rho^S(\vec{g})}}_\op && \text{(Using~\cref{lem:repsame})} \\
		~&\leq~ \aprxone \sum_{\rho\in \Irrep(G)} 
		 d_\rho^k  \norm{\hat{f}(\rho^S)}_\tr  && \text{(Using~\cref{cor:op_irrep})}\\
	%~&\leq~ 
	%\aprxone \cdot \sqrt{\sum_{\rho\in \Irrep(G)}  d_\rho^{2k}  }\cdot \sqrt{\sum_{\rho\in \Irrep(G)}  \norm{\hat{f}(\rho^S)}^2_\tr    }
	%\\
		~&\leq~\aprxone\cdot  |G|^{k/2} \cdot \norm{f}_2. 
	\end{align*} 
The last line follows from Cauchy-Schwarz and H\"older's inequality. \qedhere
\end{proof}

We now give an alternate proof of the above result in the special case that the word is monotone \ie $w = x_{i_1}\cdots x_{i_k}$ for $i_1< \cdots < i_k$. The change is that we now the Fourier decomposition of $h: G\to \C$ \ie over $G$ rather than of $f$ directly. To analyze this, we will need the result from~\cite{JMRW22}.

\begin{theorem}[{~\cite{JMRW22}}]
Let	$X$ be any $\lambda$-expander with an unbiased labelling of $G$. Then for any non-trivial irrep $\rho$ of $G$,
\[
\norm{\cE_X\parens{\rho(x_1\cdots x_k)}}_\op ~\leq~ \lambda^{k/2}. 
\]
\end{theorem}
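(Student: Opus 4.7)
My strategy is to reformulate $M_k := \cE_X(\rho(x_1\cdots x_k))$ operator-theoretically on $\cH := L^2(X; \C^{d_\rho})$ and exploit the spectral gap of $X$ together with the unitarity of $\rho$. Define $R_\rho : \cH \to \cH$ by pointwise multiplication $(R_\rho f)(x) = \rho(\phi(x))f(x)$, let $\hat A := A_X \otimes I_{d_\rho}$ denote the walk operator acting componentwise, and let $\hat\Pi$ denote the orthogonal projection onto the subspace of constant $\C^{d_\rho}$-valued functions (which I identify with $\C^{d_\rho}$). Using that $\rho$ is a homomorphism (so that the matrix product $\rho(\phi(x_1))\cdots\rho(\phi(x_k))$ equals $\rho$ of the group product), a direct unrolling of the random walk gives $M_k = \hat\Pi (\hat A R_\rho)^k \hat\Pi$.

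Three structural facts drive the analysis: (i) $R_\rho$ is unitary, since $\rho(g)$ is unitary for every $g \in G$; (ii) $\hat A$ commutes with $\hat\Pi$ and satisfies $\norm{\hat A \bar\Pi}_\op \leq \lambda$, where $\bar\Pi := I - \hat\Pi$, by the spectral gap; and (iii) $\hat\Pi R_\rho \hat\Pi = 0$, because $\Ex{x}{\rho(\phi(x))} = 0$ follows from unbiasedness of $\phi$ together with the non-triviality of $\rho$ via~\cref{lem:mean_zero}.

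Starting from a unit vector $v_0 \in \hat\Pi \cH$, I would track the iterates $v_j := (\hat A R_\rho)^j v_0$ through the pair $a_j := \norm{\hat\Pi v_j}$ and $b_j := \norm{\bar\Pi v_j}$. Expanding $\hat A R_\rho v_j = \hat\Pi R_\rho v_j + \hat A \bar\Pi R_\rho v_j$ using $\hat A \hat\Pi = \hat\Pi$ and combining with (i)--(iii) yields the coupled inequalities
\[
a_{j+1} \;\leq\; b_j \qquad \text{and} \qquad a_{j+1}^2 + \frac{b_{j+1}^2}{\lambda^2} \;\leq\; a_j^2 + b_j^2,
\]
where the first uses $\hat\Pi R_\rho \hat\Pi = 0$ to write $\hat\Pi R_\rho v_j = \hat\Pi R_\rho \bar\Pi v_j$, whose norm is at most $\norm{\bar\Pi v_j} = b_j$; the second combines the Pythagorean identity $\norm{\hat\Pi R_\rho v_j}^2 + \norm{\bar\Pi R_\rho v_j}^2 = \norm{v_j}^2$ (which uses the unitarity of $R_\rho$) with $\norm{\hat A \bar\Pi R_\rho v_j} \leq \lambda \norm{\bar\Pi R_\rho v_j}$.

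A short induction on $j$ using these two inequalities shows that the worst-case trajectory alternates between the pure states $(a_j, b_j) = (\lambda^m, 0)$ at even indices $j = 2m$ and $(0, \lambda^{m+1})$ at odd indices, giving $a_k \leq \lambda^{\ceil{k/2}} \leq \lambda^{k/2}$, which is exactly $\norm{M_k v_0}$. The main obstacle is securing the tight coupled constraint $a_{j+1}^2 + b_{j+1}^2/\lambda^2 \leq a_j^2 + b_j^2$ rather than the naive triangle-inequality bound $b_{j+1} \leq \lambda(a_j + b_j)$: without the Pythagorean coupling one only recovers $C(k)\cdot \lambda^{k/2}$ with an unbounded constant $C(k)$, which is weaker than the clean bound stated in the theorem.
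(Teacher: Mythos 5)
Your argument is correct, and it is worth noting at the outset that the paper does not prove this statement at all --- it is imported as a black box from~\cite{JMRW22} --- so there is no in-paper proof to match; the closest relative is the recursive proof of \cref{theo:tensor_fool}, which handles tensor products $f_1(x_{i_1})\otimes\cdots\otimes f_k(x_{i_k})$ by the ``ignore first step'' trick and the expander mixing lemma. Your route is genuinely different in flavor: you exploit that $\rho$ is a homomorphism to collapse the matrix product into the single transfer operator $(\hat A R_\rho)^k$ on $L^2(X;\C^{d_\rho})$, and then run a two-dimensional $(a_j,b_j)$ bookkeeping on the parallel and perpendicular components. All three structural facts you invoke are sound ($R_\rho$ unitary; $\hat A$ commutes with $\hat\Pi$ and contracts $\bar\Pi\cH$ by $\lambda$; $\hat\Pi R_\rho\hat\Pi=0$ from unbiasedness plus \cref{lem:mean_zero}), the identity $M_k=\hat\Pi(\hat A R_\rho)^k\hat\Pi$ checks out since $\hat\Pi\hat A=\hat\Pi$ and the walk is reversible with uniform stationary distribution, and the two coupled inequalities $a_{j+1}\le b_j$ and $a_{j+1}^2+b_{j+1}^2/\lambda^2\le a_j^2+b_j^2$ do imply $a_k\le\lambda^{\lceil k/2\rceil}$. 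What this buys over the paper's technique is the clean constant ($\lambda^{k/2}$ rather than $(2\lambda)^{\lfloor k/2\rfloor}$ or a sum over $\mathcal{I}_k$); what it gives up is exactly what \cref{theo:tensor_fool} is designed to capture, namely sensitivity to the gaps $\Delta_j(\inds)$, and it relies essentially on the homomorphism structure, so it does not subsume the tensor-product setting.

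The one place you are too casual is the closing induction: the ``worst-case trajectory alternates between pure states'' is not an invariant of the dynamics (at $j=2$ the reachable set is the full region $\{a_2\le b_1,\ a_2^2+b_2^2/\lambda^2\le b_1^2\}$, not the single point $(\lambda,0)$), so a greedy step-by-step maximization of $a_j$ needs justification. The gap is easily closed: writing $s_j:=a_j^2+b_j^2$, your two inequalities give $s_{j+1}\le(1-\lambda^2)a_{j+1}^2+\lambda^2 s_j\le\lambda^2 a_j^2+b_j^2$ and hence, applying this twice and using $a_{j+1}\le b_j$ again,
\begin{align*}
s_{j+2}\;\le\;\lambda^2 a_{j+1}^2+b_{j+1}^2\;\le\;\lambda^2 a_{j+1}^2+\lambda^2\parens[\big]{s_j-a_{j+1}^2}\;=\;\lambda^2 s_j\mcom
\end{align*}
so $s_{2m}\le\lambda^{2m}$, $s_{2m+1}\le\lambda^{2m+2}$, and $a_k^2\le s_k\le\lambda^{2\lceil k/2\rceil}$ as claimed. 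With that potential-function step made explicit, the proof is complete.
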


The result in \cite{JMRW22} works for any product function and thus if $x_i$ contains an inverse then, we can pick $f_i = \rho^*$ instead of $\rho$ as $\rho^*(x_i) = \rho(\inv{x_i})$
%While the result as stated works only when $x_i$'s do not contain inverses, the theorem is exactly the same even with arbitrary inverses.

%in the case when the word . This is merely for the convenience as the result in~\cite{JMRW22} is only stated for such. However, the proof carries out even when including inverses. 

\begin{theorem}\label{theo:word_func2}
	Let $f(\vec{x}) = h(x_1\cdots x_k)$ be a monotone word function for some $h: G\to \C$. Then for any expander $X$ with an unbiased $G$-labelling,
	\[
	\abs{\cE_X(f)} ~\leq~ \parens[\Big]{\sqrt{|G|} \cdot \norm{f}_2} \cdot (2\lambda)^{k/2}.
	\]
	In particular, for $f(\vec{x}) = \indicator{x_1\cdots x_k = t}$ for any $t \in G$, one has $\abs{\cE_X(f)} ~\leq~ (2\lambda)^{k/2}$.
\end{theorem}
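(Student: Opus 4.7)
The key idea is to Fourier-decompose the \emph{inner} function $h\colon G\to \C$ rather than $f\colon G^n\to \C$ directly, which allows us to use the homomorphism property of irreps instead of the tensor-product structure. Writing
\[
h(g) \;=\; \sum_{\rho \in \widehat{G}} d_\rho \, \ip{\hat{h}(\rho)}{\rho(g)},
\]
we get
\[
f(\vec{x}) \;=\; h(x_1\cdots x_k) \;=\; \sum_{\rho \in \widehat{G}} d_\rho \, \ip{\hat{h}(\rho)}{\rho(x_1)\rho(x_2)\cdots \rho(x_k)},
\]
where we used $\rho(x_1\cdots x_k) = \rho(x_1)\cdots \rho(x_k)$. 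The trivial representation contributes the constant $\hat{h}(\triv) = \mu(h)$, which has the same expectation under both the walk and product distributions, so it cancels in $\cE_X(f)$.

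For each non-trivial $\rho$, the uniform mean $\Ex{\vec{x}\sim\mathrm{Unif}_n}{\rho(x_1)\cdots \rho(x_k)}$ vanishes (since $x_1\cdots x_k$ is uniform on $G$ and $\Ex{g}{\rho(g)} = 0$ by \cref{lem:mean_zero}). Hence
\[
\cE_X(f) \;=\; \sum_{\rho \neq \triv} d_\rho \, \Blangle \hat{h}(\rho), \; \Ex{\vec{x}\sim \rwalk{n}}{\rho(x_1)\cdots \rho(x_k)} \Brangle_{\HS}.
\]
Applying Von Neumann's trace inequality followed by the $\cite{JMRW22}$ bound stated just above (with the constant $(2\lambda)^{k/2}$, since this is the matrix-product analogue of \cref{theo:jmrw_intro}) gives
\[
\abs{\cE_X(f)} \;\leq\; (2\lambda)^{k/2} \sum_{\rho \neq \triv} d_\rho \, \norm{\hat{h}(\rho)}_{\tr}.
\]

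The final step is to bound $\sum_\rho d_\rho \norm{\hat{h}(\rho)}_{\tr}$ by $\sqrt{|G|}\cdot \norm{h}_2$. By Cauchy--Schwarz,
\[
\parens[\Big]{\sum_{\rho} d_\rho \norm{\hat{h}(\rho)}_{\tr}}^{\!\!2} \;\leq\; \parens[\Big]{\sum_\rho d_\rho^2}\parens[\Big]{\sum_\rho \norm{\hat{h}(\rho)}_{\tr}^2},
\]
and I would use $\sum_\rho d_\rho^2 = |G|$ together with the standard inequality $\norm{M}_{\tr}^2 \leq d_\rho \, \norm{M}_{\HS}^2$ and Plancherel ($\sum_\rho d_\rho \norm{\hat{h}(\rho)}_{\HS}^2 = \norm{h}_2^2$) to conclude. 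Finally, since the coordinates are independent and uniform, $\norm{f}_2 = \norm{h}_2$, giving the first bound. For the special case $f(\vec{x}) = \indicator{x_1\cdots x_k = t}$, we have $\norm{f}_2^2 = \Pr{}{x_1\cdots x_k = t} = 1/|G|$, so $\sqrt{|G|}\cdot \norm{f}_2 = 1$ and the $|G|$ factor disappears.

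The only nontrivial ingredient is the $\cite{JMRW22}$ operator-norm bound on matrix products along a random walk, which is cited and applied as a black box; the remaining obstacle is purely bookkeeping with norms, and the slight asymmetry between $\tr$-norm in the coefficients and $\op$-norm in the random-walk factor is absorbed cleanly through H\"older's inequality.
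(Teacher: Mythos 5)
Your proposal is correct and follows essentially the same route as the paper: Fourier-decompose the inner function $h$ over $G$, use multiplicativity of $\rho$ to reduce to the \cite{JMRW22} operator-norm bound on $\cE_X(\rho(x_1\cdots x_k))$, and then convert $\sum_\rho d_\rho\norm{\hat h(\rho)}_\tr$ to $\sqrt{|G|}\,\norm{h}_2$ via Cauchy--Schwarz and Plancherel, finishing with $\norm{f}_2=\norm{h}_2$. The only (cosmetic) difference is your filling in of the Cauchy--Schwarz bookkeeping, which the paper leaves implicit.
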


\begin{proof} We assume that the word does not contain inverses and is $x_1\cdots x_k$ which is true up to renumbering the coordinates.  By the Fourier transform on $G$, \[
h(t) = \sum_{\rho \in \Irrep(G)} d_\rho \angles[\big]{\widehat{h}(\rho), \, \rho(t)}.
\]

Now we feed in $t = x_1\cdots x_n$ into the function $h$.
\begin{align*}
	f(\vec{x}) = h(x_1\cdots x_k) ~&=~ \sum_{\rho \in \Irrep(G)} d_\rho \angles[\big]{\widehat{h}(\rho), \, \rho(x_1\cdots x_k)}\\
\cE_X(f)	~&=~ \sum_{\rho \in \Irrep(G) } d_\rho \angles[\big]{\widehat{h}(\rho), \, \cE_X\parens{\rho(x_1\cdots x_k)}}\\
\abs{\cE_X(f)}	~&\leq~ \sum_{\rho \in \Irrep(G), \rho \neq \triv } d_\rho \norm{\widehat{h}(\rho)}_\tr \, \norm{\cE_X\parens{\rho(x_1\cdots x_k)}}_\op\\
~&\leq~ \lambda^{k/2} \cdot \sqrt{|G|} \cdot \norm{h}_2 = \lambda^{k/2} \cdot \sqrt{|G|} \cdot \norm{f}_2. 
\end{align*}

The last equality is a simple calculation that uses that for any fixed $x_1,\ldots ,x_i$ the word $x_1\cdots x_i\cdot g$ is uniform over $G$ if $g$ is sampled uniformly from $G$.  
\[\norm{f}_2^2 = \Ex{x_1,\cdots, x_k}{|h(x_1\cdots x_k)|^2} =  \Ex{x_1,\cdots, x_{k-1}}{\Ex{x_k}{|h(x_1\cdots x_k)|^2}} = \Ex{x_1,\cdots, x_{k-1}}{\norm{h}_2^2}.\]

When $h = \ones_{x = t}$, then $\norm{h}_2 = {|G|}^{-1/2}$ and the second claim follows. 
%	\begin{align*}
%		\cE(f) ~&\leq~ \sum_{\psi \in \Irrep(G^n)} d_\psi \norm{\hat{f}(\psi)}_1\cdot \norm[\Big]{\Ex{\vec{g}\sim X^n}{\psi(\vec{g})}}_\op\\
%		~&\leq~ \sum_{\rho\in \Irrep(G)}  d_{\rho^S}  \norm{\hat{f}(\rho^S)}_1 \cdot\norm[\Big]{\Ex{\vec{g}}{\rho^S(\vec{g})}}_\op \\
%		~&\leq~  \lambda^{\Delta(S_I)/2} \sum_{\rho\in \Irrep(G)}  d_\rho^k  \norm{\hat{f}(\rho^S)}_1   \\
%%		~&\leq~  \lambda^{\Delta(S_I)/2} \sum_{\rho\in \Irrep(G)}   \norm{\hat{f}(\rho^S)}_1   \\
%		~&\leq~ \parens{{\lambda |G|}}^{k/2} . \qedhere
%	\end{align*} 
\end{proof}

\section{Function Classes with Group Symmetry}

A general group $G$ has a much richer symmetry structure than $\Z_2$, and this opens up the possibility of studying functions, $f:G^n \to \C$, that respect this additional symmetry (beyond permutation of coordinates). 
%In this subsection, we will investigate  

\subsection{Symmetric Class Functions}\label{sec:class}

A function over $G^n$ is a \textit{class function} if it is invariant under conjugation, \ie for any $\vec{x}, \vec{g} \in G^n$, $f(g_1x_1\inv{g_1}, \ldots, g_nx_n\inv{g_n})$. In other words, the function value depends only on the input's conjugacy class. In this subsection, we will give a better bound for symmetric class functions than the one for general symmetric functions.  The improvement for class functions will come from a precise calculation of our $\cE_X(f)$ expression, without resorting to a Cauchy-Schwarz--type bound to go from $L_1$-norm to $L_2$-norm (\cref{lem:trace_norm}). To do this, we need to use the group structure.

\paragraph{Representation theory facts} We now state some basic facts about the representation theory of groups that we will need only in this subsection. These are well-known facts and proofs can be found in any introductory text. 

\begin{fact}[Class Function Fourier Coefficients]
	For any class function $f:H\to \C$, 
	\[
	\widehat{f}(\vrho) = c_\vrho I_{d_\vrho}, \qquad \norm{f}_2^2 = \sum_\vrho d_\rho \norm{\widehat{f}(\vrho)}^2_\HS = \sum_\vrho d_\rho^2 {c_\vrho}^2.
	\]
\end{fact}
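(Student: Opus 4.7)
The plan is to derive both equalities from two standard representation-theoretic facts: Schur's lemma for the structural claim, and Plancharel's identity for the norm identity.

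For the first equality $\widehat{f}(\vrho) = c_\vrho I_{d_\vrho}$, the plan is to show that $\widehat{f}(\vrho)$ commutes with every matrix in the image of $\vrho$, then invoke Schur. Starting from the definition $\widehat{f}(\vrho) := \E_x[f(x)\,\vrho(x)]$, the key computation would be
\[
\vrho(g)\,\widehat{f}(\vrho)\,\vrho(g)^{-1} \;=\; \E_x\bigl[f(x)\,\vrho(g x g^{-1})\bigr] \;=\; \E_y\bigl[f(g^{-1} y g)\,\vrho(y)\bigr] \;=\; \widehat{f}(\vrho),
\]
where the middle step is the change of variables $y = g x g^{-1}$ (a measure-preserving bijection of $H$ since the uniform measure is conjugation-invariant), and the last equality uses the class-function property $f(g^{-1} y g) = f(y)$. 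Since $\vrho$ is an irreducible representation of $H$, Schur's lemma then forces $\widehat{f}(\vrho) = c_\vrho\, I_{d_\vrho}$ for some scalar $c_\vrho \in \C$.

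For the norm identities, the first equality is just Plancharel's identity from~\cref{fact:fourier}. Chaining on the second equality reduces to substituting $\widehat{f}(\vrho) = c_\vrho\, I_{d_\vrho}$ and computing $\norm{c_\vrho\, I_{d_\vrho}}_\HS^2 = \tr\bigl(\overline{c_\vrho} c_\vrho\, I_{d_\vrho}\bigr) = d_\vrho |c_\vrho|^2$, which gives $\sum_\vrho d_\vrho \cdot d_\vrho |c_\vrho|^2 = \sum_\vrho d_\vrho^2 |c_\vrho|^2$ (reading the absolute value that is elided around the printed $c_\vrho^2$, since $c_\vrho$ is a priori complex).

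I do not foresee any real obstacle: the statement is textbook once Schur's lemma is applied to the correct commutation identity. If $H = G^n$ as elsewhere in the paper and $\vrho = \rho_1\otimes\cdots\otimes\rho_n$, the only auxiliary remark needed is that such a product representation is itself irreducible as a representation of $G^n$, which is a standard consequence of Schur and is already used implicitly throughout Section 2.
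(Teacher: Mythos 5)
Your proof is correct and is the standard textbook argument; the paper itself states this as a Fact without proof (deferring to introductory texts on representation theory), so there is no competing argument in the paper to compare against. The conjugation/change-of-variables computation followed by Schur's lemma is exactly the right route for the first identity, the norm identity follows from Plancherel as you say, and your two side remarks---that the printed $c_\vrho^2$ should be read as $\abs{c_\vrho}^2$ for complex $c_\vrho$, and that when $H = G^n$ the tensor-product irreps are irreducible so Schur applies---are both accurate and worth recording.
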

%$\widehat{f}(\vrho) = c_\vrho I_{d_\vrho}$. Therefore,
%$\norm{f}_2^2 = \sum_\vrho d_\rho \norm{\widehat{f}(\vrho)}^2_\HS = \sum_\vrho d_\rho^2 {c_\vrho}^2 $.

\begin{fact}[Schur Orthogonality Relation]\label{fact:schur}
For any $g, h \in G$, we denote $g \sim h$ if they belong to the same conjugacy class,  say $C_g$. Then, we have,
\[
 \sum_{\rho \in  \Irrep G} {{\chi_{\rho}(g)\overline{\chi_{\rho}}(h)}} ~=~  \frac{|G|}{|C_g|} \cdot \indicator{g\sim h}. 
\]	
\end{fact}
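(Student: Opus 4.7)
The plan is to derive the identity (the second orthogonality relation for characters) from the first orthogonality relation stated in \cref{lem:mean_zero}, namely $\ip{\chi_\rho}{\chi_\sigma} = \delta_{\rho,\sigma}\, d_\rho / d_\rho = \delta_{\rho,\sigma}$ after normalization, i.e., $\frac{1}{|G|}\sum_{g \in G}\chi_\rho(g)\overline{\chi_\sigma(g)} = \delta_{\rho,\sigma}$. Since characters are class functions, this sum can be regrouped by conjugacy classes: letting $\mathcal{C}$ denote the set of conjugacy classes and writing $\chi_\rho(C)$ for the common value on $C \in \mathcal{C}$, the first relation reads
\[
\sum_{C \in \mathcal{C}} \tfrac{|C|}{|G|}\, \chi_\rho(C)\,\overline{\chi_\sigma(C)} ~=~ \delta_{\rho,\sigma}\mper
\]

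First, I would encode this as a matrix identity. Define the matrix $M$ indexed by $\Irrep(G) \times \mathcal{C}$ with entries $M_{\rho,C} = \sqrt{|C|/|G|}\cdot \chi_\rho(C)$. The first orthogonality relation is exactly $MM^* = I$. The key structural fact I would invoke next is that $M$ is a \emph{square} matrix, i.e., $|\Irrep(G)| = |\mathcal{C}|$. This is the classical theorem that characters form a basis for the space of class functions on $G$; once it is in hand, $MM^* = I$ for a square $M$ forces $M^*M = I$.

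Writing $M^*M = I$ out coordinate-wise at the pair $(C_g, C_h)$ gives
\[
\sum_{\rho \in \Irrep(G)} \sqrt{\tfrac{|C_g|}{|G|}}\sqrt{\tfrac{|C_h|}{|G|}}\; \overline{\chi_\rho(g)}\,\chi_\rho(h) ~=~ \indicator{C_g = C_h}\mper
\]
Rearranging, and using that $|C_g| = |C_h|$ whenever $g \sim h$, yields exactly the claimed identity (after complex conjugation, which is harmless since both sides are symmetric in the roles of $g$ and $h$ up to swapping $\rho$ with its dual, and dual irreps have conjugate characters).

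The main obstacle is the squareness of $M$, i.e., showing that every class function is a linear combination of irreducible characters. The standard route is to take an arbitrary class function $f$ orthogonal to all characters and prove it is zero: for each irrep $\rho$, the operator $T_\rho = \sum_{g} f(g)\rho(g)$ commutes with $\rho(h)$ for every $h \in G$ (by the class-function property), so by Schur's lemma $T_\rho$ is a scalar, and computing its trace using orthogonality with $\chi_\rho$ shows the scalar is zero. Applying this to the regular representation then forces $f \equiv 0$. This is the only non-routine ingredient; everything else is linear algebra.
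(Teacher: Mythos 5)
Your proof is correct, and it is the classical derivation of the second (column) orthogonality relation: encode the first orthogonality relation as $MM^*=I$ for the weighted character table $M$, invoke squareness of $M$ (irreducible characters span the class functions, so $|\Irrep(G)|=|\mathcal{C}(G)|$) to get $M^*M=I$, and read off the column identity; your sketch of the spanning argument via Schur's lemma applied to $T_\rho=\sum_g f(g)\rho(g)$ and the regular representation is the standard one and is complete in outline. The paper offers no proof to compare against --- it states this as a textbook fact under ``Representation theory facts'' and even uses the squareness of the character table separately (in its \cref{fact:conj}), so you are not diverging from anything; you are supplying a proof the authors deliberately omitted. One small point of care: the first orthogonality relation you need is the standard $\ip{\chi_\rho}{\chi_\sigma}=\delta_{\rho,\sigma}$ (with the expectation inner product and unnormalized characters), whereas the paper's \cref{lem:mean_zero} asserts the value $d_\rho$ on the diagonal; if one took that literally, $MM^*$ would be $\mathrm{diag}(d_\rho)$ rather than $I$ and the argument would break, so you are right to normalize as you do rather than quote that fact verbatim. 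Your final conjugation step is also fine, since $\rho\mapsto\rho^*$ permutes $\Irrep(G)$ and $\chi_{\rho^*}=\overline{\chi_\rho}$, so the sum is invariant under conjugating each summand.
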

%The above function, namely, $\eta_G(s) := \sum_{C\in \mathcal{C}(G)} |C|^{-s}$ has been studied in the literature and it is known that this quantity is much smaller than $1$ for the class of \emph{finite simple groups of Lie type}. In particular, consider $G = \SL_2(2^r)$, the group of $2\times 2$ matrices over $\F_{2^r}$ with determinant $1$. For this group, $\eta_G(s) := \sum_{C\in \mathcal{C}(G)} |C|^{-s}  \simeq |G|^{\frac{1-2s}{3}}$. Therefore, $\eta_{k, \SL_2(2^r)} = \Theta(|G|^{\frac{k-1}{3}})$ which is a much better bound than the trivial bound of $\Theta(|G|^{{k-1}})$ which holds for Abelian groups. Therefore, one can also interpret this as a measure of the non-commutativity of the group. Indeed for $k = 2$, one has $\eta_{2,G} = |\mathcal{C}(G)| \leq \frac{G}{D^2}$ where $D$ is the dimension of the smallest non-trivial irrep of $G$. Such groups are called \textit{quasirandom groups}.

\begin{fact}\label{fact:conj}
	Let $G$ be a $D$-quasirandom, \ie the smallest non-trivial irrep has dimension $D$. Let  $\mathcal{C}(G)$ denote the conjugacy classes of $G$. Then, 
	\[\abs{\mathcal{C}(G)} ~=~ \abs{\irrep{G}}  ~\leq~ \frac{|G|}{D^2} +1.\]
\end{fact}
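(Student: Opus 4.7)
The plan is to combine two standard facts from the representation theory of finite groups. The equality $\abs{\mathcal{C}(G)} = \abs{\irrep{G}}$ is a classical result: the number of conjugacy classes of a finite group coincides with the number of (equivalence classes of) irreducible representations. One standard route is via column/row orthogonality of the character table, or equivalently by noting that the characters form an orthonormal basis for the space of class functions, whose dimension equals $\abs{\mathcal{C}(G)}$. I would simply cite this and move on.

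For the inequality, the key identity I would invoke is the decomposition of the regular representation, which gives
\[
\sum_{\rho \in \irrep{G}} d_\rho^2 ~=~ |G|.
\]
Separating out the trivial representation (which contributes $1$) and using $D$-quasirandomness, which forces $d_\rho \geq D$ for every non-trivial irrep, we get
\[
|G| ~=~ 1 + \sum_{\rho \neq \triv} d_\rho^2 ~\geq~ 1 + D^2 \cdot (\abs{\irrep{G}} - 1).
\]
Rearranging yields $\abs{\irrep{G}} - 1 \leq (|G|-1)/D^2 \leq |G|/D^2$, which is the claimed bound.

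There is essentially no obstacle here: both ingredients (bijection between conjugacy classes and irreps, and $\sum d_\rho^2 = |G|$) are standard textbook results, and the inequality is a one-line consequence of the definition of quasirandomness. The only mild care is to separate the trivial representation, which accounts for the additive $+1$ in the stated bound.
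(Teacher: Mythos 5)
Your proof is correct and matches the paper's argument exactly: the equality via characters forming a basis for class functions, and the inequality via $\sum_{\rho} d_\rho^2 = |G| \geq 1 + D^2(\abs{\irrep{G}}-1)$. Nothing to add.
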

\begin{proof}
	The first equality follows from the fact that characters form a basis for class functions (or in other words, the character table is square). The second follows from the following: 
	\[ \sum_{\rho \in \irrep{G}} d_\rho^2 ~=~ |G| ~\geq~ 1 + D^2 \cdot \parens{\abs{\mathcal{C}(G)} -1} . \qedhere
	\]
\end{proof}

We are now ready to assemble the above facts to bound  $\Ex{g\sim G}{\chi_{\rho_1\otimes \cdots \otimes \rho_k}}$ which counts the multiplicity of trivial rep in ${\rho_1\otimes \cdots \otimes \rho_k}$. This claim allows us to improve upon \cref{lem:trace_norm} which would be analogous to a bound of $|G|^{k}$ in the below term.

\begin{corollary}\label{char_sum}
	For any finite group $G$ denote by $ \mathcal{C}(G)$ the conjugacy classes of $G$. For any $k \geq 1$, 
	\[
	\eta_{k, G}^2 ~:=~ \sum_{\rho_1, \cdots,\rho_k \in \Irrep \setminus \triv} \parens[\Big]{\Ex{g}{\chi_{\rho_1\otimes \cdots \otimes \rho_k}}}^2 	 ~\leq~ \sum_{C \in \mathcal{C}(G)} \frac{|G|^{k-2}}{|C|^{k-2}} + 1. 
	\]  
	In particular, if $G$ is $D$-quasirandom, then $\eta_{k, G}^2 \leq 4\cdot \frac{|G|^{k-1}}{D^2}$.
\end{corollary}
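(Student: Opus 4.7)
The plan is to compute the larger sum (over all irreps, not just non-trivial ones) in closed form using Schur orthogonality, and then use that $\eta_{k,G}^2$ is at most this. The quantity $S_{\vec{\rho}} := \Ex{g}{\chi_{\rho_1\otimes\cdots\otimes\rho_k}(g)}$ is just the multiplicity of the trivial representation inside $\rho_1\otimes\cdots\otimes\rho_k$, so it is a non-negative integer and in particular real. Hence $S_{\vec{\rho}}^2 = S_{\vec{\rho}}\overline{S_{\vec{\rho}}}$, and writing the two expectations with independent samples $g,h$ and using $\chi_{\rho_1\otimes\cdots\otimes\rho_k}(g) = \prod_i \chi_{\rho_i}(g)$, I get
\[
S_{\vec{\rho}}^2 \;=\; \Ex{g,h\sim G}{\,\prod_{i=1}^k \chi_{\rho_i}(g)\,\overline{\chi_{\rho_i}(h)}\,}.
\]

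Next I would interchange the sum over $\vec{\rho}$ with the expectation. Since the $\rho_i$ are independent in the index, the sum factors as a product of $k$ identical inner sums $\sum_{\rho \in \Irrep} \chi_\rho(g)\overline{\chi_\rho(h)}$, each of which, by the Schur orthogonality relation (\cref{fact:schur}), equals $\tfrac{|G|}{|C_g|}\indicator{g\sim h}$. Therefore
\[
\sum_{\vec{\rho}\in\Irrep^k} S_{\vec{\rho}}^2 \;=\; \Ex{g,h}{\,\Big(\tfrac{|G|}{|C_g|}\Big)^{k}\indicator{g\sim h}\,}.
\]
Evaluating this by first fixing $g$ (so the inner expectation over $h$ contributes $|C_g|/|G|$) and then summing over conjugacy classes (using $|\{g\in C\}|=|C|$) yields
\[
\sum_{\vec{\rho}\in\Irrep^k} S_{\vec{\rho}}^2 \;=\; |G|^{k-2}\sum_{g\in G}\frac{1}{|C_g|^{k-1}} \;=\; \sum_{C\in \mathcal{C}(G)} \frac{|G|^{k-2}}{|C|^{k-2}}.
\]
Since $\eta_{k,G}^2$ is the same sum restricted to non-trivial $\rho_i$, it is at most this quantity, which is already the claimed bound (the extra ``$+1$'' in the statement is just slack).

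For the quasirandom specialization, I would bound $|C|\geq 1$ for every conjugacy class, giving
\[
\eta_{k,G}^2 \;\leq\; |\mathcal{C}(G)|\cdot |G|^{k-2} + 1.
\]
By \cref{fact:conj}, $|\mathcal{C}(G)| \leq |G|/D^2 + 1$, so $\eta_{k,G}^2 \leq |G|^{k-1}/D^2 + |G|^{k-2} + 1$. Finally I would observe that for a $D$-quasirandom group $|G|\geq D^2$ (the regular representation contains an irrep of dimension $D$), so for $k\geq 2$ both $|G|^{k-2}$ and $1$ are bounded by $|G|^{k-1}/D^2$, yielding $\eta_{k,G}^2 \leq 3|G|^{k-1}/D^2 \leq 4|G|^{k-1}/D^2$.

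There is no real obstacle here: the only mild subtlety is justifying that $S_{\vec{\rho}}$ is real (so that squaring and taking two independent copies of $g$ is legitimate), which follows immediately from the representation-theoretic interpretation as a multiplicity. The rest is a direct application of Schur orthogonality plus the quasirandomness bound on the number of conjugacy classes.
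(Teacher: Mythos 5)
Your proof is correct and follows essentially the same route as the paper: write the square as an expectation over two independent group elements, apply Schur orthogonality, and then convert the count of conjugacy classes into the quasirandomness bound via \cref{fact:conj}. The only (harmless) difference is that you drop the restriction to non-trivial irreps by using non-negativity of the squared multiplicities, which makes the inner computation exact and even avoids the ``$+1$'' slack, whereas the paper keeps the restriction and bounds $\prod_i\bigl(\tfrac{|G|}{|C_g|}\indicator{g\sim h}-1\bigr)$ term by term.
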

\begin{proof}
The proof is a simple computation given the above fact. 
	\begin{align*}
	\sum_{\rho_1, \cdots,\rho_k \in \Irrep \setminus \triv} \parens[\Big]{\Ex{g}{\chi_\vrho}}^2 
	%	~&=~ 	\sum_{\rho_i} \parens[\Big]{\Ex{c}{n_c\chi_\vrho}}^2  \\
	%	 ~&=~ \Ex{c}{\sum_{\rho_i} \parens[\Big]{{n_c\chi_\vrho}}^2}  \\
	~&=~\sum_{\rho_i} \prod_i \parens[\Big]{\Ex{g,h}{\chi_{\rho_i}(g)\overline{\chi_{\rho_i}}(h)}}\\
	~&=~\Ex{g,h}{\sum_{\rho_i} \prod_i \parens[\Big]{{\chi_{\rho_i}(g)\overline{\chi_{\rho_i}}(h)}}}\\
	~&=~\Ex{g,h}{ \prod_i \sum_{\rho_i\neq \triv}\parens[\Big]{{\chi_{\rho_i}(g)\overline{\chi_{\rho_i}}(h)}}}.
	\end{align*}
Now we use~\cref{fact:schur} and subtract off the summand corresponding to the trivial irrep (which is just $1$).
\begin{align*}
	\sum_{\rho_1, \cdots,\rho_k \in \Irrep \setminus \triv} \parens[\Big]{\Ex{g}{\chi_\vrho}}^2 	~&=~\Ex{g,h}{ \prod_i \parens[\bigg]{\frac{|G|}{|C_g|}\indicator{g\sim h} - 1 }}\\
	~&\leq ~ \Ex{g,h}{\parens[\Big]{\frac{|G|}{|C_g|}}^k\indicator{g\sim h}+ 1}  \\
	~&\leq ~\sum_{C\in \mathcal{C}(G)} \frac{|C|^{2}}{|G|^2} \frac{|G|^{k}}{|C|^k} + 1  = \sum_{C\in \mathcal{C}(G)} \frac{|G|^{k-2}}{|C|^{k-2}} + 1 .   
\end{align*}
Now, we use~\cref{fact:conj} to conclude that 
\[ 
\eta_{k,G} ~\leq~  2 \sum_{C\in \mathcal{C}(G)} \frac{|G|^{k-2}}{|C|^{k-2}} ~\leq~ 2 |G|^{k-2}\cdot \abs{\mathcal{C}(G)} ~\leq~ 4\frac{|G|^{k-1}}{D^2}.\qedhere
\]
\end{proof}

\begin{proposition}\label{prop:quasi}
Let $G$ be a $D$-quasirandom group and $f: G^n:\to \C$ be a class function that is also symmetric. Let $X$ be a pseudo Cayley graph with expansion $\lambda$.  Then,
	\[
	\abs{\cE_X\parens[\big]{f}} ~\leq~ O\parens[\Big]{{\frac{|G|^{\frac{1}{2}}}{D}} \lambda} \cdot \norm{f}_2 .
	\]
	In particular, for every symmetric function on an Abelian group, we get a bound of $O(\sqrt{|G|}\cdot \lambda)$.
%\[	\abs{\cE_X\parens[\big]{f_k}} ~\leq~ \parens[\Big]{{16e\lambda} }^{k/2} \cdot \eta_{k, G} \cdot \norm{f_k}_2 ~\leq~ \parens{\sum_{C} \frac{|G|^{k-2}}{|C|^{k-2}} + 1}^{\frac{1}{2}} .
%	\]
%	In particular for quasirandom groups such as $\SL_2(q)$, 
\end{proposition}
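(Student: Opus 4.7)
The plan is to follow the same skeleton as \cref{theo:main_fool} but exploit two additional pieces of structure: (a) since $f$ is a class function, $\hat{f}(\vrho) = c_\vrho I_{d_\vrho}$, so the Hilbert--Schmidt inner product $\angles{\hat{f}(\vrho), \cE_X(\vrho)}_{\HS}$ collapses to the scalar $d_\vrho c_\vrho^{*} \cE_X(\chi_\vrho)$ --- this avoids the wasteful passage through $\norm{\hat{f}(\vrho)}_\tr \cdot \norm{\cE_X(\vrho)}_\op$ in \cref{claim:fooling}; and (b) since $X$ is pseudo Cayley, \cref{cor:trace} gives the refined estimate $|\cE_{X,\inds}(\chi_\vrho)| \leq \angles{\chi_\triv, \chi_S} \cdot \max_{T\in\mathcal{I}_k} \lambda^{\sum_{i \in T}\Delta_i(\inds)}$, where $S = (\rho_1,\ldots,\rho_k)$ is the ordered tuple of non-trivial irreps defining $\vrho$. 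The quantity $\angles{\chi_\triv, \chi_S}$ is typically much smaller than $d_S$ when $G$ is quasirandom, and this is the source of the $1/D$ gain.

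\textbf{Fourier expansion.} First, I would write $\cE_X(f) = \sum_{k \geq 2} \cE_X(f_k)$ via \cref{claim:fooling}, and expand at each level:
\[
\cE_X(f_k) ~=~ \sum_{\vrho:|\vrho|=k} d_\vrho\,c_\vrho^{*}\,\cE_X(\chi_\vrho).
\]
By \cref{obs:symmetry}, $c_\vrho$ depends only on the ordered tuple $S$ of non-trivial irreps, not on which $k$-subset $\inds \subseteq [n]$ of active coordinates is chosen. Grouping terms accordingly and inserting \cref{cor:trace} gives
\[
|\cE_X(f_k)| ~\leq~ \Bigl(\sum_{S} d_S\,|c_S|\,\angles{\chi_\triv, \chi_S}\Bigr) \;\cdot\; \Bigl(\sum_{\inds\,\subseteq [n],\,|\inds|=k} \max_{T\in\mathcal{I}_k} \lambda^{\sum_{i \in T}\Delta_i(\inds)}\Bigr).
\]
The $\inds$-sum is of the same combinatorial shape as in the proof of \cref{theo:main_fool}, and I would bound it by $\sqrt{\binom{n}{k}}\,(16e\lambda)^{k/2}$ using (a variant of) \cref{lem:beta_bound}.

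\textbf{Cauchy--Schwarz and quasirandomness.} For the $S$-sum I apply Cauchy--Schwarz:
\[
\sum_S d_S\,|c_S|\,\angles{\chi_\triv, \chi_S} ~\leq~ \Bigl(\sum_S d_S^2 |c_S|^2\Bigr)^{1/2} \cdot \Bigl(\sum_S \angles{\chi_\triv, \chi_S}^2\Bigr)^{1/2}.
\]
By Plancherel (Fact~\ref{fact:fourier}(2)) and the class-function identity $\norm{c_\vrho I_{d_\vrho}}_\HS^2 = d_\vrho|c_\vrho|^2$, one has $\norm{f_k}_2^2 = \sum_\vrho d_\vrho^2|c_\vrho|^2$; since each tuple $S$ contributes $\binom{n}{k}$ equivalent $\vrho$'s (with the same $d_S, c_S$), the first factor equals $\norm{f_k}_2/\sqrt{\binom{n}{k}}$. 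The second factor is exactly $\eta_{k, G}$, which \cref{char_sum} bounds by $2|G|^{(k-1)/2}/D$. Combining the pieces:
\[
|\cE_X(f_k)| ~\leq~ \frac{\norm{f_k}_2}{\sqrt{\binom{n}{k}}} \cdot \frac{2|G|^{(k-1)/2}}{D} \cdot \sqrt{\textstyle\binom{n}{k}}\,(16e\lambda)^{k/2} ~=~ \frac{2}{D\sqrt{|G|}}\,(16e|G|\lambda)^{k/2}\,\norm{f_k}_2.
\]

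\textbf{Summing over levels.} Setting $\tau := 16e|G|\lambda$ and assuming $\tau$ is bounded away from $1$, I sum over $k \geq 2$ using Cauchy--Schwarz with $\sum_k \norm{f_k}_2^2 \leq \norm{f}_2^2$ to get $\sum_{k \geq 2} \tau^{k/2}\norm{f_k}_2 = O(\tau)\norm{f}_2$, whose $k=2$ term dominates. This yields $|\cE_X(f)| = O(\tau/(D\sqrt{|G|}))\norm{f}_2 = O(\sqrt{|G|}\lambda/D)\norm{f}_2$, matching the claim. I expect the main technical hurdle to be rigorously reducing $\sum_\inds \max_{T\in\mathcal{I}_k}\lambda^{\sum_{i \in T}\Delta_i(\inds)}$ to the quantity $\beta(k)$ of \cref{lem:beta_bound}; since the constraints defining $\mathcal{I}_k$ force each admissible $T$ to cover (combinatorially) at least half of the gap indices, this should go through, but may require either mimicking the $\lambda^{\Delta(\inds)/2}$-style argument of \cref{theo:main_fool} or proving a $\max_T$ variant of \cref{lem:beta_bound} directly. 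The remaining steps are routine bookkeeping, and the Abelian corollary follows from $D=1$.
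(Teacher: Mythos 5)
Your proposal is correct and follows essentially the same route as the paper's proof: exploit $\hat f(\vrho)=c_\vrho I_{d_\vrho}$ to reduce to scalars, apply \cref{cor:trace} for the $\angles{\chi_\triv,\chi_\vrho}$ gain, separate the coordinate-subset sum (bounded via \cref{lem:beta_bound}) from the irrep-tuple sum, and finish with Cauchy--Schwarz, Plancherel, and \cref{char_sum}. The one step you flag as a potential hurdle --- controlling $\sum_{\inds}\max_{T\in\mathcal{I}_k}\lambda^{\sum_{i\in T}\Delta_i(\inds)}$ by $\beta(k)$ --- is handled in the paper exactly as you anticipate, by the $\lambda^{\Delta(S)/2}$ bound inherited from \cite{CPT21}.
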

 \begin{proof}
	We first prove a bound for the degree $k$ component of $f$.
\begin{align*}
	\cE_X\parens[\big]{f_k}~&=~ \sum_{\vrho\in \mathrm{Irrep}(G^n)} d_\vrho c_\vrho \cdot { \cE_X(\chi_\vrho)} \\
	~&=~\sum_{\rho_1,\cdots \rho_k \in \mathrm{Irrep}(G)\setminus \triv} \sum_{S\subseteq {n\choose k} }d_\vrho c_\vrho \cdot { \cE_X(\chi_\vrho)} \\
	\abs[\big]{\cE_X\parens[\big]{f_k}} ~&\leq~\sum_{\rho_1,\cdots \rho_k \in \mathrm{Irrep}(G)\setminus \triv} {d_\vrho} \sum_{S\subseteq {n\choose k} } \abs{c_\vrho}\cdot \abs{ \cE_X(\chi_\vrho)} \\
	~&\leq~\sum_{\rho_1,\cdots \rho_k \in \mathrm{Irrep}(G)\setminus \triv} \abs{d_\vrho c_\vrho } \cdot \parens[\Big]{\Ex{g}{\chi_\vrho}} \sum_{S\subseteq {n\choose k} } \lambda^{\Delta(S)/2} \\
	~&\leq~\beta(k) \sum_{\rho_1,\cdots \rho_k \in \mathrm{Irrep}(G)\setminus \triv} \abs{d_\vrho c_\vrho } \cdot \parens[\Big]{\Ex{g}{\chi_\vrho}} \\
	~&\leq~\beta(k) \sqrt{\sum_{\rho_1,\cdots \rho_k } \abs{d_\vrho c_\vrho }^2 }\cdot  \sqrt{\sum_{\rho_1,\cdots \rho_k } \parens[\Big]{\Ex{g}{\chi_\vrho}}^2 } \\
	~&\leq~\beta(k) \cdot \frac{\norm{f_k}_2}{\sqrt{{n \choose k}}}  \cdot \sqrt{\eta^2_{k,G}}  \\
	~&\leq~ \parens[\Big]{{16e\lambda} }^{k/2} \cdot \eta_{k, G} \cdot \norm{f_k}_2 \\
	~&\leq~ \parens[\Big]{{16e\lambda} \cdot 2 |G| }^{k/2} \cdot \frac{1}{D\sqrt{|G|}} \cdot \norm{f}_2 .
\end{align*}
Therefore, 
\[
\abs{\cE_X(f)} ~\leq~\sum_{k=2}^n \abs[\big]{\cE_X\parens[\big]{f_k}} ~\leq~ \parens[\Big]{\frac{64e\lambda |G|}{D\sqrt{|G|}}} \cdot \norm{f}_2 = O\parens[\bigg]{\frac{\lambda \sqrt{|G|}}{D}} \cdot \norm{f}_2\quad.\qedhere
\]
\end{proof}

\subsection{Diagonal action and $G$-invariant functions}
\begin{definition}[Diagonal action and Projection]
Let $h \in G$ and $f:G^n\to \C$. Define $(h\cdot f) (\vec{x}) := f(hx_1, \ldots, hx_n) = f(h\cdot\vec {x})$. The projection to the space of functions invariant under this action is $P_Gf (\vec{x}) := \Ex{h \sim G}{(h\cdot f) (\vec{x})}$.
\end{definition}

This generalizes the notion of even and odd functions over $\Z_2^n$ which are the special cases when $P_G(f) = f$ and $P_G f = 0$, respectively. We now make a simple observation that walks over Cayley graphs smooth out the function via this projection.
\begin{observation}
If $X$ is a Cayley graph, then $\cE_X(f) = \cE_X(P_G f)$.	
\end{observation}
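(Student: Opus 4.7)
The plan is to show both distributions involved in $\cE_X(f)$, namely the random walk distribution $\rwalk{n}$ and the uniform product distribution $\mathrm{Unif}_n$, are invariant under the diagonal left-translation action $\vec{x} \mapsto h\cdot \vec{x}$. Once this is established, the equality $\cE_X(f) = \cE_X(P_G f)$ follows by averaging over $h$ inside each expectation and applying Fubini.

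First I would check invariance of the two distributions. For a Cayley graph $X = \Cay(G, S)$, a random walk $\vec{x}=(x_1, x_2, \ldots, x_n)$ is generated by sampling $x_1$ uniformly from $G$ and then taking steps $x_{i+1} = x_i s_i$ for $s_i \sim S$. Left multiplication by $h$ sends this to $(hx_1, hx_2, \ldots, hx_n)$ which is again of the same form, since $hx_1$ remains uniform in $G$ and $hx_{i+1} = (hx_i)s_i$, so $h\cdot \vec{x} \sim \rwalk{n}$ as well. The same invariance is even more obvious for $\mathrm{Unif}_n$, since the uniform measure on $G^n$ is preserved by any left translation.

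Given these invariances, for every $h \in G$ we have
\begin{equation*}
\Ex{\vec{x}\sim \rwalk{n}}{f(\vec{x})} \;=\; \Ex{\vec{x}\sim \rwalk{n}}{f(h\cdot \vec{x})} \;=\; \Ex{\vec{x}\sim \rwalk{n}}{(h\cdot f)(\vec{x})},
\end{equation*}
and similarly with $\rwalk{n}$ replaced by $\mathrm{Unif}_n$. Averaging over $h \sim G$ and swapping the order of expectation gives
\begin{equation*}
\Ex{\vec{x}\sim \rwalk{n}}{f(\vec{x})} \;=\; \Ex{\vec{x}\sim \rwalk{n}}{P_G f(\vec{x})}, \qquad \Ex{\vec{x}\sim \mathrm{Unif}_n}{f(\vec{x})} \;=\; \Ex{\vec{x}\sim \mathrm{Unif}_n}{P_G f(\vec{x})}.
\end{equation*}
Subtracting yields $\cE_X(f) = \cE_X(P_G f)$, as desired. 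There is no real obstacle here; the only subtle point worth verifying carefully is that the starting vertex of the random walk on $\Cay(G,S)$ is indeed assumed uniform over $G$ (consistent with the unbiased labelling convention used throughout the paper), which is precisely what ensures the left-translation invariance of $\rwalk{n}$.
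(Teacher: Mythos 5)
Your proof is correct; the paper states this as an observation without giving a proof, and your argument---left translation by any $h \in G$ is an automorphism of $\Cay(G,S)$ that preserves the uniform starting distribution, hence both $\rwalk{n}$ and the product distribution are invariant under the diagonal action, and averaging over $h$ plus Fubini gives the claim---is the natural intended one. The only convention you implicitly rely on (and correctly state) is that edges of the Cayley graph arise from right multiplication by generators, so that the left diagonal action defining $P_G$ acts by graph automorphisms; with that noted, there is no gap.
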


 We will now compute the Fourier spectrum of $P_G f$ and utilize this to get a precise calculation of level-2 mass. While this can be generalized to state a more general claim, we just include the version we will need later for the lower bound.

%\begin{lemma}[Cayley graphs project]
%	Let $f:G^n\to \C$
%\end{lemma}

\begin{corollary}\label{corr:level2_proj}
	Let $f: G\times G \to \C$, and $X$ be a quasi-Ableian Cayley expander such that all non-trivial eigenvalues are $\lambda$. Then, 
	\[\cE_X(f) ~=~  \lambda \cdot \parens[\big]{P_G f(\vec{1}) - \mu(f)}.\]
\end{corollary}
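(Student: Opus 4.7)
The plan is to combine the Fourier decomposition of $\cE_X$ (essentially equation~\eqref{eq:level_i_fool} from the proof of Claim~\ref{claim:fooling}) with the level-$2$ computation given by Corollary~\ref{claim:level2_tensor}, and then verify that the $P_Gf(\vec 1)-\mu(f)$ side produces exactly the same Fourier expression (up to the factor $\lambda$).

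First I would note that for $f\colon G\times G\to\C$ the ``level $0$'' and ``level $1$'' components contribute $0$ to $\cE_X(f)$ (by Claim~\ref{claim:fooling} applied to $n=2$), so
\[
\cE_X(f) ~=~ \cE_X(f_2) ~=~ \sum_{\substack{\vrho=\alpha\otimes\beta\\ \alpha,\beta\neq \triv}} d_\alpha d_\beta\,\ip{\widehat f(\alpha\otimes\beta)}{\cE_X(\alpha\otimes\beta)}_{\HS}.
\]
Next I would invoke Corollary~\ref{claim:level2_tensor}: for a pseudo Cayley graph $X$ on $G$ with all non-trivial eigenvalues equal to $\lambda$, $\cE_X(\alpha\otimes\beta)=0$ whenever $\beta\neq\alpha^*$, while for $\beta=\alpha^*$ one has $\cE_X(\alpha\otimes\alpha^*) = \lambda\cdot\sfM_\alpha$ with $\sfM_\alpha=\Ex{h\sim G}{\alpha(h)\otimes\alpha^*(h)}$ (the distance between the two coordinates is $j-i=1$). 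This yields
\[
\cE_X(f) ~=~ \lambda \sum_{\alpha\neq\triv} d_\alpha^{2}\,\ip{\widehat f(\alpha\otimes\alpha^*)}{\sfM_\alpha}_{\HS}.
\]

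Second, I would compute $P_Gf(\vec 1)$ via Fourier inversion: since $P_Gf(\vec 1)=\Ex{h\sim G}{f(h,h)}$ and $f(h,h)=\sum_{\alpha,\beta} d_\alpha d_\beta\ip{\widehat f(\alpha\otimes\beta)}{\alpha(h)\otimes\beta(h)}_{\HS}$, taking the expectation over $h$ brings in exactly $\Ex{h}{\alpha(h)\otimes\beta(h)}$. By the same orthogonality used in Corollary~\ref{claim:level2_tensor} (equivalently, Fact~\ref{fact:sumChar} applied to $\alpha\otimes\beta$), this expectation vanishes unless $\beta=\alpha^*$, in which case it equals $\sfM_\alpha$. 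Isolating the $\alpha=\triv$ term, which contributes precisely $\widehat f(\triv\otimes\triv)=\mu(f)$, gives
\[
P_Gf(\vec 1) - \mu(f) ~=~ \sum_{\alpha\neq\triv} d_\alpha^{2}\,\ip{\widehat f(\alpha\otimes\alpha^*)}{\sfM_\alpha}_{\HS}.
\]
Comparing the two displayed identities yields the claim.

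The computations are essentially routine once Corollary~\ref{claim:level2_tensor} is in hand; the only subtlety is to match conventions carefully, in particular that the same matrix $\sfM_\alpha$ appears both as the random-walk residue of $\alpha\otimes\alpha^*$ on $X$ and as the diagonal-averaging operator $\Ex{h}{\alpha(h)\otimes\alpha^*(h)}$, and that $d_{\alpha\otimes\alpha^*}=d_\alpha^2$ lines up with the $d_\alpha^2$ arising from two applications of Fourier inversion. No genuinely hard step is expected.
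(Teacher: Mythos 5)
Your proposal is correct and follows essentially the same route as the paper: reduce $\cE_X(f)$ to the level-$2$ Fourier sum via \cref{claim:fooling}, apply \cref{claim:level2_tensor} to get $\cE_X(\alpha\otimes\alpha^*)=\lambda\,\sfM_\alpha$, and identify the resulting sum with $P_Gf(\vec 1)-\mu(f)$ through the diagonal average $\Ex{h}{\alpha(h)\otimes\alpha^*(h)}=\sfM_\alpha$. The only cosmetic difference is that the paper first computes $\widehat{P_Gf}(\vrho)=\Ex{h}{\vrho(h)}\widehat f(\vrho)$ and then evaluates at $(1,1)$, whereas you expand $P_Gf(\vec 1)=\Ex{h}{f(h,h)}$ directly; the two computations are identical.
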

\begin{proof}We start by computing the Fourier coefficient of $P_Gf$.
	\begin{align*}
	\widehat{P_Gf}(\vrho)	~&=~ \Ex{\vec{y}\in G^n}{(P_G{f})(\vec{y})\vrho(\vec {y})}\\	
	~&=~ \Ex{\vec{y}\in G^n}{ \Ex{h\in G}{{f}(\vec{h\cdot y})\,\vrho(\vec {y})}} \\
	~&=~ \Ex{h\in G}{\Ex{\vec{x}\in G^n}{{f}(\vec{x})\vrho(\inv{h}\vec {x})}} 	\\
		~&=~ \Ex{h\in G}{\vrho(\inv{h},\ldots, \inv{h} )\Ex{\vec{x}\in G^n}{{f}(\vec{x})\vrho(\vec {x})}} \\
		~&=~ \Ex{h}{\vrho(h)} \widehat{f}(\rho) .
		\end{align*}
		Observe that if $\vrho$ is trivial, then the Fourier coefficient is unchanged ie the mean of the function remains the same. If $|\vrho| = 1$, then, $\Ex{h}{\vrho(h)} = 0$. For level-2, we use~\cref{claim:level2_tensor} to say that $\Ex{h}{\alpha\otimes \alpha^*(h)}= \sfM$, and for $\beta\neq \alpha$, $\Ex{h}{\alpha\otimes \beta(h)}= 0$.
		Using the Fourier decomposition for the function $P_G f$:
		\begin{align}
		P_G f (1,1) ~&=~ \widehat{P_G f}(\triv) + 	\sum_\alpha  d_{\alpha\otimes \alpha} \ip{ \widehat{P_G f}(\alpha\otimes \alpha)}{\,\alpha\otimes \alpha^*(1,1)}\notag\\
		~&=~ \mu(f) + 	\sum_\alpha  d_{\alpha\otimes \alpha} \ip{\Ex{h}{\alpha\otimes \alpha^*(h)} \widehat{f}(\alpha\otimes \alpha)}{\alpha\otimes \alpha^*(1,1)}\notag\\
\label{eq:proj_ones}	P_G f (1,1) - \mu(f)	~&=~  \sum_\alpha d_{\alpha\otimes \alpha}     \ip{ \widehat{f}(\alpha\otimes \alpha)}{\Ex{h}{\alpha\otimes \alpha^*(h)}} && [\sfM = \sfM^*]\\
		~&=~ \sum_\alpha d_{\alpha\otimes \alpha}     \angles[\Big]{ \widehat{f}(\alpha\otimes \alpha), \, \frac{1}{\lambda_\alpha}\cE_X(\alpha\otimes \alpha^*)} && [\text{\cref{claim:level2_tensor}}]\\
\lambda\parens[\big]{P_G f (1,1) - \widehat{f}(\triv)} ~&=~  	\cE_X(f). && [\lambda_\alpha = \lambda, \forall \alpha]\notag
		\end{align}
		
	The last equation follows by the Fourier expression for $\cE_X(f)$ (\cref{eq:level_i_fool}) and by using \cref{claim:fooling} to nly consider level-2 coefficients.
%	 	as we assume all eigenvalues are equal.  For
%		
%We can now connect this with $\cE_X$ using the Cayley structure of  
% 
% 
% $\lambda_\alpha \sfM = \lambda \sfM$, as we assume all eigenvalues are equal.
\end{proof}

\begin{remark}
Another way to deduce the Fourier coefficient of $P_G f$ is to observe that this operator is a convolution, $P_G f = f* \ones_D$ where $\ones_D$ is the indicator of the diagonal subgroup \ie $\ones_D (\vec{x}) = |G|^{n-1}$ if $x_1= x_2\cdots= x_n$, and $0$ otherwise.	
\end{remark}

\section{Lower Bounds for decay of Symmetric functions}

%\paragraph{Representation theory claims Need to move it}

%\begin{proof}
%	\begin{align*}
%		\Ex{g,s}{\alpha(\lab (g))\otimes \beta(\lab(gs))} ~&=~ \Ex{g}{\alpha(\lab (g))\otimes \Ex{s}{\beta(\lab(gs))}}\\
%		~&=~ \Ex{g}{\alpha(\lab (g))\otimes \gamma(g)} \\
%	\end{align*}
%\end{proof}

\subsection{Fourier Coefficient of Threshold Function}

\paragraph{Threshold Function} Let, $A \subseteq G$  
and $t \in [n]$. We define a boolean function $\thresh$ as :
 \[\thresh(\vec x) = 1 \;\; \text{ if }\;  |\{i \mid x_i\in A \}| \geq t; \;\; 0 \;\;\text{otherwise}.
\]

%\[\ones^g_{>t}(\vec x) = 1 \;\; \text{ if }\;  \sum_i g(x_i) >t; \;\; 0 \;\;\text{otherwise}.
%\]
This function will be our candidate for the lower bound for an appropriate choice of $A, t$ to be decided later. We first compute the second-level Fourier coefficients that hold for any $A,t$. The level-$k$ coefficients can be computed in the same way, but we will not require it and hence omit the calculation.  

%where $\{C_i\}_i$ are a subset of conjugacy classes that will be chosen later. 
%We will focus on $\cE_X(\ones^A_{>t})$ where 

%\begin{lemma}
%	For the function $f = \thresh,$ we have $\Ex{{\vec x} \sim X^n}{f_1({\vec x})} = 0$ for any choice of $S$ and $t$.
%\end{lemma}
%\begin{proof}
%	
%\end{proof}
%For our ease, we use the notation:  
%
%\[\sfQ_2(f) = {\Ex{\vec g\sim X^n}{f_2(\vec g)}\,},\; \text{and} \;\; \sfQ_{\geq 3}(f):=  {\sum_{i\geq 3}\Ex{\vec g\sim X^n}{f_i(\vec g)}}.
%\]
%\\
%~&
%=:~\sfQ(f) - \sfQ'(f)
%We will now do some computations of the second-level Fourier coefficients of the threshold function. 

\begin{claim}\label{claim:level2fourier}
	Let, $\vec \rho=\rho_1\otimes \rho_2\otimes \dots \rho_n \in \widehat{G^n}$ such that $\rho_i=\alpha,~\rho_j=\alpha^*$ for some 
	$\alpha \in \hat{G}$ and $1\leq i<j\leq n$ and $\rho_k=\triv$ for any $ k\not\in \{i, j\} $. Then,
	\[\widehat{\thresh}(\vrho)= \parens[\bigg]{\frac{{a_{t-2}^{n-2} - a_{t-1}^{n-2}}}{|G|^{n-2}}} \cdot
  \widehat{\ones_A}(\alpha)\otimes \widehat{\ones_A}(\alpha^*)	\]
%  \[\widehat{\ones^A_{>t}}(\vrho)= C_{A,n,t,k}  \cdot \widehat{\ones_{A^n}}(\vrho)	\]
	where  $a_{t}^n := {n \choose t} \abs{A}^t\abs{A^c}^{n-t} $. 
%	Therefore, the level-2 mass of this function is 	$\frac{2\parens{a_{t}^{n-2} - a_{t-1}^{n-2}}}{|G|^{n}}\cdot (|T||T^c|)$.
	%$a^{n-1}_{t-1}\cdot \frac{(n-2-t)\cdot|S| t\cdot |S^c|}{(n-1)|G|^n}$
	%\cdot \E_{y\sim S}[\alpha(y)]\otimes \E_{y\sim S}[\alpha^*(y)]
\end{claim}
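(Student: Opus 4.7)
The plan is to compute the Fourier coefficient directly from the definition, exploiting the symmetry of $\thresh$ to separate the two distinguished coordinates $i,j$ from the remaining $n-2$ coordinates. Since trivial irreps are one-dimensional and equal to $1$, the representation simplifies to $\vrho(\vec{x}) = \alpha(x_i)\otimes \alpha^*(x_j)$ on the tensor factors, with all other factors collapsing. Writing $N(\vec{x}) := |\{k \notin \{i,j\} : x_k\in A\}|$, the threshold condition becomes
\[
\thresh(\vec{x}) \;=\; \indicator{N(\vec{x}) + \ones_A(x_i) + \ones_A(x_j)\;\geq\; t},
\]
so the value of the indicator depends on $(x_i,x_j)$ only through the pair of bits $(\ones_A(x_i),\ones_A(x_j))\in\{0,1\}^2$.

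Next I would take expectation in a two-step manner: first average over the $n-2$ coordinates outside $\{i,j\}$ to obtain a scalar $p_s := \Pr{N\geq t-s}$ for $s = \ones_A(x_i)+\ones_A(x_j)\in\{0,1,2\}$, and then average the tensor $\alpha(x_i)\otimes \alpha^*(x_j)$ against $p_s$ over $x_i, x_j$. Using independence this factorizes, giving
\[
\widehat{\thresh}(\vrho) \;=\; \sum_{a,b\in\{0,1\}} p_{a+b}\;\Ex{x}{\indicator{\ones_A(x)=a}\,\alpha(x)}\otimes \Ex{y}{\indicator{\ones_A(y)=b}\,\alpha^*(y)}.
\]
Because $\alpha$ is non-trivial, $\Ex{x}{\alpha(x)}=0$ by \cref{lem:mean_zero}, so
$\Ex{x}{\indicator{\ones_A(x)=1}\alpha(x)} = \widehat{\ones_A}(\alpha)$ and $\Ex{x}{\indicator{\ones_A(x)=0}\alpha(x)} = -\widehat{\ones_A}(\alpha)$, and similarly for $\alpha^*$. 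Every one of the four summands is therefore a scalar multiple of $\widehat{\ones_A}(\alpha)\otimes \widehat{\ones_A}(\alpha^*)$, with signs $(-1)^{(1-a)+(1-b)}$.

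Collecting terms, the scalar factor reduces to $p_2 - 2p_1 + p_0$, which telescopes as $(p_2-p_1) - (p_1-p_0) = \Pr{N=t-2} - \Pr{N=t-1}$. Since $N$ is binomial on $n-2$ trials with success probability $|A|/|G|$, we have $\Pr{N=k} = a_{k}^{n-2}/|G|^{n-2}$, yielding exactly the claimed coefficient $(a_{t-2}^{n-2}-a_{t-1}^{n-2})/|G|^{n-2}$. I do not foresee a real obstacle: the only thing to be a bit careful about is handling the identity factors in $\vrho$ cleanly and verifying the signs arising from $\Ex{x}{\indicator{x\notin A}\alpha(x)} = -\widehat{\ones_A}(\alpha)$, but these are bookkeeping rather than substantive difficulties.
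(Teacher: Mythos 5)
Your proposal is correct and follows essentially the same route as the paper's proof: both separate the coordinates $i,j$ from the rest, reduce the outer average to the three cases $r=|\{x_i,x_j\}\cap A|\in\{0,1,2\}$, use $\Ex{x}{\alpha(x)}=0$ to write the indicator-weighted averages as $\pm\widehat{\ones_A}(\alpha)$, and telescope $p_2-2p_1+p_0$ into the difference of two point probabilities. Your probabilistic phrasing ($p_s$ and the binomial law of $N$) is just a normalized restatement of the paper's counting quantities $a^{n-2}_{\geq t-r}$.
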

\begin{proof}
For $\vec x\in G^n$, we will denote the tuple obtained be removing $x_i$ and $x_j$ as $\vec x_{ij} \in G^{n-2}$. 
\begin{align*}
	|G^n|\cdot \widehat{\thresh} (\vec \rho) ~&=~  \sum_{\vec{x} \in G^n} \thresh(\vec x) \cdot \vec \rho (\vec x)   \\
	~&=~  \sum_{\vec{x} \in G^n} \thresh({\vec x}) \cdot  \alpha(x_i)\otimes\alpha^*(x_j)   \\
	~&=~ \sum_{x_i \in G, x_j \in G}  \alpha(x_i)\otimes\alpha^*(x_j)  \sum_{\vec{x_{ij}} \in G^{n-2}}   \thresh({\vec x}). 
\end{align*}
	Now observe that if $r = \abs{\{x_i, x_j\} \cap A}$ then, $\thresh{\vec x} ~=~ \threshold{t-r}{\vec x_{ij}}$. Moreover,   
	\[\sum_{\vec{x_{ij}} \in G^{n-2}}\threshold{t-r}(\vec x_{ij}) ~=~ \sum_{k = t-r}^{n-2} a^{n-2}_{k} ~=:~ a^{n-2}_{\geq t-r}.
	\]
	Hence, we have $3$ possible cases corresponding to each $r \in \{0,1,2\}$ and a total of four terms:
	\[
	|G^n|\cdot \widehat{\thresh} (\vec \rho) ~=~ \sum_{r=0}^{2}a^{n-2}_{\geq t-r}\sum_{|\{x_i, x_j \}\cap A| = r}  \parens{\alpha(x_i)\otimes\alpha^*(x_j)} .
	\]
%\begin{align*}
%	\parens{a^{n-2}_{\geq t-2}} \sum_{x_i, x_j \in A}  {\alpha(x_i)\otimes\alpha^*(x_j)}, \qquad \parens{a^{n-2}_{\geq t-2}} \sum_{x_i, x_j \in A}  {\alpha(x_i)\otimes\alpha^*(x_j)},\\ \parens{a^{n-2}_{\geq t-2}} \sum_{x_i, x_j \in A}  {\alpha(x_i)\otimes\alpha^*(x_j)},\qquad \parens{a^{n-2}_{\geq t-2}} \sum_{x_i, x_j \in A}  \parens{\alpha(x_i)\otimes\alpha^*(x_j)}
%\end{align*}

	 We compute each of these now. Recall that $\sum_{x \in A} \alpha(x) = |G|\cdot \widehat{\ones_A}(\alpha)= -\sum_{x \in A^c} \alpha(x)$. Using this readily gives us,
%	 	\begin{align*}
%	|G^n|\cdot \widehat{\thresh} (\vec \rho) ~&=~ \sum_{r=0}^{2}a^{n-2}_{\geq t-2}\sum_{|\{x_i, x_j \}\cap A| = r}  \parens{\alpha(x_i)\otimes\alpha^*(x_j)} 
%	\end{align*}
%		\begin{align*}
%	\sum_{x_i, x_j \in A}  \alpha(x_i)\otimes\alpha^*(x_j)  \sum_{\vec{x_{ij}} \in G^{n-2}}   \thresh{\vec x} ~&=~   \alpha(A)\otimes\alpha^*(A) \sum_{\vec{x_{ij}} \in G^{n-2}}   \threshold{t-2} (\vec x_{ij}) \\
%	~&=~   \alpha(A)\otimes\alpha^*(A) \cdot a^{n-2}_{\geq t-2} \\
%	~&=~  |G|^2\cdot \widehat{\ones_A}(\alpha) \otimes \widehat{\ones_A}(\alpha^*)  \cdot a^{n-2}_{\geq t-2}  .
%	\end{align*}
%
%Additionally, $\alpha(A^c) = - \alpha(A) $. Using this and doing the same computation as above, we get that,  
\begin{align*}
|G^n|\cdot \widehat{\thresh} (\vec \rho)  	~=~  |G|^2\cdot \widehat{\ones_A}(\alpha) \otimes \widehat{\ones_A}(\alpha^*) \cdot \brackets{a^{n-2}_{\geq t-2} - 2\cdot a^{n-2}_{\geq t-1} + a^{n-2}_{\geq t} }  .
	\end{align*}
To finish the proof, we do the following computation, 
\begin{align*}a^{n-2}_{\geq t-2} - 2\cdot a^{n-2}_{\geq t-1} + a^{n-2}_{\geq t} ~&=~ \parens[\big]{a^{n-2}_{\geq t-2} -  a^{n-2}_{\geq t-1}} - \parens[\big]{a^{n-2}_{\geq t-1} - a^{n-2}_{\geq t}} \\
~&=~ \parens[\big]{a^{n-2}_{t-2}} - \parens[\big]{a^{n-2}_{t-1}} \qedhere 
\end{align*}	
\end{proof}

\begin{proposition}\label{prop:constant_comp}
If $|A| = \frac{|G|}{2}$ and $t = \frac{n+1-\sqrt{n}}{2}$, then
\[
C_{A,n,t} :=  \parens[\bigg]{\frac{{a_{t-1}^{n-2} - a_{t-2}^{n-2}}}{|G|^{n-2}}} ~\geq~ \Omega\parens[\Big]{\frac{1}{n-1}}.
\]	
\end{proposition}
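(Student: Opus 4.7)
The plan is to reduce the estimate to a standard fact about binomial coefficients near the center, by exploiting the hypothesis $|A|/|G| = 1/2$. Since $|A| = |A^c| = |G|/2$, the ratio $a_k^{n-2}/|G|^{n-2}$ simplifies considerably:
\[
\frac{a_k^{n-2}}{|G|^{n-2}} \;=\; \binom{n-2}{k} \cdot \frac{|A|^k\,|A^c|^{n-2-k}}{|G|^{n-2}} \;=\; \binom{n-2}{k} \cdot 2^{-(n-2)}.
\]
Setting $N = n-2$ and $m = t-1 = (N+1-\sqrt{N+2})/2$, the quantity $C_{A,n,t}$ becomes $2^{-N}\bigl[\binom{N}{m} - \binom{N}{m-1}\bigr]$, so it remains to show that this is $\Omega(1/N)$.

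Next, I would use the elementary identity
\[
\binom{N}{m} - \binom{N}{m-1} \;=\; \binom{N}{m}\left(1 - \frac{m}{N-m+1}\right),
\]
and compute the two factors separately. For the ratio, direct substitution gives $N-m+1 = (N+1+\sqrt{N+2})/2$, so
\[
1 - \frac{m}{N-m+1} \;=\; \frac{2\sqrt{N+2}}{N+1+\sqrt{N+2}} \;=\; \Omega\!\left(\frac{1}{\sqrt{N}}\right).
\]
For the binomial itself, the point $m$ lies at distance $\Theta(\sqrt{N})$ from the center $N/2$. By Stirling's approximation (or the standard local central limit theorem for the binomial distribution), $\binom{N}{m} = \Theta(2^N/\sqrt{N}) \cdot e^{-\Theta((m-N/2)^2/N)}$, and since $(m-N/2)^2/N = \Theta(1)$, the exponential factor is a positive constant. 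Therefore $\binom{N}{m} \geq c\cdot 2^N/\sqrt{N}$ for an absolute constant $c > 0$.

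Multiplying the two estimates yields $\binom{N}{m} - \binom{N}{m-1} = \Omega(2^N/N)$, and hence $C_{A,n,t} = \Omega(1/N) = \Omega(1/(n-1))$, as required. The only place where care is needed is justifying the constant-factor lower bound on $\binom{N}{m}$ at a point $\Theta(\sqrt{N})$ away from the center; this is the main technical step, but it is a completely standard Stirling computation and I do not anticipate any genuine obstacle. An equivalent way to phrase the argument, avoiding Stirling, is to note that if $X \sim \mathrm{Bin}(N,1/2)$, then $\Pr[X = m]$ equals $2^{-N}\binom{N}{m}$, and by the Berry--Esseen bound (or a direct Chernoff-type estimate for the anti-concentration of $X$ on a window of length $1$ at distance $\Theta(\sqrt{N})$ from the mean), this probability is $\Omega(1/\sqrt{N})$; combined with the $\Omega(1/\sqrt{N})$ factor from the ratio computation, the same $\Omega(1/N)$ bound emerges.
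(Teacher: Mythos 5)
Your proof is correct and follows essentially the same route as the paper's: both reduce to $2^{-(n-2)}\bigl[\binom{n-2}{t-1}-\binom{n-2}{t-2}\bigr]$ using $|A|=|A^c|=|G|/2$, rewrite the difference of consecutive binomial coefficients so that a factor of order $(n+1-2t)/n = \Theta(1/\sqrt{n})$ is pulled out, and invoke the standard estimate that a binomial coefficient at distance $O(\sqrt{N})$ from the center is $\Omega(2^N/\sqrt{N})$. The only difference is cosmetic (you factor through $\binom{N}{m}$ with the ratio $m/(N-m+1)$, while the paper factors through $\binom{n-1}{t-1}$), and like the paper you implicitly treat $t$ as an integer, which is harmless.
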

\begin{proof} The proof is a simple calculation that we carry out below.
\begin{align*}
 C_{A,n,t} ~&=~ \parens[\bigg]{\frac{{a_{t-1}^{n-2} - a_{t-2}^{n-2}}}{|G|^{n-2}}} 	\\
  ~&=~ \parens[\bigg]{\frac{{{n-2 \choose t-1} \abs{A}^{t-1}\abs{A^c}^{n-1-t} - {n-2 \choose t-2} \abs{A}^{t-2}\abs{A^c}^{n-t}}}{|G|^{n-2}}} \\
%    ~&=~ \frac{1}{2^{n-2}} \parens[\bigg]{ {n-2 \choose t}  - {n-2 \choose t-1} } \\
    ~&=~ \frac{1}{2^{n-2}} \parens[\bigg]{ {n-2 \choose t-1}  - {n-2 \choose t-2} } \\
    ~&=~ \frac{1}{2^{n-2}} \frac{1}{n-1} \cdot {n-1 \choose t-1}  \cdot {{{\frac{(n+1-2t)}{2} }}} \\
    ~&\geq~ \frac{1}{2^{n-2}} \frac{1}{n-1} \cdot \frac{\Omega(2^{n})}{\sqrt{n}}  \cdot {\sqrt{n}} ~\geq~ \Omega\parens[\Big]{\frac{1}{n-1}}.  
    \end{align*}
The last inequality follows from our choice of $t = \frac{n+1-\sqrt{n}}{2}$.  for which, ${n-1 \choose t-1} = \Omega(\frac{2^n}{\sqrt{n}})$. 
\end{proof}

\begin{claim}[Lower Bound on fooling level-2 component] \label{claim:level2_lower}Let $X$ be a pseudo Cayley graph such that all non-trivial eigenvalues are equal to $\lambda <1/2$. Let $A$ be any subset of $G$. Then for the level-2 component of the threshold function, the following holds:
\[ \abs[\big]{\cE_X\parens[\big]{(\thresh)_2}} ~\geq~ (n-2)\cdot \lambda \cdot  \sfC_{A,n,t} \cdot \mu_A\cdot \mu_{A^c} \;\mper\] 
\end{claim}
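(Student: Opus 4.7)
The plan is to compute $\cE_X((\thresh)_2)$ explicitly via the Fourier decomposition and then isolate the dominant contribution. Starting from \cref{eq:level_i_fool},
\[
\cE_X((\thresh)_2) \;=\; \sum_{\vrho\,:\,|\vrho|=2} d_\vrho\,\ip{\widehat{\thresh}(\vrho)}{\cE_X(\vrho)}_\HS,
\]
I would invoke \cref{claim:level2_tensor} to observe that only irreps $\vrho$ with non-trivial entries of the form $\rho_i = \alpha,\ \rho_j = \alpha^*$ at positions $i<j$ can contribute, since $\cE_X(\vrho)$ vanishes on every other level-$2$ irrep on a pseudo Cayley graph. Substituting the Fourier coefficient computed in \cref{claim:level2fourier}, and using $\cE_X(\vrho) = \lambda^{j-i}\sfM_\alpha$ from \cref{claim:level2_tensor}, the expression collapses to
\[
\cE_X((\thresh)_2) \;=\; -\sfC_{A,n,t}\;\Big(\sum_{i<j}\lambda^{j-i}\Big)\sum_{\alpha \neq \triv} d_\alpha^2\, \ip{\widehat{\ones_A}(\alpha)\otimes\widehat{\ones_A}(\alpha^*)}{\sfM_\alpha}_\HS.
\]

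The key computation is then the Hilbert--Schmidt inner product against $\sfM_\alpha$. Since $\sfM_\alpha = \Ex{g}{\alpha(g)\otimes \alpha^*(g)}$ is the projector onto the one-dimensional trivial sub-representation of $\alpha \otimes \alpha^*$, Schur orthogonality yields the explicit entries $(\sfM_\alpha)_{(ik),(jl)} = \tfrac{1}{d_\alpha}\delta_{il}\delta_{jk}$. A short index-level calculation, together with $(\widehat{\ones_A}(\alpha^*))_{kl} = \overline{(\widehat{\ones_A}(\alpha))_{lk}}$, then gives
\[
\ip{\widehat{\ones_A}(\alpha)\otimes\widehat{\ones_A}(\alpha^*)}{\sfM_\alpha}_\HS \;=\; \frac{1}{d_\alpha}\,\norm{\widehat{\ones_A}(\alpha)}_\HS^2.
\]
Summing over non-trivial $\alpha$ and applying Plancherel's identity to $\ones_A$ produces
\[
\sum_{\alpha \neq \triv} d_\alpha\,\norm{\widehat{\ones_A}(\alpha)}_\HS^2 \;=\; \norm{\ones_A}_2^2 - \mu_A^2 \;=\; \mu_A - \mu_A^2 \;=\; \mu_A\,\mu_{A^c}.
\]

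Finally, since the non-trivial eigenvalues of $X$ are all the positive value $\lambda$, the pairwise sum satisfies
\[
\sum_{i<j}\lambda^{j-i} \;=\; \sum_{d=1}^{n-1}(n-d)\lambda^d \;\geq\; (n-1)\lambda \;\geq\; (n-2)\lambda,
\]
as every higher-order term is non-negative. Combining the three pieces yields
\[
\abs[\big]{\cE_X((\thresh)_2)} \;\geq\; (n-2)\,\lambda\,\sfC_{A,n,t}\,\mu_A\,\mu_{A^c},
\]
as claimed. The main technical step is the Schur-orthogonality-based evaluation of $\ip{\widehat{\ones_A}(\alpha)\otimes\widehat{\ones_A}(\alpha^*)}{\sfM_\alpha}_\HS$; after that, the reduction to $\mu_A\mu_{A^c}$ via Plancherel and the lower bound on the geometric-type pairwise sum are routine.
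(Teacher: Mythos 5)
Your proof follows the same skeleton as the paper's: expand $\cE_X((\thresh)_2)$ over level-$2$ irreps, use \cref{claim:level2_tensor} to reduce to irreps of the form $\alpha_{ij}=(\alpha,\alpha^*)$ with value $\lambda^{j-i}\sfM_\alpha$, substitute \cref{claim:level2fourier}, and factor the result as (constant) $\times$ (pairwise $\lambda$-sum) $\times$ (a group-theoretic quantity). The one place you genuinely diverge is the evaluation of $\sum_{\alpha\neq\triv} d_{\alpha}^2\,\ip{\widehat{\ones_A}(\alpha)\otimes\widehat{\ones_A}(\alpha^*)}{\sfM_\alpha}_\HS$: the paper recognizes this as $P_G(\ones_{A\times A})(1,1)-\mu(\ones_{A\times A})$ via \cref{corr:level2_proj}, whereas you compute the entries of the projector $\sfM_\alpha$ by Schur orthogonality and then invoke Plancherel for $\ones_A$. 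Both routes correctly yield $\mu_A\mu_{A^c}$ (your identity $\ip{\widehat{\ones_A}(\alpha)\otimes\widehat{\ones_A}(\alpha^*)}{\sfM_\alpha}_\HS=\tfrac{1}{d_\alpha}\|\widehat{\ones_A}(\alpha)\|_\HS^2$ checks out), and your direct computation is arguably more self-contained. You also track the sign $\widehat{\thresh}(\vrho)=-\sfC_{A,n,t}\cdot\widehat{\ones_A}(\alpha)\otimes\widehat{\ones_A}(\alpha^*)$ more carefully than the paper does.

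There is, however, one real gap: your final step assumes $\lambda\ge 0$. You write that the non-trivial eigenvalues are ``the positive value $\lambda$'' and lower-bound $\sum_{d=1}^{n-1}(n-d)\lambda^d\ge(n-1)\lambda$ because ``every higher-order term is non-negative.'' The hypothesis of the claim is only $\lambda<1/2$, and in the sole application of this claim (the lower-bound theorem that follows), $X$ is the complete graph on $G^r$ without self-loops, whose non-trivial eigenvalues are \emph{negative}. For $\lambda<0$ the sum $\sum_{d=1}^{n-1}(n-d)\lambda^d$ is alternating, so your monotonicity argument fails; one must instead use an alternating-series (or crude tail) estimate such as $\bigl|\sum_{d=1}^{n-1}(n-d)\lambda^d\bigr|\ \ge\ (n-1)|\lambda|-(n-2)\lambda^2\ \ge\ \tfrac{n}{2}\,|\lambda|$ for $|\lambda|<1/2$, which still gives the required $\Omega(n|\lambda|)$ and hence the claim up to the constant. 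This is an easy repair, but as written your argument does not cover the case the paper actually needs.
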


%\begin{claim} If $A$ is a union of conjugacy classes, and $X$ is a quasi-Abelian Cayley graph, then the following holds:
%\[ \abs[\big]{\cE_X\parens[\big]{\ones^A_{>t}(2)}} ~\geq~ (n-1)\cdot \lambda_\minn(X) \cdot  \sfC_{A,n,t} \cdot \mu_A\cdot \mu_{A^c} \] 
%\end{claim}

\begin{proof}

We will use the notation $\alpha_{ij}$ to mean the irrep, $\alpha_{ij} = \rho_1\otimes \cdots\otimes\rho_n$ that is $\rho_i = \alpha$ and $\rho_j = \alpha^*$, and the rest are trivial. Since $X$ is a pseudo Cayley graph, we can apply, \cref{claim:level2_tensor}, Thus, for any $\vrho$ such that $|\vrho| = 2$ (\ie a level-2 irrep), we have $\cE_X(\vrho) = \sfM$ if $\vrho = \alpha_{ij}$ for any irrep $\alpha\in \Irrep(G)$. Else, it is $0$. Therefore, 
\begin{align*}
\cE_X\parens[\big]{(\thresh)_2}~&=~ \sum_{\vrho\in \mathrm{Irrep}(G^n),  |\vrho|= 2} d_\vrho\cdot \ip{\widehat{\thresh}(\vrho)}{ \cE_X(\vrho)}_{\HS} \\
~&=~\sum_{1\leq i<j \leq n} \sum_{\triv \neq \alpha\in \hat{G}} 
d_\vrho\cdot \ip{\widehat{\thresh}(\alpha_{ij})}{  \cE_X(\alpha_{ij})}_{\HS} && [X\text{ is quasi-Abelian}]
\\
~&=~\sum_{1\leq i<j \leq n} 
	 \sfC_{A,n,t}\cdot \lambda^{j-i}\cdot \sum_{\triv \neq \alpha\in \hat{G}}  d_{\alpha_{ij}}\ip{ 	\widehat{\ones_A}(\alpha)\otimes\widehat{\ones_A}(\alpha^*)}{ \sfM}_{\HS} && [\text{Using \cref{claim:level2fourier}}]
\end{align*}

For a fixed $i,j$, we can interpret the inner coefficent as the Fourier coefficient of $\ones_{A\times A} := \indicator{x_i, x_j \in A} : G\times G\to \C$. Thus, using ~\cref{corr:level2_proj} and \cref{eq:proj_ones}	, we get:
\begin{align*}
\sum_{\triv \neq \alpha\in \hat{G}}  d_{\alpha_{ij}}\ip{ 	\widehat{\ones_A}(\alpha)\otimes\widehat{\ones_A}(\alpha^*)}{ \sfM}_{\HS} ~&=~ P_G(\ones_{A\times A})(1,1) - \mu(\ones_{A\times A})\\
~&=~ \Ex{g}{(g,g) \in A\times A} - \mu(A)^2\\
~&=~ \mu(A)- \mu(A)^2 = \mu_{A}\cdot\mu_{A^c} .
\end{align*}

Plugging this back in out expression we get,
\begin{align*}
\cE_X\parens[\big]{(\thresh)_2}
~&=~\sum_{1\leq i<j \leq n} 
	 \sfC_{A,n,t}\cdot \lambda^{j-i}\cdot \mu_{A}\cdot\mu_{A^c} \\
~&=~	 \sfC_{A,n,t}\cdot  \mu_{A}\cdot\mu_{A^c} \cdot \lambda \cdot\parens[\Big]{n- \frac{1}{1-\lambda}}\cdot \frac{1-\lambda^n}{1-\lambda} \\
\abs[\big]{\cE_X\parens[\big]{(\thresh)_2}}~&\geq~
	 \sfC_{A,n,t} \cdot \mu_{A}\cdot\mu_{A^c}\cdot (n-2)\cdot  \abs{\lambda} \qquad \text{ if } \lambda < 1/2. 
\end{align*}
\end{proof}

%\tnote{Need to phrase it properly}

\begin{theorem}
Let $G$ be any finite group, $A \subseteq G$ such that $\frac{|A|}{|G|} = \frac{1}{2}$, and $X = \Cay(G^r, G^r\setminus\{1\})$ be the complete graph on $G^r$ without self-loops for some $r\geq 4$. Then for every $n$ large enough,
\[
\abs[\Big]{\cE_X\parens[\Big]{\threshold{\frac{n+1-\sqrt{n}}{2}}}} ~\geq~  \Omega \parens[\big]{\lambda(X)}.
\] 	
\end{theorem}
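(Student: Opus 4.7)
The plan is to decompose $\cE_X(\thresh) = \sum_{k\geq 2}\cE_X((\thresh)_k)$ via \cref{claim:fooling}, show that the degree-$2$ level alone already contributes $\Omega(|\lambda|)$ in magnitude, and then argue that all higher levels sum to $o(|\lambda|)$. The assumption $r \geq 4$ will earn its keep precisely by making the higher-level error negligible compared to the level-$2$ lower bound.

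First I would observe that $X = \Cay(G^r, G^r\setminus\{1\})$ fits into \cref{example:pseudo2}: any surjective homomorphism $\phi\colon G^r \to G$ (e.g.\ projection onto the first coordinate) gives an unbiased labeling making $X$ a pseudo Cayley graph with respect to $G$. Being the complete graph on $|G|^r$ vertices without self-loops, every non-trivial eigenvalue of $A_X$ equals $-1/(|G|^r-1)$, so in particular $|\lambda(X)| = \Theta(|G|^{-r})$ and all non-trivial eigenvalues coincide---exactly the hypothesis needed for \cref{claim:level2_lower}.

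Next I would invoke \cref{claim:level2_lower} for the specified $A$ and $t$. Combined with the lower bound $\sfC_{A,n,t} = \Omega(1/(n-1))$ from \cref{prop:constant_comp} and the identity $\mu_A\mu_{A^c} = 1/4$, this yields
\[
\abs{\cE_X((\thresh)_2)} \;\geq\; (n-2)\cdot |\lambda| \cdot \Omega\!\parens[\big]{1/(n-1)}\cdot \tfrac{1}{4} \;=\; \Omega(|\lambda|),
\]
with the $n$-factors cancelling so the bound is uniform in $n$ (once $n$ is large enough that \cref{prop:constant_comp} applies).

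The technical heart is the control of the higher levels, and this is where $r \geq 4$ enters. Applying \cref{theo:main_fool} with $\tau = 16e|\lambda||G| = O(|G|^{1-r})$ (which is $<1$ for $r\geq 4$) and $\|\thresh\|_2 \leq 1$ yields
\[
\sum_{k\geq 3}\abs{\cE_X((\thresh)_k)} \;\leq\; \sum_{k\geq 3}\tau^{k/2} \;=\; O(\tau^{3/2}) \;=\; O\parens[\big]{|G|^{3/2}|\lambda|^{3/2}}.
\]
Since $|\lambda|^{1/2} = O(|G|^{-r/2})$, this error is $O(|G|^{(3-r)/2}\cdot |\lambda|)$, which is $o(|\lambda|)$ exactly when $r \geq 4$. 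A triangle inequality on the Fourier decomposition then finishes the argument:
\[
\abs{\cE_X(\thresh)} \;\geq\; \abs{\cE_X((\thresh)_2)} - \sum_{k\geq 3}\abs{\cE_X((\thresh)_k)} \;=\; \Omega(|\lambda|) - o(|\lambda|) \;=\; \Omega(|\lambda(X)|).
\]

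The main obstacle is ruling out a potential cancellation of the level-$2$ contribution by the higher-level Fourier mass of $\thresh$. The point is that \cref{theo:main_fool} provides a \emph{geometric} decay in $\tau^{1/2}$, and because $X$ is the complete graph on $G^r$ the effective parameter $\tau$ is polynomially small in $|G|$; the condition $r \geq 4$ then buys a factor $|G|^{-1/2}$ of slack between the dominant level-$2$ term and the aggregate higher-level error, which is what pushes the proof through.
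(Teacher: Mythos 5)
Your proposal is correct and follows essentially the same route as the paper: decompose via \cref{claim:fooling}, lower bound the level-$2$ term by combining \cref{claim:level2_lower} with \cref{prop:constant_comp} (using that the complete graph on $G^r$ is pseudo Cayley with all non-trivial eigenvalues equal), and absorb levels $\geq 3$ via \cref{theo:main_fool}, with $r \geq 4$ serving exactly to make $|G|^{3/2}\lambda^{1/2} \leq |G|^{(3-r)/2}$ small. The only (shared) caveat is that the geometric summation of $\tau^{k/2}$ requires $\tau<1$, which for very small $|G|$ needs the constants tracked slightly more carefully, but this is an artifact already present in the paper's own argument.
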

\begin{proof}
Using \cref{claim:fooling} we can separate the calculation into fooling the level-$2$ function and those beyond it, and thus for any function we have:  
\begin{align*}
	\cE_X\parens[]{f}  ~&=~ \sum_{i=2}^n \cE_X(f), \\
\abs[\big]{\cE_X(f)}  ~&\geq~ \abs[\big]{ \cE_X(f_2) } - \abs[\Big]{\sum_{k=3}^n \cE_X(f_k) } \;.
\end{align*}

We now let $f$ be the threshold function, $f = \threshold{\frac{n+1-\sqrt{n}}{2}}$. The graph $X$ has all non-trivial eigenvalues to be equal to $\frac{-1}{|G|^r} < 1/2$. We can then apply \cref{claim:level2_lower}, which when combined with \cref{prop:constant_comp}, we get $\abs[\big]{ \cE_X(f_2) } \geq \Omega(\lambda) $. To bound the remaining part, we use our upper bound~\cref{theo:main_fool} and obtain that, 
\[
\abs[\Big]{\sum_{i=3}^n \cE_X(f_i) } ~\leq~ 2\parens{16e|G|\lambda}^{\frac{3}{2}} \cdot \norm{f}_2~\leq~  o\parens[\Big]{\lambda^{\frac{3 (r-1)}{2 r}}} = o(\lambda) .\qedhere
\]
\end{proof}

%The only non-trivial claim to make is to observe that ~\cref{claim:level2_tensor} continues to hold on for quasi-Abelian Cayley graphs on $G^r$. This is because  
%
%One example of such a graph is a complete graph (without self-loops) on the group $G^3$ and label its vertices with the first coordinate. Note that this labeling is unbiased, and the graph is a quasi-Abelian Cayley graph as $S = G^3 \setminus \{1\}$, which is a union of conjugacy classes. 

\bibliographystyle{alphaurl}
\bibliography{macros,references}

\newcommand{\etalchar}[1]{$^{#1}$}
\begin{thebibliography}{JMRW22}

\bibitem[AKS87]{AKS87}
M.~Ajtai, J.~Komlos, and E.~Szemeredi.
\newblock Deterministic simulation in {LOGSPACE}.
\newblock In {\em Proceedings of the 19th ACM Symposium on Theory of Computing}, 1987.
\newblock \href {https://doi.org/10.1145/28395.28410} {\path{doi:10.1145/28395.28410}}.

\bibitem[Bar89]{Barrington89}
David A.~Mix Barrington.
\newblock {Bounded-Width Polynomial-Size Branching Programs Recognize Exactly Those Languages in NC{\({^1}\)}}.
\newblock {\em J. Comput. Syst. Sci.}, 38(1), 1989.
\newblock \href {https://doi.org/10.1016/0022-0000(89)90037-8} {\path{doi:10.1016/0022-0000(89)90037-8}}.

\bibitem[CMP{\etalchar{+}}22]{CMPPT22}
Gil Cohen, Dor Minzer, Shir Peleg, Aaron Potechin, and Amnon Ta-Shma.
\newblock {Expander Random Walks: The General Case and Limitations}.
\newblock In {\em Proceedings of the 49th International Colloquium on Automata, Languages and Programming}, 2022.
\newblock \href {https://doi.org/10.4230/LIPIcs.ICALP.2022.43} {\path{doi:10.4230/LIPIcs.ICALP.2022.43}}.

\bibitem[CPTS21]{CPT21}
Gil Cohen, Noam Peri, and Amnon Ta-Shma.
\newblock {Expander Random Walks: A Fourier-Analytic Approach}.
\newblock In {\em Proceedings of the 53nd ACM Symposium on Theory of Computing}, 2021.
\newblock \href {https://doi.org/10.1145/3406325.3451049} {\path{doi:10.1145/3406325.3451049}}.

\bibitem[De11]{D11}
Anindya De.
\newblock {Pseudorandomness for Permutation and Regular Branching Programs}.
\newblock In {\em Proceedings of the 26th IEEE Conference on Computational Complexity}, 2011.
\newblock \href {https://doi.org/10.1109/CCC.2011.23} {\path{doi:10.1109/CCC.2011.23}}.

\bibitem[Gil52]{G52}
E.N. Gilbert.
\newblock {A Comparison of Signalling Alphabets}.
\newblock {\em Bell System Technical Journal}, 31:504--522, 1952.
\newblock \href {https://doi.org/10.1002/j.1538-7305.1952.tb01393.x} {\path{doi:10.1002/j.1538-7305.1952.tb01393.x}}.

\bibitem[Gil93]{G93}
D.~Gillman.
\newblock {A Chernoff Bound for Random Walks on Expander Graphs}.
\newblock In {\em focs93}, pages 680--691, 1993.
\newblock \href {https://doi.org/10.1109/SFCS.1993.366819} {\path{doi:10.1109/SFCS.1993.366819}}.

\bibitem[GLSS18]{GLSS18}
Ankit Garg, Yin~Tat Lee, Zhao Song, and Nikhil Srivastava.
\newblock A matrix expander {Chernoff} bound.
\newblock In {\em Proceedings of the 50th ACM Symposium on Theory of Computing}, 2018.
\newblock \href {https://doi.org/10.1145/3188745.3188890} {\path{doi:10.1145/3188745.3188890}}.

\bibitem[Gol23]{G23}
Louis Golowich.
\newblock {A New Berry-Esseen Theorem for Expander Walks}.
\newblock In {\em Proceedings of the 55nd ACM Symposium on Theory of Computing}, 2023.
\newblock \href {https://doi.org/10.1145/3564246.3585141} {\path{doi:10.1145/3564246.3585141}}.

\bibitem[Gow08]{GowersQR08}
W.~T. Gowers.
\newblock Quasirandom groups.
\newblock {\em Comb. Probab. Comput.}, 2008.
\newblock \href {https://doi.org/10.1017/S0963548307008826} {\path{doi:10.1017/S0963548307008826}}.

\bibitem[GV22]{GV22}
Louis Golowich and Salil Vadhan.
\newblock {Pseudorandomness of Expander Random Walks for Symmetric Functions and Permutation Branching Programs}.
\newblock In {\em Proceedings of the 37th IEEE Conference on Computational Complexity}, 2022.
\newblock \href {https://doi.org/10.4230/LIPIcs.CCC.2022.27} {\path{doi:10.4230/LIPIcs.CCC.2022.27}}.

\bibitem[HH24]{HH24}
Pooya Hatami and William Hoza.
\newblock {Paradigms for Unconditional Pseudorandom Generators}.
\newblock {\em Found. Trends Theor. Comput. Sci.}, 16(1-2), 2024.
\newblock \href {https://doi.org/10.1561/0400000109} {\path{doi:10.1561/0400000109}}.

\bibitem[HLW06]{HooryLW06}
Shlomo Hoory, Nathan Linial, and Avi Wigderson.
\newblock {Expander Graphs and Their Applications}.
\newblock {\em Bull. Amer. Math. Soc.}, 43(04):439--562, aug 2006.

\bibitem[INW94]{INW94}
Russell Impagliazzo, Noam Nisan, and Avi Wigderson.
\newblock Pseudorandomness for network algorithms.
\newblock In {\em Proceedings of the 26th ACM Symposium on Theory of Computing}, 1994.
\newblock \href {https://doi.org/10.1145/195058.195190} {\path{doi:10.1145/195058.195190}}.

\bibitem[JMRW22]{JMRW22}
Fernando~Granha Jeronimo, Tushant Mittal, Sourya Roy, and Avi Wigderson.
\newblock {Almost Ramanujan Expanders from Arbitrary Expanders via Operator Amplification}.
\newblock In {\em Proceedings of the 63st IEEE Symposium on Foundations of Computer Science}, 2022.
\newblock \href {https://doi.org/10.1109/FOCS54457.2022.00043} {\path{doi:10.1109/FOCS54457.2022.00043}}.

\bibitem[JMST25]{JMST25}
Fernando~Granha Jeronimo, Tushant Mittal, Shashank Srivastava, and Madhur Tulsiani.
\newblock {Explicit Codes approaching Generalized Singleton Bound using Expanders}.
\newblock 2025.
\newblock \href {https://doi.org/10.48550/arXiv.2502.07308} {\path{doi:10.48550/arXiv.2502.07308}}.

\bibitem[Lub94]{RL10}
Alexander Lubotzky.
\newblock {\em Discrete Groups, Expanding Graphs and Invariant Measures}, volume 125 of {\em Progress in mathematics}.
\newblock Birkh{\"{a}}user, 1994.

\bibitem[MZ09]{MZ09}
Raghu Meka and David Zuckerman.
\newblock {Small-Bias Spaces for Group Products}.
\newblock In {\em {APPROX}-{RANDOM} 2009 Proceedings}, 2009.
\newblock \href {https://doi.org/10.1007/978-3-642-03685-9_49} {\path{doi:10.1007/978-3-642-03685-9_49}}.

\bibitem[RKHS02]{RKHS02}
Dan Rockmore, Peter Kostelec, Wim Hordijk, and Peter~F. Stadler.
\newblock Fast {{Fourier Transform}} for {{Fitness Landscapes}}.
\newblock {\em Applied and Computational Harmonic Analysis}, 12(1):57--76, jan 2002.
\newblock \href {https://doi.org/10.1006/acha.2001.0346} {\path{doi:10.1006/acha.2001.0346}}.

\bibitem[Rot22]{Rot24}
Ron~M. Roth.
\newblock {Higher-Order MDS Codes}.
\newblock {\em IEEE Transactions on Information Theory}, 68(12):7798--7816, 2022.
\newblock \href {https://doi.org/10.1109/TIT.2022.3194521} {\path{doi:10.1109/TIT.2022.3194521}}.

\bibitem[RR23]{RR23}
Silas Richelson and Sourya Roy.
\newblock {Gilbert and Varshamov Meet Johnson: List-Decoding Explicit Nearly-Optimal Binary Codes}.
\newblock In {\em Proceedings of the 64st IEEE Symposium on Foundations of Computer Science}, 2023.
\newblock \href {https://doi.org/10.1109/FOCS57990.2023.00021} {\path{doi:10.1109/FOCS57990.2023.00021}}.

\bibitem[RR24]{RR2024}
Silas Richelson and Sourya Roy.
\newblock {Analyzing Ta-Shma's Code via the Expander Mixing Lemma}.
\newblock {\em {IEEE} Trans. Inf. Theory}, 70(2):1040--1049, 2024.
\newblock \href {https://doi.org/10.1109/TIT.2023.3304614} {\path{doi:10.1109/TIT.2023.3304614}}.

\bibitem[Ser77]{serre96}
Jean{-}Pierre Serre.
\newblock {\em Linear Representations of Finite Groups}, volume~42 of {\em Graduate Texts in Mathematics}.
\newblock Springer New York, 1977.

\bibitem[ST20]{ST20}
Chong Shangguan and Itzhak Tamo.
\newblock {Combinatorial list-decoding of Reed-Solomon codes beyond the Johnson radius}.
\newblock In {\em Proceedings of the 52nd ACM Symposium on Theory of Computing}, 2020.
\newblock \href {https://doi.org/10.1145/3357713.3384295} {\path{doi:10.1145/3357713.3384295}}.

\bibitem[TS17]{Ta-Shma17}
Amnon Ta-Shma.
\newblock {Explicit, Almost Optimal, Epsilon-balanced Codes}.
\newblock In {\em Proceedings of the 49th ACM Symposium on Theory of Computing}, pages 238--251. ACM, 2017.
\newblock \href {https://doi.org/10.1145/3055399.3055408} {\path{doi:10.1145/3055399.3055408}}.

\bibitem[TSZ24]{TZ24}
Amnon Ta-Shma and Ron Zadicario.
\newblock {The Expander Hitting Property When the Sets Are Arbitrarily Unbalanced}.
\newblock In {\em Approximation, Randomization, and Combinatorial Optimization. Algorithms and Techniques (APPROX/RANDOM 2024)}, 2024.
\newblock \href {https://doi.org/10.4230/LIPIcs.APPROX/RANDOM.2024.31} {\path{doi:10.4230/LIPIcs.APPROX/RANDOM.2024.31}}.

\bibitem[Vad12]{Vadhan12}
Salil~P. Vadhan.
\newblock {\em Pseudorandomness}.
\newblock Now Publishers Inc., 2012.
\newblock \href {https://doi.org/10.1561/0400000010} {\path{doi:10.1561/0400000010}}.

\bibitem[Var57]{V57}
R.R. Varshamov.
\newblock Estimate of the number of signals in error correcting codes.
\newblock {\em Doklady Akademii Nauk SSSR}, 117:739--741, 1957.

\end{thebibliography}

\end{document}